\documentclass[11pt]{article}
\usepackage[margin=1in]{geometry}

\usepackage[utf8]{inputenc} % allow utf-8 input
\usepackage{csquotes}
\usepackage{booktabs}       % professional-quality tables
\usepackage{amsfonts,yfonts}       % blackboard math symbols
\usepackage{nicefrac}       % compact symbols for 1/2, etc.
\usepackage{microtype}      % microtypography
\usepackage{parskip}        % don't indent paragraphs 
\usepackage[dvipsnames]{xcolor} % way more colors
\usepackage{amsmath, amsthm, mathtools, dsfont}
\usepackage{txfonts} % notni
\usepackage{color,graphicx}
\usepackage{url,hyperref}
\usepackage{enumerate}
\usepackage{algorithm,,float}
\usepackage[noend]{algpseudocode}
\usepackage[most]{tcolorbox}
\usepackage{longfbox}
\usepackage{todonotes}
\usepackage{tikz}
\usepackage{tcolorbox}
\usepackage{framed}
\usepackage{mdframed}
\usepackage{wrapfig}
\usepackage{framed}

\usepackage[T1]{fontenc}

\colorlet{shadecolor}{gray!15}

\definecolor{ShadowColor}{RGB}{30,150,190}

\makeatletter
\newcommand\Cshadowbox{\VerbBox\@Cshadowbox}
\def\@Cshadowbox#1{%
  \setbox\@fancybox\hbox{\fbox{#1}}%
  \leavevmode\vbox{%
    \offinterlineskip
    \dimen@=\shadowsize
    \advance\dimen@ .5\fboxrule
    \hbox{\copy\@fancybox\kern.5\fboxrule\lower\shadowsize\hbox{%
      \color{ShadowColor}\vrule \@height\ht\@fancybox \@depth\dp\@fancybox \@width\dimen@}}%
    \vskip\dimexpr-\dimen@+0.5\fboxrule\relax
    \moveright\shadowsize\vbox{%
      \color{ShadowColor}\hrule \@width\wd\@fancybox \@height\dimen@}}}
\makeatother

\algdef{SE}[SUBALG]{Indent}{EndIndent}{}{\algorithmicend\ }%
\algtext*{Indent}
\algtext*{EndIndent}

\newcommand{\ceil}[1]{{\left\lceil #1 \right\rceil}}
\newcommand{\floor}[1]{\left\lfloor\,#1\,\right\rfloor}

\newcommand{\mI}{{\mathcal{I}}}

\newcommand{\REAL}{\ensuremath{\mathbb{R}}}

\renewcommand{\Pr}[1]{\ensuremath{\mathbb{P}\left[#1\right]}}

\newcommand{\Ex}[1]{\ensuremath{\mathbb{E}\left[#1\right]}}

\newcommand{\ind}[1]{\ensuremath{\mathds{1}\left[#1\right]}}

\newcommand{\poly}{\textnormal{poly}}
\DeclareMathOperator{\argmax}{arg\,max}
\DeclareMathOperator{\argmin}{arg\,max}
\newcommand{\eps}{\epsilon}
\newcommand{\abs}[1]{{\left|#1 \right|}}

\newcommand{\cbr}[1]{\left\{\,#1\,\right\}}

\newlength{\diamondrulelength}
\newlength{\diamondrulethickness}
\newcommand{\matroid}{(\mathcal{N}, \mI)}
\newcommand{\construct}{{\textsc{LayerConstruct}}}

\newcommand{\adjust}{{\textsc{Adjust}}}
\newcommand{\expand}{{\textsc{Expand}}}

\newcommand{\chbucket}{{\textsc{ChooseBucket}}}

\newcommand{\ins}{{\textsc{Insert}}}
\newcommand{\globins}{Global\ins}
\newcommand{\del}{{\textsc{Delete}}}
\newcommand{\globdel}{Global\del}
\newcommand{\init}{{\textsc{Init}}}
\newcommand{\inspect}{\textsc{Inspect}}
\newcommand{\buck}{B}
\newcommand{\estimatesize}{{\textsc{EstimateSampleSize}}}
\newcommand{\err}{\textsc{Fail}}

\newtheorem{theorem}{Theorem}
\newtheorem{corollary}[theorem]{Corollary}
\newtheorem{lemma}[theorem]{Lemma}
\newtheorem{fact}[theorem]{Fact}

\newtheorem{definition}[theorem]{Definition}

\newcommand{\relcont}[2]{f(#1 | #2)}
\newcommand{\bX}{\ensuremath{\mathbf{X}}}

\newcommand{\bQ}{\ensuremath{\mathbf{Q}}}

\newcommand{\bS}{\ensuremath{\mathbf{S}}}
\newcommand{\bs}{\ensuremath{\mathbf{s}}}
\newcommand{\sampvals}{\mathcal{S}}

\newcommand{\bP}{\ensuremath{\mathbf{P}}}

\newcommand{\bT}{\ensuremath{\mathbf{T}}}

\newcommand{\epsP}{\epsilon_{\textnormal{pr}}}
\newcommand{\epsDel}{\epsilon_{\textnormal{del}}}

\newcommand{\epsIns}{\epsilon_{\textnormal{ins}}}

\newcommand{\epsSamp}{\epsilon_{\textnormal{sam}}}
\newcommand{\epsBuck}{\epsilon}
\newcommand{\epsTau}{\epsilon}

\newcommand{\optInd}{I^*}

\newcommand{\constt}{{\frac{4}{\epsSamp^2}\ln(\frac{2nk^{10}}{\epsP})}}
\newcommand{\highprob}{1 - \frac{\epsP}{k^{10}}}
\newcommand{\ratio}{\frac{4k}{\eps}}
\newcommand{\invratio}{\frac{\eps}{4k}}
\newcommand{\suit}{\texttt{Suit}}
\newcommand{\trip}{(\indep, \uni, \delt)}
\newcommand{\acc}{X} % Used in estimate sample size
\newcommand{\add}{A}
\newcommand{\indep}{I}
\newcommand{\uni}{{I^{\cup}}}
\newcommand{\delt}{\Delta}
\newcommand{\elem}{E}
\newcommand{\celem}{\hat{\elem}}
\newcommand{\layer}{L}
\newcommand{\samp}{S}
\newcommand{\ssize}{s}

\newcommand{\setd}{D}
\newcommand{\hist}[1]{H_{< #1}}
\newcommand{\hists}[1]{H_{< #1}^{\text{s}}}
\newcommand{\bhist}[1]{\mathbf{H}_{< #1}}
\newcommand{\bhists}[1]{\mathbf{H}_{< #1}^{\text{s}}}
\newcommand{\spn}{\texttt{Span}}

\usepackage{environ}
\NewEnviron{killcontents}{Proof: Hidden}
\providetoggle{hideproofs}
\settoggle{hideproofs}{false}
\iftoggle{hideproofs}{

}

\title{From Linear to Polylogarithmic: \\ Dynamic Submodular Maximization over Matroids}

\begin{document}
% \author{}

\author{ 
Kiarash Banihashem\thanks{Computer Science Department, University of Maryland, College Park, MD, USA. {\tt kiarash@umd.edu}.}
\and
Samira Goudarzi \thanks{Computer Science Department, University of Maryland, College Park, MD, USA. {\tt samirag@umd.edu}. }
\and
MohammadTaghi Hajiaghayi\thanks{Computer Science Department, University of Maryland, College Park, MD, USA. {\tt hajiagha@cs.umd.edu}.}
\and
Peyman Jabbarzade\thanks{Computer Science Department, University of Maryland, College Park, MD, USA. {\tt peymanj@umd.edu}.}
\and
Morteza Monemizadeh\thanks{Department of Mathematics and Computer Science, TU Eindhoven, the Netherlands. {\tt m.monemizadeh@tue.nl}.}
}

\date{\vspace{-5ex}}

\maketitle

\begin{abstract}
We study the problem of submodular maximization under a matroid constraint. In this problem, there is a monotone submodular function $f: 2^{\mathcal{N}} \rightarrow \mathbb{R}^+$ and a matroid $\mathcal{M}(\mathcal{N}, \mathcal{I})$, accessible via value and independence oracles and 
the goal is to find the independent set $S \in \mathcal{I}$ maximizing the objective function, i.e., $f(S) = \max_{I \in \mathcal{I}} f(I)$. 
This problem has been extensively studied across a variety of settings, including the online setting~\cite{DBLP:journals/jacm/BabaioffIKK18} (JACM~2018),~\cite{DBLP:conf/stoc/KleinbergW12} (STOC~2012), the adaptive complexity model~\cite{DBLP:conf/stoc/BalkanskiRS19} (STOC~2019), and the streaming model~\cite{chakrabarti2015submodular} (Math. Program.~2015).

In the fully dynamic setting, the active ground set $N \subseteq \mathcal{N}$ evolves through element insertions and deletions. The goal is to maintain a solution $S \subseteq N$ with $S \in \mathcal{I}$ and $f(S) = \max_{I \subseteq N,, I \in \mathcal{I}} f(I)$ after each update. Two approximation algorithms have been proposed to solve this problem: Dütting et al.~\cite{DBLP:journals/talg/DuttingFLNZ25} developed a $(4+\epsilon)$-approximation algorithm with an expected amortized update time of $O\!\bigl(\frac{k^2}{\epsilon} \log k \log n \log^3 \frac{k}{\epsilon}\bigr)$, which grows quadratically with the matroid rank $k$, and Banihashem et al.~\cite{DBLP:conf/soda/BanihashemBGHJM24} achieved a similar approximation with a better update time $O\!\bigl(k \log k \log^3 \frac{k}{\epsilon}\bigr)$, which is linear in $k$.  
 However, these methods both become inefficient when the rank $k$ of the matroid is large (e.g., when $k = n^{o(1)}$). 

The central open question in this line of work is whether a fully dynamic algorithm with polylogarithmic update time exists. This question was posed in~\cite{DBLP:conf/nips/LattanziMNTZ20} (NeurIPS'20) and~\cite{DBLP:journals/talg/DuttingFLNZ25} (TALG'25), and remains a natural open problem following~\cite{DBLP:conf/stoc/ChenP22} (STOC'22) and~\cite{DBLP:conf/soda/BanihashemBGHJM24} (SODA'24). 

We resolve this question by presenting the first fully dynamic algorithm for submodular maximization over matroids with polylogarithmic update time. Our algorithm maintains a $(4+\epsilon)$-approximate solution with expected amortized update time $\mathrm{poly}(\log n, \epsilon^{-1})$, where $n = |\mathcal{N}|$. In particular, the update time is independent of the matroid rank $k$, thereby eliminating the dependence on $k$ and settling the open problem.
\end{abstract}

\section{Introduction}

Many Web applications repeatedly need to select a high-quality subset of items from a large and continuously changing collection. For example, a Web platform may need to select a set of articles, products, advertisements, or recommended items to display to users. The quality of a selected set is often naturally modeled by a submodular function: adding an item provides diminishing marginal benefit as increasingly similar or redundant items are already selected. Submodular objectives have been widely used to model coverage and information-gathering objectives~\cite{DBLP:journals/tist/KrauseG11}, as well as diversity in recommendation and search~\cite{pmlr-v51-tschiatschek16,WuEtAl2024}.

In practice, however, a platform cannot select an arbitrary subset of items. The selected items must satisfy combinatorial constraints reflecting limited display space, business rules, or diversity requirements. \emph{Matroid constraints} provide a particularly natural and expressive way to model such requirements. For example, a partition matroid can impose quotas on different categories or content providers, such as allowing at most $r_i$ items from category $i$. The \emph{rank} of the matroid then has a direct interpretation as the total number of available selection slots. More general matroids can capture overlapping eligibility and resource constraints while allowing the algorithm to exchange selected items in a structured way. Thus, submodular maximization under a matroid constraint provides a general framework for balancing the quality of a selection with diversity and resource limitations that arise naturally in Web applications.

The role of the matroid rank is particularly intuitive in recommendation and content-selection systems. Suppose a platform must display at most $k$ items. A uniform matroid of rank $k$ models the basic capacity constraint that at most $k$ items can be selected. A partition matroid can further distribute these $k$ slots across categories or providers, for example by requiring that no category contribute more than a specified number of items. The submodular objective can then favor selections that cover diverse topics, users, or interests. In this setting, the rank represents the scale of the selection that the platform can maintain, while the matroid structure specifies how those limited slots may be allocated. More sophisticated matroids can similarly represent multiple interacting resource or eligibility constraints.

A defining characteristic of many Web applications is that their underlying collections and interactions evolve continuously. For example, in recommender systems, new items are introduced and users continuously interact with existing items, changing the interaction graph and the information available for recommendation~\cite{TangEtAl2023,GuoEtAl2025}. Real-world recommendation systems must also adapt to changing item-popularity trends while scaling to millions of items~\cite{MaEtAl2020}. These changes can affect both the objective value of a selected set and its feasibility under application-specific constraints. Recomputing an approximately optimal solution from scratch after every insertion or deletion can be prohibitively expensive at Web scale. This motivates the fully dynamic version of submodular maximization, in which the active ground set changes over time and the algorithm must efficiently maintain a near-optimal feasible solution after each update. 

For instance, consider a recommendation platform that maintains a set of items to display while promoting diversity across categories or providers~\cite{WuEtAl2024}. The utility of a recommendation set can be modeled by a submodular function that captures user engagement, coverage, or diversity, while category or provider quotas can be represented by a matroid constraint. As new items become available and existing items are removed, the platform must continuously update its recommendations to adapt to the changing collection~\cite{GuoEtAl2025,TangEtAl2023}. Ideally, these updates should be processed in time that is polylogarithmic in the size of the system, rather than in time that grows with the number of available items or with the matroid rank.

Our work addresses this challenge by developing a fully dynamic algorithm for monotone submodular maximization under a matroid constraint with expected amortized update time $\mathrm{poly}(\log n,\epsilon^{-1})$, independent of the matroid rank. This result provides a general algorithmic framework for maintaining approximately optimal constrained selections in dynamic Web environments, where both the collection of available items and the feasible solution may change continuously.

Submodular functions naturally express the principle of \emph{diminishing returns}: 
the gain from adding an element decreases as the chosen set grows. 
Formally, a set function $f: 2^{\mathcal{N}} \rightarrow \mathbb{R}^+$ defined on a finite ground set $\mathcal{N}$ is 
\emph{submodular} if, for every $A \subseteq B \subseteq \mathcal{N}$ and element $e \in \mathcal{N} \setminus B$,  
$f(A \cup \{e\}) - f(A) \geq f(B \cup \{e\}) - f(B)$.  
If, in addition, $f(A) \le f(B)$ for all $A \subseteq B$, then $f$ is \emph{monotone}. 

Constraints often reflect structural or resource limitations—such as selecting a subset of features, 
sensors, or network links, which can be modeled as a matroid to capture independence and exchange properties. 
A \emph{matroid} $\mathcal{M} = (\mathcal{N}, \mathcal{I})$ is defined by a collection of independent sets 
$\mathcal{I} \subseteq 2^{\mathcal{N}}$ that satisfy non-emptiness, heredity, and the exchange property. 
The rank~$k$ of $\mathcal{M}$ is the size of its largest independent set. 
The classical offline problem of \emph{submodular maximization under a matroid constraint} seeks an independent set 
$S \in \mathcal{I}$ that maximizes $f(S)$, i.e., $\max_{S \in \mathcal{I}} f(S)$.

Submodular maximization has a broad range of applications. 
In machine learning, it appears in feature selection~\cite{DBLP:conf/icml/BateniCEFMR19,DBLP:conf/esa/ChenEFMY21}, 
data summarization~\cite{mirzasoleiman2016fast,DBLP:conf/kdd/BadanidiyuruMKK14}, 
and active learning~\cite{DBLP:conf/icml/WeiIB15,DBLP:conf/colt/GolovinK10}. 
In data mining and network science, it models 
influence maximization~\cite{DBLP:conf/kdd/KempeKT03,10.1145/1250790.1250811} and 
community detection~\cite{10.1145/3399661,10197512}, 
while in network design~\cite{DBLP:conf/icalp/ChekuriQ21a,DBLP:conf/soda/ChekuriQ19}, 
it captures objectives related to connectivity, reliability, and information flow.

The \emph{dynamic} variant of submodular maximization under a matroid constraint has received growing attention~\cite{DBLP:conf/soda/BanihashemBGHJM24,DBLP:journals/talg/DuttingFLNZ25,DBLP:journals/jea/HanauerHS22,DBLP:conf/sand/HanauerH022}. 
Here, the ground set $\mathcal{N}$ evolves over time through insertions and deletions, and 
the goal is to maintain an independent set $S \in \mathcal{I}$ that remains near-optimal for the current instance, 
while limiting the computational overhead per update. 
Formally, the algorithm interacts with $f$ through a value oracle and with $\mathcal{M}$ through an independence oracle, 
and must update its maintained solution as the input changes.

While the offline version of submodular maximization under a matroid constraint is well understood---for instance, the classic greedy algorithm attains a $(1 - 1/e)$-approximation~\cite{DBLP:journals/mp/NemhauserWF78}---these offline methods recompute the entire solution from scratch following each change, resulting in linear update time in $|\mathcal{N}|$. Such costs become prohibitive in large-scale settings, which has spurred interest in efficient \emph{dynamic submodular maximization algorithms} that offer near-optimal guarantees while maintaining polylogarithmic update times.

Designing efficient algorithms for maintaining near-optimal solutions to monotone submodular maximization under matroid constraints in a \emph{fully dynamic} setting has been a longstanding challenge (was posed in~\cite{DBLP:conf/nips/LattanziMNTZ20,DBLP:journals/talg/DuttingFLNZ25}, and remains a natural open problem following~\cite{DBLP:conf/stoc/ChenP22,DBLP:conf/soda/BanihashemBGHJM24}). 
The goal is to handle insertions and deletions of elements while simultaneously achieving two key objectives: 
\begin{itemize}
    \item maintain a constant-factor approximation (e.g., $(4+\epsilon)$) to the optimal solution, and
    \item ensure that each update requires only \emph{polylogarithmic} time in the size of the ground set, measured in oracle queries to $f$ and $\mathcal{M}$.
\end{itemize}

Prior work has made substantial progress toward this goal. 
D\"utting et al.~\cite{DBLP:journals/talg/DuttingFLNZ25} provided a $(4+\epsilon)$-approximation algorithm with expected amortized update time 
$O\bigl(\frac{k^2}{\epsilon} \log k \log n \log^3 \frac{k}{\epsilon}\bigr)$. 
In concurrent and independent work, Banihashem et al.~\cite{DBLP:conf/soda/BanihashemBGHJM24} improved this to a linear dependence on $k$, achieving 
$O\bigl(k \log k \log^3 \frac{k}{\epsilon}\bigr)$ expected amortized update time while preserving the same approximation guarantee. 
Despite these advances, both approaches become computationally expensive when the matroid rank is large, since the update time still grows polynomially with $k$.

In this work, we overcome this barrier by introducing the first fully dynamic algorithm for submodular maximization under a general matroid constraint whose update time is \emph{polylogarithmic} in $n$. 
Our algorithm maintains a $(4+\epsilon)$-approximate solution with expected amortized update time $\mathrm{poly}(\log n, \epsilon^{-1})$, independent of the matroid rank $k$. 
This breakthrough ensures that even rapidly evolving datasets can be handled efficiently, resolving the open problem posed by earlier studies. 
%and demonstrating that fully dynamic submodular maximization with matroid constraints can, in fact, scale gracefully with the size of the ground set. 

\paragraph{Overview of Our Algorithm.} 
Our dynamic algorithm for submodular maximization under matroid constraints builds on the dynamic framework developed for the simpler setting of cardinality constraints~\cite{DBLP:conf/icml/BanihashemBGHJM23}, which was later extended to submodular cover~\cite{banihashemdynamiccover} and submodular matching~\cite{DBLP:conf/icalp/BanihashemBGHJM25}.
Prior to our work, it remained unclear whether these ideas could yield an algorithm with polylogarithmic update time under matroid constraints, whose rich combinatorial structure is notoriously difficult to handle\footnote{
A well-known illustration of this difficulty is the classical \emph{secretary problem}.
Babaioff, Immorlica, and Kleinberg~\cite{DBLP:conf/soda/BabaioffIK07} introduced the matroid secretary problem as a generalization, in which the goal is to select a maximum-weight independent set of a matroid from elements arriving in random order.
Constant-competitive algorithms are known for several structured matroid families, including graphic, partition, and laminar matroids~\cite{DBLP:conf/icalp/KorulaP09,DBLP:conf/soda/ImW11,DBLP:journals/algorithmica/GharanV13}. Similarly, constant-factor approximations are known under matching constraints~\cite{ezra2020secretary}, suggesting that similar algorithms might also work for the matroid constraints and that comparable guarantees could extend to matroids.

However, achieving such guarantees for general matroids remains one of the central open problems in the area. Despite being conjectured as early as 2007, the existence of constant-competitive algorithms for general matroids is still unresolved after nearly two decades of research, and the best known results achieve an $O(\log\log k)$-competitive ratio, where $k$ denotes the matroid rank~\cite{DBLP:conf/focs/Lachish14,DBLP:conf/soda/FeldmanSZ15}.
This persistent gap highlights the inherent difficulty of extending techniques that succeed under uniform (cardinality) or matching constraints to the broader class of matroid constraints.
}.

%
% While this line of work was limited to a single cardinality bound, our approach extends to general matroid constraints through a layered structure combined with a monotone reference set for marginal gains, which allows us to maintain an approximate independent set throughout the dynamic process without recomputing from scratch after each update. 

To process elements, the algorithm organizes them into $T$ hierarchical layers, with $T$ polylogarithmic in expectation.
At each layer $\ell$, our data structure maintains an independent set $\indep_\ell \in \mathcal{I}$, together with an auxiliary set $\uni_\ell$ as the union of all elements present in the independent set $\indep$ of its preceding levels. A key design choice is that marginal gains are always evaluated with respect to $\uni$ rather than $\indep$. Since $\uni$ only grows over time, marginal values are non-increasing, and elements that were discarded at some layer because of their small marginal contribution do not need to be reconsidered. This prevents costly re-evaluations.

 Each layer $\ell$ contains a set of elements $E_\ell$ whose marginal gains with respect to $\uni$ exceed a threshold and are feasible for addition to $\indep$ under the matroid constraint. These elements are further partitioned into logarithmically many buckets according to their marginal values, and in each layer, the largest bucket is selected for processing. This bucket has two key properties: its elements have nearly equal marginal gains, and processing a constant fraction of them both ensures a substantial contribution to the solution and provides robustness against small deletions. These properties are crucial for controlling solution quality and for bounding the approximation guarantee.

Within each chosen bucket, we employ a sampling mechanism to estimate how many elements should be drawn so that a sufficient number can be successfully added to $\indep$ without violating independence or, if necessary, by displacing existing elements that are less valuable. Once the sample size is determined, a set $S_\ell$ of that size is drawn uniformly from the bucket. With high probability, most of these elements are successfully added to $\indep$. In the case of conflicts, elements currently in $\indep$ are removed to restore independence while remaining in $\uni$, and the total marginal gain of the inserted elements is always at least twice that of the evicted ones. This geometric dominance guarantees that the overall loss is bounded and that the approximation ratio is preserved over time.

Insertions and deletions are handled using buffered updates. For insertions, each layer maintains a buffer $A_\ell$ of new elements, and the layer is rebuilt only when the buffer reaches an $\epsilon$-fraction of the size of $E_\ell$. For deletions, a single shared buffer $D$ is used across all layers, and a layer is rebuilt once an $\epsilon$-fraction of its elements have been deleted. These lazy updates spread the cost of reconstruction evenly over time, leading to low amortized query complexity. 

Finally, to accommodate elements of widely varying submodular values, we run parallel copies of the algorithm with thresholds of the form $\tau = \frac{\epsilon}{4k} \cdot 2^i$, where $i$ ranges over integers. Each element participates only in the runs whose thresholds match its value interval, ensuring efficiency across the full value range. This thresholding technique is inspired by the parallel-run method of~\cite{DBLP:conf/nips/LattanziMNTZ20}. 

% In summary, the combination of a monotone reference set for marginal gains, a polylogarithmic layered bucketing structure, a sampling-based update mechanism with geometric dominance, and buffered insertions and deletions yields the first dynamic algorithm for submodular maximization under general matroid constraints with polylogarithmic amortized query complexity. 

% Our goal is to prove the following theorem in this section. 

\begin{theorem}\label{thm:main}
For any $\epsilon > 0$, there exists a fully dynamic randomized algorithm for submodular maximization under a matroid constraint that maintains an expected $(4+\epsilon)$-approximate solution against an oblivious adversary. The expected amortized update time is $\operatorname{poly}(\log n, \frac{1}{\epsilon})$, where $n = |\mathcal{N}|$ is the ground set size. 
\end{theorem}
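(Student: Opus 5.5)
We prove Theorem~\ref{thm:main} in three stages: a reduction to a single threshold, the layered data structure for one fixed threshold, and the combination of the copies.

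\textbf{Reduction to a fixed threshold.} Following the parallel-run method of~\cite{DBLP:conf/nips/LattanziMNTZ20}, we run one copy of the algorithm for each threshold $\tau = \invratio \cdot 2^i$. An element $e$ participates only in copies with $\tau \le f(\{e\}) \le \ratio\,\tau$, so each element lies in $O(\log(k/\eps))$ copies. Let $\tau^*$ be the copy with $\tau^* \approx \eps\,\mathrm{OPT}/k$. We will show that this copy outputs a $(4+O(\eps))$-approximation. Elements of value below $\invratio\,\mathrm{OPT}$ are ignored in this copy; they contribute at most $\eps\,\mathrm{OPT}/4$ to $f(\optInd)$ by submodularity. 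The data structure then reports the best solution over all copies. Since $\log k\le\log n$, the per-copy cost is multiplied by only a polylogarithmic factor.

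\textbf{Approximation for one copy.} We maintain the invariants of the layered construction. Each layer $\ell$ holds $\indep_\ell\in\mI$ and $\uni_\ell\supseteq\indep_\ell$, with $\uni$ monotone across layers. The terminal layer has no surviving element $e$ with $f(e\mid\uni)\ge\tau$ such that $e$ can be added to $\indep$, or swapped in for a cheaper element. We run a charging argument in the style of the analysis in~\cite{DBLP:conf/soda/BanihashemBGHJM24}.

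\begin{enumerate}[(i)]
\item Every eviction is paid for by insertions of at least twice its marginal value. Summing the resulting geometric series gives $f(\uni)\le 2f(\indep)+$ (deleted mass).
\item For each $o\in\optInd$, either $f(o\mid\uni)<\tau$ (total $\le k\tau\le\eps\,\mathrm{OPT}$), or $o$ was blocked by the matroid. In the blocked case we use a matroid exchange bijection between $\optInd$ and a basis of $\indep\cup\{\text{blocked}\}$ to charge $o$ to an element of $\indep$ of at least half its value.
\end{enumerate}

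Combining (i) and (ii) with monotonicity, $f(\optInd)\le f(\uni)+\sum_o f(o\mid\uni)$, gives the factor $4$. The randomness enters as follows. Deletions are oblivious, and each layer's sample $S_\ell$ is uniform within a bucket of near-equal marginals. Hence, in expectation, the deleted fraction of $S_\ell$ does not exceed the deleted fraction of the bucket, which is at most $\eps$ before a rebuild. This bounds the lost value by $O(\eps)$ times the value contributed, and yields the expected guarantee.

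\textbf{Update time.} The number of layers is $T=\poly(\log n,\eps^{-1})$ in expectation. Each layer processes the largest of $O(\log(k/\eps)/\eps)$ buckets, and a constant fraction of the sampled elements is successfully added. Consequently $|E_{\ell+1}|\le(1-\Omega(\eps/\log n))|E_\ell|$ with constant probability, giving $O(\log n/\eps)$ levels. This step uses the \estimatesize{} procedure, with a Chernoff bound at confidence $\highprob$, to guarantee that most samples survive independence checks. Rebuilding layer $\ell$ costs $\tilde O(|E_\ell|)$ queries. It is triggered only after $\eps|E_\ell|$ insertions into buffer $A_\ell$, or after $\eps|E_\ell|$ deletions from $D$. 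Charging the rebuild to those updates gives $\tilde O(T/\eps)$ amortized cost per update per copy. Rebuilding a layer also rebuilds all deeper layers, but their sizes decay geometrically, so this only adds a constant factor.

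\textbf{Main obstacle.} The hardest step is to make the evicted elements robust to deletions when the adversary deletes elements of $\indep$ that were previously used to pay for evictions. Charges then cross layers, and $\uni$ retains evicted but possibly deleted elements. We would first prove a per-layer lemma: conditioned on the history $\hist{\ell}$, the expected marginal value of the deleted part of $S_\ell$ is at most $O(\eps)$ times the value of $S_\ell$. We would then sum this over layers via linearity of expectation.
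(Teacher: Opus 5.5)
Your architecture matches the paper's: threshold copies, layers that sample from the largest bucket, the $2\times$ swap rule, and uniform samples combined with bucket-level deletion triggers. However, the bound on the number of layers does not go through. You derive $|E_{\ell+1}|\le(1-\Omega(\eps/\log n))|E_\ell|$ from the fact that most sampled elements get added, and that implication is false. Take a free matroid and a coverage function in which every element of $\buck_\ell$ has a private part of value $a$ and a common shared part of value $b$, with $a+b$ in the chosen bucket and $\tau\le a<\tau_\ell$. In every ordering the first element is accepted and all later ones fail, since their marginal drops to $a<\tau_\ell$. Hence $\ssize_\ell=1$. Afterwards every other bucket element still passes $\inspect$ at the global threshold $\tau$, so $|E_{\ell+1}|=|E_\ell|-1$ deterministically when $E_\ell=\buck_\ell$. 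What actually drives progress is the \emph{maximality} of the estimated size, i.e.\ the second condition of a suitable size. After the sample, a fresh uniform element of $\buck_\ell$ fails $\inspect$ at threshold $\tau_\ell$ with probability at least $\epsSamp/2$, so in expectation an $\epsSamp/2$ fraction of $\buck_\ell$ fails. Such a failure may only demote an element to a lower bucket rather than remove it. The right measure is therefore the potential $P(e,i)=\lfloor\log_{1+\eps}(\relcont{e}{\uni_{i-1}}/\tau)\rfloor+1$, which loses at least one unit per failing bucket element. This gives $\Ex{P_i-P_{i+1}}\ge\frac{\epsSamp}{4}|\buck_i|$. Since $|\buck_i|$ is at least a $1/O(\eps^{-1}\log(k/\eps))$ fraction of the layer, and $P_i$ is at most $O(\eps^{-1}\log(k/\eps))$ times its size, the potential decays geometrically and there are polylogarithmically many layers in expectation. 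The same example refutes your claim that deeper layers shrink geometrically, so that cascading rebuilds cost only a constant factor. The paper instead charges a rebuild from layer $i$ as the size of layer $i$ times the expected number of remaining layers.

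The approximation argument is also missing its main matroid-specific step. In (ii) you charge each blocked $o\in\optInd$ to an element of the \emph{final} $\indep_T$. But $o$ was filtered out at some layer $\ell^*$ against $\indep_{\ell^*}$, and swaps in deeper layers can evict exactly the elements that blocked it. Your terminal-layer invariant does not help, since $E_{T+1}=\emptyset$ makes it vacuous. You need the paper's persistence lemma. It states that if $\relcont{o}{\uni}\ge\tau$ and $\inspect(o,\trip,\tau)=\err$, then after any $\adjust$ there is still $N\subseteq\indep'$ with $N\cup\{o\}$ a circuit and $2\delt'(e')>\relcont{o}{\uni'}$ for all $e'\in N$. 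The proof uses circuit elimination. Suppose a swap brings in $e_2$ and evicts $e_3$ from $o$'s circuit $C_1$, and let $C_2$ be the circuit of $e_2$. Eliminating $e_3$ from $C_1\cup C_2$ yields a circuit through $o$ inside $\indep'\cup\{o\}$. All of its other elements have $\delt'\ge\delt(e_3)>\relcont{o}{\uni}/2$, because $e_3$ was the minimum of $C_2$ and lies in $C_1$. Only with such $N(o)\subseteq\indep_T$ does $o\in\spn(N(o))$ give Hall's condition for an injective ratio-$2$ charging into $\indep_T$. Finally, the obstacle you flag is sidestepped on the paper's route. The whole charging is done on the structure with deleted elements still inside, which is valid because the charged elements of $\optInd$ are present and hence in layer~1. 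Deletions are handled afterwards by $\Ex{w(\uni_T\cap D)}\le O(\eps)\,\Ex{w(\uni_T)}$, where $w$ is the recorded marginal $\delt$. This loss is transferred to $\indep_T$ via $w(\uni_T\setminus\indep_T)\le w(\indep_T)$, together with $w(\indep_T\setminus D)\le f(\indep_T\setminus D)$.
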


Next, we highlight the key novelties of our work in comparison with prior dynamic algorithms for submodular maximization under matroid and cardinality constraints. 
The dynamic algorithm for matroid constraints proposed in~\cite{DBLP:conf/soda/BanihashemBGHJM24} achieves an update time linear in the rank~$k$ of the underlying matroid. 
In contrast, the dynamic algorithm for the cardinality constraint introduced in~\cite{DBLP:conf/icml/BanihashemBGHJM23} achieves a polylogarithmic update time. 
However, neither of these techniques alone can be extended to develop a dynamic algorithm for matroid constraints that maintains a $(4+\epsilon)$-approximate solution with an expected amortized update time of $\mathrm{poly}(\log n, \epsilon^{-1})$. 

%\paragraph{Comparison with~\cite{DBLP:conf/soda/BanihashemBGHJM24}.} 
The approach in~\cite{DBLP:conf/soda/BanihashemBGHJM24} maintains $O(k)$ layers and, upon the deletion of an element~$e$ from the current solution, reconstructs the layer~$\ell$ where~$e$ was added, along with all subsequent layers. 
As a result, this framework cannot be directly transformed into a data structure that supports $\mathrm{poly}(\log n, \epsilon^{-1})$ update time under matroid constraints. 
To overcome this limitation, we partition the active set~$E_{\ell}$ in each layer $\ell$ into logarithmically many buckets according to the marginal values of their elements, and we select the largest bucket for processing. 
This bucket contains at least $O(\frac{\epsilon}{\log n}|E_{\ell}|)$ elements whose marginal gains are nearly identical. 
Using a sampling mechanism, we estimate the fraction of elements in this bucket that can be added to the current solution through a single uniform sample. 
The key insight is that, since all elements added to the solution at layer~$\ell$ have nearly equal marginal gains, as long as only an $O(\epsilon)$ fraction of them are modified, the maintained solution continues to provide a $(4+\epsilon)$-approximation guarantee. 
Consequently, unlike the approach in~\cite{DBLP:conf/soda/BanihashemBGHJM24}, which can tolerate only one update to the maintained solution (and therefore requires $O(k)$ layers), our method can tolerate changes to an $O(\epsilon)$ fraction of elements with nearly equal marginal gains. 
This property enables us to achieve an expected amortized update time of $\mathrm{poly}(\log n, \epsilon^{-1})$.

%\paragraph{Comparison with~\cite{DBLP:conf/icml/BanihashemBGHJM23}.} 

% Moreover, although submodular maximization under a cardinality constraint can be viewed as a special case of the problem under a matroid constraint—where the matroid is uniform of rank~$k$—extending dynamic algorithms (e.g.,~\cite{DBLP:conf/nips/LattanziMNTZ20,DBLP:conf/icml/BanihashemBGHJM23}) from the cardinality case to general matroids is far from trivial. 
% The main difficulty lies in the structural nature of matroids—particularly the \emph{exchange property}, which dictates the feasibility of maintaining independence when elements are inserted or deleted. 
% Under a cardinality constraint, feasibility depends solely on the size of the selected set.
% This simplicity allows for efficient update operations and enables amortized polylogarithmic maintenance of approximate solutions. 
% In contrast, for general matroids, maintaining independence requires intricate coordination among elements, as the inclusion or removal of one element can affect the feasibility of many others. 
% Hence, algorithms designed for cardinality constraints cannot be directly extended to handle general matroids without substantial new ideas. 
% Our analysis requires non-trivial arguments that utilize matroid properties—particularly the exchange axiom—to ensure that updates maintain both independence and near-optimality. 

Moreover, while submodular maximization under cardinality or matching constraints is closely related to the matroid-constrained problem—specifically where the matroid is uniform of rank k—extending dynamic algorithms (e.g., \cite{DBLP:conf/nips/LattanziMNTZ20,DBLP:conf/icml/BanihashemBGHJM23,DBLP:conf/icalp/BanihashemBGHJM25}) to general matroids is far from trivial.
The primary challenge lies in the structural nature of matroids, particularly the strong exchange property, which governs how independence is maintained as elements are inserted or deleted. Under a cardinality constraint, feasibility is a simple function of the set's size. This simplicity facilitates efficient update operations and enables the maintenance of approximate solutions with amortized polylogarithmic complexity. Similarly, matching constraints possess a specific localized structure that limits the ripple effect of updates.
In contrast, for general matroids, maintaining independence requires intricate coordination across the entire ground set; adding or removing a single element can shift the feasibility status of numerous other subsets. Consequently, existing algorithms cannot be directly adapted to general matroids without substantial new insights. Our analysis introduces non-trivial arguments leveraging matroid axioms to ensure that dynamic updates preserve both independence and near-optimality.

One of the technicalities of this dynamic framework lies in its sampling size estimation procedure, which differs fundamentally from the binary-search sampling method used in~\cite{DBLP:conf/icml/BanihashemBGHJM23}, originally introduced in \cite{DBLP:conf/soda/FahrbachMZ19} and \cite{DBLP:conf/stoc/BalkanskiRS19}. 
Due to the exchange property inherent in matroid constraints, this binary-search approach cannot be directly applied. 
Instead, we design a new sampling mechanism adapted to the matroid structure. 
For each layer~$\ell$, to estimate the number of elements of the largest bucket that can be uniformly sampled all at once (i.e., non-adaptively), 
we consider a polylogarithmic number of random permutations of elements in the largest bucket. 
For each permutation, we iterate from its beginning to determine the elements that can be added to the current solution while maintaining independence. 
(Recall that if a conflict arises, elements currently in~$\indep$ are removed to restore feasibility, and the removed elements are added to~$\uni$. 
It is important to note that, unlike the dynamic algorithm for cardinality constraint~\cite{DBLP:conf/icml/BanihashemBGHJM23}, where marginal gains are compared directly against the current solution, in our matroid-constrained setting, the marginal gains are compared with respect to $\uni$, while the independence condition is enforced relative to the solution set $\indep$.)  

After processing all permutations, we identify the largest index $i$ of the permutations such that, in at least a $(1-\epsilon)$-fraction of the permutations, the element at index $i$ could be added to $\uni$. 
Finally, in contrast to the cardinality-constrained setting, where all $t$ uniformly sampled elements from the largest bucket are added to the solution, our framework must verify each sampled element individually to ensure matroid feasibility. 

However, the primary technical challenge of this work, distinguishing it from \cite{DBLP:conf/icalp/BanihashemBGHJM25}, arises from the fundamental structural disparities between matroids and matchings. Specifically, the \(\inspect\) and \(\adjust\) procedures required for matroids are more involved than those employed in the matching framework. These structural differences ultimately lead to a substantially different and more complex analysis to establish approximation guarantees.

\subsection{Related Work}

Dynamic submodular maximization has attracted significant attention in recent years. 
Lattanzi et al.~\cite{DBLP:conf/nips/LattanziMNTZ20} presented a randomized dynamic algorithm 
that maintains a $(2+\epsilon)$-approximate solution for maximizing a monotone submodular function under a cardinality constraint~$k$. 
Their method achieves an amortized expected query complexity  
$O(\epsilon^{-11} \log^6(k) \log^2(n))$ per update, which is polylogarithmic in~$n$. 
Independently, Monemizadeh~\cite{DBLP:conf/nips/Monemizadeh20} developed another randomized dynamic algorithm with similar approximation guarantees, requiring $O(\epsilon^{-3} k^2 \log^5(n))$ queries per update. 
% Extensions of these results to the dynamic non-monotone setting have also been developed~\cite{DBLP:conf/nips/BanihashemBGHJM23}.  
% Related work  further demonstrated that near-optimal efficiency is possible ~\cite{DBLP:conf/nips/LattanziMNTZ20,DBLP:conf/icml/BanihashemBGHJM23}, 
% \emph{submodular cover}~\cite{DBLP:conf/icml/BanihashemGHJM24}, maximum submodular matching~\cite{DBLP:conf/icalp/BanihashemBGHJM25} 
% in fully dynamic settings by achieving polylogarithmic query complexity for cardinality constraint, 
% But obtaining polylogarithmic update time for the the matroid constraint is widely open 
% despite persistent effort. We address this open problem in this paper. 
Subsequent work extended these results to the dynamic non-monotone setting~\cite{DBLP:conf/nips/BanihashemBGHJM23}. 
Further advances demonstrated that near-optimal efficiency is achievable for related problems such as \emph{submodular cover}~\cite{DBLP:conf/icml/BanihashemGHJM24} and \emph{maximum submodular matching}~\cite{DBLP:conf/icalp/BanihashemBGHJM25}, attaining polylogarithmic query complexity. 
However, achieving polylogarithmic update time for matroid constraints remains an open challenge despite persistent efforts~\cite{DBLP:conf/soda/BanihashemBGHJM24,DBLP:conf/stoc/ChenP22,DBLP:journals/talg/DuttingFLNZ25,DBLP:conf/nips/LattanziMNTZ20}. 
This paper resolves this open problem.

Chen and Peng~\cite{DBLP:conf/stoc/ChenP22} established lower bounds for dynamic submodular maximization with cardinality constraints, which is a special case of matroid constraints. 
They showed that any algorithm achieving better than a $(2-\epsilon)$-approximation must have amortized query complexity at least $n^{\tilde{\Omega}(\epsilon)} / k^3$, and achieving an approximation better than $1.712$ requires $\Omega(n / k^3)$ queries per update, even if the optimal value of the current set is known. 
In the insertion-only model, they further provided algorithms for submodular maximization under a matroid constraint of rank $k$, achieving $(2+\epsilon)$ and $(\frac{e}{e-1}+\epsilon)$-approximations with expected oracle query complexity $k^{\tilde{O}(1/\epsilon)}$ and $k^{\tilde{O}(1/\epsilon^2)} \cdot \log(n)$, respectively.  

For general matroid constraints, Dütting et al.~\cite{DBLP:journals/talg/DuttingFLNZ25} proposed a $(4+\epsilon)$-approximation algorithm with expected amortized update time $O\!\bigl(\frac{k^2}{\epsilon} \log k \log n \log^3 \tfrac{k}{\epsilon}\bigr)$, which scales quadratically with the matroid rank~$k$. 
An improved method was later introduced by Banihashem et al.~\cite{DBLP:conf/soda/BanihashemBGHJM24}, which achieves a similar approximation ratio but with expected amortized update time $O(k \log k \log^3 \tfrac{k}{\epsilon})$, linear in~$k$. 
Although these results represent progress, both approaches become inefficient when $k$ is large (e.g., $k = n^{o(1)}$), since their complexity still grows polynomially in~$k$.

% Polylogarithmic query time has also been achieved beyond cardinality and matroid constraints. 
% For example, the dynamic \emph{submodular cover} problem was studied in~\cite{DBLP:conf/icml/BanihashemGHJM24}, 
% where the goal is to find a subset $S$ such that $f(S)=f(\mathcal{N})$, and a dynamic algorithm with polylogarithmic query time per update was obtained. 
% More recently, Banihashem et al.~\cite{DBLP:conf/icalp/BanihashemBGHJM25} presented a dynamic $(8+\epsilon)$-approximation algorithm for the 
% \emph{maximum submodular matching} problem, with $\poly(\log n, \epsilon^{-1})$ amortized query time per update, where $n$ is the number of vertices in the graph.  

A related line of research considers the \emph{streaming} setting, where the data arrives sequentially and algorithms must work with limited memory. 
The first streaming algorithm for submodular maximization under a cardinality constraint was proposed by Badanidiyuru et al.~\cite{DBLP:conf/kdd/BadanidiyuruMKK14}, who gave a $(2+\epsilon)$-approximation using $O(k\epsilon^{-1} \log k)$ space. 
This was later improved by Kazemi et al.~\cite{DBLP:conf/icml/0001MZLK19}, who reduced the space requirement to $O(k\epsilon^{-1})$. 
Chakrabarti and Kale~\cite{chakrabarti2015submodular} introduced a streaming framework applicable to a wide range of combinatorial constraints, including matroid and matching constraints. 
Their approach yields a $4$-approximation algorithm for submodular maximization under a matroid constraint, using $O(k)$ space.

\subsection{Preliminaries}
\label{sec:prelimn}
In this section, we introduce submodular functions, matroids, the oblivious adversarial model used in our dynamic setting, and the notion of query complexity.

\paragraph{Submodular functions.}
Let $\mathcal{N}$ be a finite ground set. A function $f: 2^{\mathcal{N}} \rightarrow \REAL^+$ is called \emph{submodular} if for all $A \subseteq B \subseteq \mathcal{N}$ and any $e \notin B$,
\(f(A \cup \{e\}) - f(A) \ge f(B \cup \{e\}) - f(B),\)
capturing the principle of \emph{diminishing returns}: the benefit of adding an element $e$ to a smaller set $A$ is at least as large as adding it to its supersets.
An equivalent definition states that for all $A, B \subseteq \mathcal{N}$,
\(f(A) + f(B) \ge f(A \cup B) + f(A \cap B),\).
%highlighting that submodular functions are discrete analogs of concave functions.
We assume that the submodular function $f$ is also \emph{monotone}, meaning \(f(A \cup \{e\}) \ge f(A)\) for all \(A \subseteq \mathcal{N}\).
The value \(f(A \cup \{e\}) - f(A)\) is called the \emph{marginal gain} of adding $e$ to $A$, and is denoted by \(\relcont{e}{A}\).
In this paper, we assume to have access to \(f\) via a \emph{value oracle}.

\paragraph{Matroids.} \emph{Matroids} are general combinatorial structures that capture the notion of independence. Formally, a matroid $\mathcal{M} = (\mathcal{N}, \mathcal{I})$ consists of a finite ground set $\mathcal{N}$ and a collection of subsets $\mathcal{I} \subseteq 2^{\mathcal{N}}$, called \emph{independent sets}, satisfying the following three properties: \emph{non-emptiness:} $\emptyset \in \mathcal{I}$, i.e., the empty set is independent; \emph{heredity (downward-closedness):} If $A \in \mathcal{I}$ and $B \subseteq A$, then $B \in \mathcal{I}$, i.e., every subset of an independent set is also independent; \emph{exchange property:} If $A, B \in \mathcal{I}$ and $|A| < |B|$, then there exists an element $x \in B \setminus A$ such that $A \cup \{x\} \in \mathcal{I}$. The \emph{rank} $k$ of the matroid is the size of its largest independent set; the exchange property ensures that all maximal independent sets (called \emph{bases}) have the same size $k$. A \emph{circuit} in a matroid is a minimal dependent set, i.e., a set $C \subseteq \mathcal{N}$ that is not independent but every proper subset of $C$ is independent. Circuits describe the minimal sets of elements whose presence causes dependence; thus, removal of at least one element from each circuit of a dependent set makes it independent. The family $\mathcal{C}$ of circuits of a matroid $\mathcal{M}$ satisfies the following properties: $\emptyset \notin \mathcal{C}$; if $C_1, C_2 \in \mathcal{C}$ and $C_1 \subseteq C_2$, then $C_1 = C_2$; and if $C_1, C_2 \in \mathcal{C}$, $C_1 \neq C_2$ and $e \in C_1 \cap C_2$, then there exists $C_3 \in \mathcal{C}$ such that $C_3 \subseteq (C_1 \cup C_2) \setminus \{e\}$. 
It follows that if $I \in \mathcal{I}$ and $e \in \mathcal{N}$ with $I \cup \{e\} \notin \mathcal{I}$, then $I \cup \{e\}$ contains a unique circuit $C$ (see Lemma~\ref{lem:=1circuit}); consequently, removing any element of $C$ from $I \cup \{e\}$ restores its independence.
In this paper, we assume that we have access to an independence oracle for the matroid $\mathcal{M}$.

% A \emph{matroid} $\mathcal{M} = (\mathcal{N}, \mathcal{I})$ consists of a ground set $\mathcal{N}$ and a collection of subsets $\mathcal{I} \subseteq 2^{\mathcal{N}}$ called \emph{independent sets}, which satisfy three properties: non-emptiness, heredity (downward-closedness), and the exchange property. The rank $k$ of the matroid is the size of its largest independent set. The classical problem of \emph{submodular maximization under a matroid constraint} asks to find a set $S \in \mathcal{I}$ that maximizes $f(S)$, that is, to compute $\max_{S \in \mathcal{I}} f(S)$.

\paragraph{Oblivious adversarial model.}
We analyze our dynamic algorithm in the \emph{oblivious adversarial model}, a standard framework for randomized data structures~\cite{DBLP:conf/stoc/CarterW77}. In this model, the adversary knows both $f$ and the algorithm but not its random bits; thus, the entire update sequence is fixed before execution.

\paragraph{Query complexity.}
The \emph{query complexity} of a dynamic algorithm is the number of oracle queries (to the value oracle for $f$ and to the independence oracle for the matroid $\mathcal{M}$) used to maintain an approximate solution at time $t$, given all computation up to time $t-1$. We measure time complexity primarily in terms of this query complexity.

Finally, we give a few notations that we use frequently in this paper. 

For a function $x$ and a set $A$, we write $x[A]$ to denote the restriction of $x$ to the domain $A$. 
Given an event $E$, we use $\ind{E}$ to denote its \emph{indicator function}: it equals $1$ if $E$ occurs, and $0$ otherwise. 
We use bold letters for random variables (e.g., $\textbf{X}$) and the corresponding non-bold letters (e.g., $X$) to represent their realizations.  
The probability and expectation of a random variable $\textbf{X}$ are denoted by $\Pr{\textbf{X}}$ and $\Ex{\textbf{X}}$, respectively. 
For two events $A$ and $B$, we write $\Pr{A \mid B}$ for the conditional probability of $A$ given $B$.  
For an event $A$ with nonzero probability and a discrete random variable $\textbf{X}$, the conditional expectation of $\textbf{X}$ given $A$ is written as $\Ex{\textbf{X} \mid A}$ and defined by $\Ex{\textbf{X} \mid A} = \sum_{x} x \cdot \Pr{\textbf{X} = x \mid A}.$ 
Similarly, for two discrete random variables $\textbf{X}$ and $\textbf{Y}$, the conditional expectation of $\textbf{X}$ given that $\textbf{Y}=y$ is denoted by $\Ex{\textbf{X} \mid \textbf{Y}=y}$.

\section{Algorithms}
\label{sec:alg}
In this section, we describe our dynamic algorithm for submodular maximization under a matroid constraint that maintains an independent set~$\indep$ that achieves a~$(4+\eps)$-approximation to the optimal solution, 
while efficiently handling insertions and deletions. 
The algorithm is built using several algorithms or subroutines:

\paragraph{Inspection procedure (\inspect):}
For a new element~$e$, we first check whether its marginal gain with respect to current set~$\uni$ exceeds threshold~$\tau$.  
If~$e$ can be added to the independent set~$\indep$ without violating the matroid constraint, we include it.  
Otherwise, we identify an element~$\hat{e} \in \indep$ with the smallest marginal contribution~$\delt(\hat{e})$, 
such that replacing~$\hat{e}$ by~$e$ still preserves independence.  
If~$2 \cdot \delt(\hat{e}) \le \relcont{e}{\uni}$, we replace~$\hat{e}$ with~$e$.

\begin{algorithm}[H]
  \caption{Inspection}
  \begin{algorithmic}[1]
        
        \Procedure{\inspect}{$e, (\indep, \uni, \delt), \tau$}
        \label{alg:inspect}
        
          \If{$ \relcont{e}{\uni} < \tau$}\label{line:remove_w_not_in_range}
              \State \Return \err
          \EndIf
          \If{$\indep \cup\cbr{e} \in \mI$}
             \State \Return $\emptyset$
          \EndIf

          \State $\hat{e} \gets \arg\min \cbr{\delt(e') : e' \in \indep, 
          \indep \cup \cbr{e} \setminus \cbr{e'} \in \mI}$ \Comment{find by using a binary search}
          \label{line:matroid:promote:find_min}
          \If{$2\cdot \delt(\hat{e}) \le \relcont{e}{\uni} $} 
             \State \Return $\cbr{\hat{e}}$
          \EndIf
          \State \Return \err
        \EndProcedure

  \end{algorithmic}
\end{algorithm}

\paragraph{Iterative Expansion (\expand):}
Given a sample set~$\samp$ of candidate elements and the current triplet~$(\indep, \uni, \delt)$, 
we iterate over~$\samp$, applying \textproc{adjust} to each element~$e$.  
The procedure \textproc{adjust} tries to insert~$e$ via \textproc{inspect}, and if successful, updates marginal contributions and keeps~$\indep$ independent.

\begin{algorithm}[H]
  \caption{Iterative Expansion}
  \label{alg:expand}
  \begin{algorithmic}[1]

     \Procedure{\expand}{$\samp, (\indep, \uni, \delt), \tau$}
        \State $(\indep, \uni, \delt)' \gets (\indep, \uni, \delt)$
        \For{$e \in \samp$} %\label{line:construct_for}
            \State $(\indep, \uni, \delt)' \gets 
            \adjust (e, (\indep, \uni, \delt)', \tau)$
        \EndFor
        \State \Return $(\indep, \uni, \delt)'$ \label{line:construct_sample_process_end}
    \EndProcedure

    \Procedure{\adjust}{$e, (\indep, \uni, \delt), \tau$}
        \label{alg:adjust}
            \State $X \gets$ {\inspect}$(e, (\indep, \uni, \delt), \tau)$
            \If{$X \ne \err$}
                \State $\delt(e) \gets \relcont{e}{\uni}$ and $\indep \gets \indep \cup \cbr{e} \setminus X$ and $\uni{} \gets \uni \cup \cbr{e}$
            \EndIf
            \State \Return $(\indep, \uni, \delt)$
        \EndProcedure
  \end{algorithmic}
\end{algorithm}

\begin{algorithm}[H]
  \caption{Sample Size Estimation}
  \label{alg:sample_size}
  \begin{algorithmic}[1]
    \Procedure{\estimatesize}{$\buck, \trip, \tau$}
      \State $t \gets \ceil{\constt}$ 
      \State Initialize $X$ as a $t \times (|\buck| + 1)$ matrix of zeros.
      \For{$i \gets 1$ to $t$}
        \State $\buck' \gets \text{a random permutation of } \buck$
        \State $(\indep, \uni, \delt)' \gets \expand(\buck', \trip, \tau)$
        %\For{$j \gets 1$ to $|\buck'|$}
            \State $\acc[i,j] \gets \ind{\buck'[j] \in \uni{}'}$ \qquad {$\forall 1 \le j \le |\buck|$}
        %\EndFor
      \EndFor
      \State $ \ssize' \gets \min{\cbr{j : 1 \le j \le \abs{\buck} + 1, \frac{1}{t}(\sum_{i=1}^t \acc[i,j])< 1 - \epsSamp}} $
       %\State Let $\ssize'$ be the smallest index $j \in [1, |\buck| + 1]$ for which  $\frac{1}{t}(\sum_{i=1}^t \acc[i,j])< 1 - \epsSamp$
      \State \Return $\ssize' - 1$
    \EndProcedure
  \end{algorithmic}
\end{algorithm}

\paragraph{Sample Size Estimation (\estimatesize{}):}
At each layer~$\ell$, we partition active elements into buckets based on their marginal gains relative to~$\uni_{\ell-1}$.  
We estimate an appropriate sample size~$\ssize_{\ell}$ to ensure, with high probability, that sampling from the largest bucket yields enough elements 
that can be successfully added to~$\indep$.

\begin{algorithm}[H]
  \caption{Main Algorithm}
  \label{alg:main}
  \begin{algorithmic}[1]
    \Procedure{\init{}}{$\tau$}
    %\State $\elem_0, \add_0,  \setd, \indep_0, \uni_0 \gets \emptyset$
    \State $\indep_0, \uni_0, \elem_1, \add_1, \buck_1, \setd \gets \emptyset$ and $T \gets 0$
    \State Initialize $\delt_0$ as a zero function
    
    \EndProcedure

    \Procedure{\ins}{$e$}
      \State $D \gets D \setminus \{e\}$
      %\State $\elem_{0} \gets \elem_0 \cup \cbr{e}$
      \For{$\ell \gets 1$ to $T + 1$} \label{line:insert_loop}
        \If{$\inspect{}(e, \trip_{\ell - 1}, \tau) \ne \err$ \& $e \notin \elem_{\ell}$ }
            \State ${\add}_{\ell} \gets {\add}_{\ell}  \cup \cbr{e}$ 
        \EndIf
        \If{$\abs{\add_{\ell}} > \epsIns\abs{\elem_\ell}$}
        \label{line:insert_checkforconstruct}
          \State \Return $\construct{}(\ell)$
        \EndIf
      \EndFor
      \State \Return $\indep_T \setminus \setd$
    \EndProcedure

    \Procedure{\del}{$e$}
      \State $\setd \gets \setd \cup \cbr{e}$
      %\State $\elem_{0} \gets \elem_0 \setminus \cbr{e}$
      \For{$\ell \gets 1$ to $T$}
        \If{$\abs{\setd \cap \buck_\ell} \ge \epsDel \abs{\buck_\ell}$} \label{line:delete_cond}
          \State \Return $\construct{}(\ell)$
        \EndIf
      \EndFor
      \State \Return $\indep_T \setminus \setd$
        
    \EndProcedure    
    
    \Procedure{\construct}{$\ell$}
    %\textbf{Input:} A layer $\ell \ge 1$
      \State $\elem_{\ell} \gets (\elem_{\ell} \cup \add_{\ell}) \setminus \setd$
      \State $\add_{\ell} \gets \emptyset$
      \label{line:construct_init}

      \If{$\elem_{\ell} = \emptyset$}
        \State $T \gets \ell - 1$
        \State \Return $\indep_{T} \setminus \setd$
      \EndIf
      
        \State $(\buck_{\ell}, \tau_{\ell}) \gets \chbucket{}(\ell)$
        
        \State $\ssize_{\ell} \gets \estimatesize{}(\buck_{\ell}, \trip_{\ell-1}, \tau_{\ell})$
        \label{ln:estimatesz}
            
        \State $\samp_{\ell} \gets$ a random permutation of a uniformly selected subset of size $\ssize_{\ell}$ from $\buck_{\ell}$
        \label{line:construct_sample}
        
        \State $\trip_{\ell} \gets \expand(\samp_{\ell}, \trip_{\ell-1}, \tau_{\ell})$  
        \label{line:expand}
        
        \State $\elem_{\ell + 1} \gets \cbr{e \in \elem_{\ell}: \inspect{}(e, \trip_{\ell}, \tau) \ne \err}$
        \label{ln:filter}
        \State $\add_{\ell + 1} \gets \emptyset$
       
      \State \Return $\construct(\ell + 1)$
      \label{ln:ell+1}
    \EndProcedure

    \Procedure{\chbucket}{$\ell$}
    \State $\buck_{\ell, j} \gets \emptyset$ \quad $\forall  0 \le j 
    < \log_{(1+\epsTau)}(\ratio)$
    %\le \ceil{\log_{(1+\epsTau)} (\frac{\MAX{}}{\tauMin})}$
    \For{$e \in \elem_{\ell}$}
            \State $j \gets \floor{\log_{1+\eps}(\frac{\relcont{e}{\uni_{\ell - 1}}}{\tau})}$
            \State $\buck_{\ell, j} \gets \buck_{\ell, j} \cup \{e\}$
            \Comment{
            $\buck_{\ell, j} \gets \cbr{e \in \elem_{\ell}: (1+\eps)^{j}\tau \le \relcont{e}{\uni_{\ell - 1}} < (1+\eps)^{j+1}\tau}$
            }
        \EndFor
        
        \State $j_{\ell} \gets \argmax_{j} \cbr{\abs{\buck_{\ell, j}}}$

        \State \Return $(\buck_{\ell, j_{\ell}}, (1+\epsTau)^{j_{\ell}} \tau)$
        \Comment{
        $\buck_{\ell} \gets \buck_{\ell, j_{\ell}} $ \quad $\tau_{\ell} \gets (1+\epsTau)^{j_{\ell}} \tau $ 
        }
    
    \EndProcedure
  \end{algorithmic}
\end{algorithm}

\paragraph{Layered Dynamic Structure:}
Our algorithm organizes computation into multiple layers indexed by~$\ell$.  
For each layer, we maintain a set of active elements~$E_{\ell}$, an auxiliary buffer~$A_{\ell}$ for new insertions, and associated data structures.  
When a new element~$e$ arrives, we check in each layer whether it can be added to~$E_{\ell}$.  
If so, we perform a \emph{lazy update}: we add~$e$ to~$A_{\ell}$.  
When~$|A_{\ell}|$ reaches an~$\epsilon$-fraction of~$|E_{\ell}|$, we merge~$A_{\ell}$ into~$E_{\ell}$ and reconstruct layer~$\ell$ from scratch.

For deletions, we keep a single shared buffer~$D$ collecting all removed elements.  
For each layer~$\ell$, if an~$\epsilon$-fraction of~$E_{\ell}$ appear in~$D$, we rebuild that layer using the remaining undeleted elements.  
Unlike insertions—where each layer maintains its own buffer~$A_{\ell}$—deletions use the same global buffer~$D$.

\paragraph{Layer Construction (\construct):}
To build the next layer, the algorithm uses \chbucket{} to select the largest bucket of elements with similar marginal gains.  
We then sample a subset from this bucket, expand~$\indep$ by adding feasible elements (via \expand), and filter elements to create the new set~$A_{\ell+1}$ for the subsequent layer.

\paragraph{Parallel Instances:}
To handle elements with diverse marginal gains, we run parallel copies of the dynamic algorithm, each using a different threshold~$\tau = \frac{\eps}{4k} \cdot 2^i$.  
When an element~$e$ is inserted or deleted, we update all instances whose threshold ranges include~$f(e)$, following the parallel threshold idea from~\cite{DBLP:conf/nips/LattanziMNTZ20}.

\begin{algorithm}[H] 
\caption{Parallel Runs} 
\label{alg:parallel}
\begin{algorithmic}[1]
    \For{$i \in \mathbb{Z}$} 
        \State Let $\mathcal{I}_i$ be an instance of Main Algorithm \ref{alg:main}, with $\tau = \invratio \cdot 2^i$. 
    \EndFor

    \Function{\globins}{$e$}
    \label{alg:globins}
        \State $V \gets V \cup \cbr{e}$
        \For{\textbf{each} $
        \log(f(v))
        < i \leq \log(f(v)) + \log(\ratio)$
        } \label{line:paralle_run_insertion_interval}
        \Comment{
        \(\invratio \cdot 2^{i} \le f(v) < 2^{i}\)
        }
            \State Invoke $\ins(e)$ for instance $\mathcal{I}_i$.
        \EndFor
    \EndFunction

    \Function{\globdel}{$e$}
    \label{alg:globdel}
        \State $V \gets V \setminus \cbr{e}$
        \For{\textbf{each} $
        \log(f(v))
        < i \leq \log(f(v)) + \log(\ratio)$
        } \label{line:paralle_run_deletion_interval}
        \Comment{
        \(\invratio \cdot 2^{i} \le f(v) < 2^{i}\)
        }
            \State Invoke $\del(e)$ for instance $\mathcal{I}_i$.
        \EndFor
    \EndFunction

\end{algorithmic}
\end{algorithm}

\vspace{0.5em}
By combining layered reconstruction, careful sampling, and iterative expansion while always evaluating marginal gains relative to the persistent set~$\uni$,  
our algorithm dynamically maintains an independent set~$\indep$ that remains a~$(4+\eps)$-approximation to the optimum,  
even as elements are inserted to or removed from the ground set~$\mathcal{N}$.

\section{Analysis}
We consider a fixed point in time immediately after the algorithm finishes executing an update.
Throughout our analysis, we study the state of our algorithm at this fixed point of time, to which we may refer as the current time. As such, our Theorems, Definitions, and Lemmas are based on the current state of variables, unless explicitly specified.

\subsection{Deterministic Properties of the Data structure}
\label{sec:determin}
In this section, we establish a series of deterministic invariants that describe the state of our data structure at the said current time. We begin with a lemma that describes the structure of independent and union sets at layer zero. 

\begin{lemma}
\label{lm:initialII'}
The following holds:
\(
\indep_0 = \uni_0 = \emptyset.
\)
\end{lemma}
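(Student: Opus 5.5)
The plan is to argue directly from the pseudocode, tracking which procedures may write to the variables $\indep_0$ and $\uni_0$. The claim is an invariant: it holds at initialization and is never broken by any later operation. Since we analyze the state at an arbitrary fixed time after an update, it is enough to show that no procedure executed up to that time ever assigns to $\indep_0$ or $\uni_0$.

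\textbf{Base case.} The procedure $\init{}$ explicitly sets $\indep_0, \uni_0 \gets \emptyset$, so the statement holds right after initialization.

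\textbf{Inductive step.} I would go through every procedure that can run during an update: $\ins$, $\del$, $\construct$, $\chbucket$, $\estimatesize$, $\expand$, $\adjust$ and $\inspect$. For each one, I check that the triplet $(\indep,\uni,\delt)_0$ is never assigned.
\begin{itemize}
\item $\inspect$ only reads its argument and returns a value; it modifies nothing.
\item $\expand$ and $\adjust$ operate on a local copy $(\indep,\uni,\delt)'$ and return it. In $\estimatesize$ the result is stored only in a local variable used to fill the matrix $\acc$. In $\construct(\ell)$ the result of $\expand$ is assigned to $\trip_\ell$.
\item $\construct$ is first called with $\ell \ge 1$: from $\ins$ the loop index starts at $1$, and from $\del$ it also starts at $1$. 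Every later call in the recursion has index $\ell+1 \ge 2$. Hence every assignment $\trip_\ell \gets \expand(\cdots)$ targets an index $\ell\ge 1$.
\item The triplet $\trip_{\ell-1}$ (which equals $\trip_0$ when $\ell=1$) appears only as an input, to $\estimatesize$, to $\expand$, and to $\inspect$ inside $\ins$.
\end{itemize}
Therefore $\indep_0$ and $\uni_0$ keep the value $\emptyset$ forever.

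\textbf{Main subtlety.} The one point that needs care is whether passing $\trip_{\ell-1}$ into $\expand$ could mutate it through aliasing. I will read the pseudocode as call-by-value, which matches the explicit copy in the line $(\indep,\uni,\delt)' \gets (\indep,\uni,\delt)$ at the start of $\expand$. Under this reading the input triplet is untouched, and the argument is complete.
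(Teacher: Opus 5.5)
Your proof is correct and follows the same route as the paper. The paper's argument is the same: $\init$ sets $\indep_0$ and $\uni_0$ to $\emptyset$, and $\construct(\ell)$ is only ever invoked with $\ell \ge 1$ and never writes to any layer below $\ell$. Your explicit call-by-value point corresponds to the paper's remark, in the proof of Lemma~\ref{lm:invar:expand}, that $\inspect$, $\expand$, and $\estimatesize$ do not modify their input parameters.
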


\begin{proof}
Consider the full course of the algorithm's execution.  
Both \(\indep_0\) and \(\uni_0\) are initialized as empty sets during the \(\init\) procedure, performed at the outset. Throughout the remainder of the algorithm, \(\indep_0\) and \(\uni_0\) remain unchanged, as none of the update procedures modify either of them, directly or indirectly, through the \(\construct\) procedure.  
This is because \(\construct(\ell)\) is only invoked for \(\ell \geq 1\) and does not make any modifications to \(\indep_{\ell'}\) or \(\uni_{\ell'}\) for any \(\ell' < \ell\).
\end{proof}

The following lemma establishes the relationship between the variables $\trip$ across successive layers, expressed in terms of the $\expand$ operation.
\begin{lemma}
\label{lm:invar:expand}
The following holds for all $\ell \in [T]$: 
\[
\trip_{\ell} = \expand(\samp_{\ell}, \trip_{\ell-1}, \tau_{\ell}).
\]
\end{lemma}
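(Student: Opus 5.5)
The lemma is an invariant about the current state, so I would argue it by examining the whole execution history, in the same way as the proof of Lemma~\ref{lm:initialII'}. The central observation is that, for $\ell \ge 1$, the variables $\trip_\ell$, $\samp_\ell$, $\tau_\ell$ are written only inside $\construct(\ell)$. Neither $\ins$ nor $\del$ changes them directly: they touch only $\add_\ell$ and $\setd$, and otherwise just call $\construct$. Inside $\construct(\ell)$, lines~\ref{line:construct_sample} and~\ref{line:expand} assign $\samp_\ell$ and then $\trip_\ell \gets \expand(\samp_\ell, \trip_{\ell-1}, \tau_\ell)$, with $\tau_\ell$ fixed just before by $\chbucket$. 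So right after this call the claimed identity holds for layer $\ell$.

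The task is then to show the identity survives every later modification. The only things that could break it for a fixed $\ell \in [T]$ are a later rewrite of $\trip_{\ell-1}$, or of $\samp_\ell$ or $\tau_\ell$, without a matching recomputation of $\trip_\ell$. The latter cannot happen, since those variables are only reassigned inside $\construct(\ell)$, which also recomputes $\trip_\ell$. For the former, $\trip_{\ell-1}$ with $\ell-1\ge 1$ changes only in some call $\construct(\ell-1)$. That call either returns early because $\elem_{\ell-1}=\emptyset$, or finishes by recursively calling $\construct(\ell)$ on line~\ref{ln:ell+1}, which re-establishes the identity for layer $\ell$ using the new $\trip_{\ell-1}$.

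If the call returns early, it sets $T \gets \ell-2$, so layer $\ell$ drops out of $[T]$. I need to check that a layer $\ell$ excluded this way cannot later re-enter $[T]$ without being rebuilt. This holds because $T$ only grows when the recursion reaches a layer whose $\elem$ is empty, and that recursion passes through $\construct$ of every intermediate layer, recomputing each one. For $\ell=1$, $\trip_0$ never changes by Lemma~\ref{lm:initialII'}, so only $\construct(1)$ matters.

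I would make this precise by induction over the sequence of $\construct$ invocations. The inductive hypothesis is: for every $\ell\in[T]$, the identity holds, and every layer in $[T]$ was last rebuilt no earlier than every layer below it. The main obstacle is the bookkeeping around $T$. In particular, I must confirm that the recursion always continues to $\ell+1$ until it hits an empty layer, and that the final $T$ is exactly one less than that layer. Together these guarantee that every layer currently in $[T]$ was (re)constructed after its predecessor's last change.
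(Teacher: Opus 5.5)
Your proposal is correct and follows essentially the same route as the paper. Both arguments take the last (non-early-returning) invocation of $\construct(\ell)$, where the call to $\expand$ establishes the identity, and then observe that $\trip_{\ell-1}$, $\samp_\ell$, $\tau_\ell$ can only change inside $\construct(\ell-1)$ or $\construct(\ell)$, both of which lead back to rebuilding layer $\ell$. Your treatment of the early-return case is more careful than the paper's, which simply asserts that every call to $\construct(\ell-1)$ triggers $\construct(\ell)$: in that case $\trip_{\ell-1}$ is untouched and $T$ drops below $\ell$, and layer $\ell$ can only re-enter $[T]$ through a chain that starts at some layer at most $T+1$ and therefore rebuilds it.
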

\begin{proof}
We prove this by considering a fixed $\ell \in [T]$ and showing that
\(
\trip_{\ell} = \expand(\samp_{\ell}, \trip_{\ell-1}, \tau_{\ell}).
\)

First, note that the helper functions \(\inspect\), \(\expand\), and \(\estimatesize\) do not modify their input parameters during execution.
Moreover, note that throughout the entire execution of the algorithm, the values of the variables \(\trip_i\), \(\samp_i\), and \(\tau_i\), for any arbitrary \(i\), may change only within the \(\construct(i)\) procedure prior to its invocation of \(\construct(i + 1)\) in line~\ref{ln:ell+1}.

Now, Consider the most recent invocation of $\construct(\ell)$. Immediately after the execution of line~\ref{line:expand} within $\construct(\ell)$, the equality
\(
\trip_{\ell} = \expand(\samp_{\ell}, \trip_{\ell-1}, \tau_{\ell})
\)
holds by construction.

Since any invocation of $\construct(\ell - 1)$ would subsequently trigger a call to $\construct(\ell)$, our choice implies that neither $\construct(\ell - 1)$ nor $\construct(\ell)$ has been invoked since then.
Hence, the values of $\trip_{\ell-1}$, $\samp_{\ell}$, $\tau_{\ell}$, and $\trip_{\ell}$ have remained unchanged since that point, confirming that the equality still holds at current time.
\end{proof}

The next lemma formalizes how the active elements are filtered across consecutive layers of our data structure. The proof of the following lemma captures how the algorithm handles reinsertions and redeletions that may occur during updates.
\begin{lemma}
\label{lm:invarFilter}
    Let \(\celem_{\ell}\) denote \((\elem_{\ell} \cup \add_{\ell}) \setminus \setd\) for \(\ell \in [T + 1]\).
    The following holds for all $\ell \in [T]$: 
    \[
    \celem_{\ell + 1} = \cbr{e \in \celem_{\ell}: \inspect(e, \trip_{\ell}, \tau) \ne \err}.
    \]
\end{lemma}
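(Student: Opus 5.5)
We would prove the identity by an invariant argument over the whole execution: fix $\ell \in [T]$ and show that the equality $\celem_{\ell+1} = \{e \in \celem_{\ell} : \inspect(e,\trip_\ell,\tau)\neq\err\}$ holds immediately after the most recent invocation of $\construct(\ell)$ completes line~\ref{ln:filter}, and is preserved by every subsequent operation up to the current time. As in the proof of Lemma~\ref{lm:invar:expand}, the relevant objects are $\trip_\ell$, $\elem_\ell,\elem_{\ell+1}$, $\add_\ell,\add_{\ell+1}$ and the global set $\setd$. By the choice of the last invocation of $\construct(\ell)$, no $\construct(\ell')$ with $\ell'\le \ell$ has been invoked since. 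Hence $\trip_\ell$ and $\elem_\ell,\elem_{\ell+1}$ are frozen, and only $\add_\ell$, $\add_{\ell+1}$ and $\setd$ can change.

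\emph{Base step.} Right after line~\ref{ln:filter} inside $\construct(\ell)$ we have $\add_\ell=\add_{\ell+1}=\emptyset$. We also have $\elem_\ell\cap\setd=\emptyset$, because of the first line of $\construct$, and $\elem_{\ell+1}\subseteq\elem_\ell$. So $\celem_\ell=\elem_\ell$, $\celem_{\ell+1}=\elem_{\ell+1}$, and the equality is exactly line~\ref{ln:filter}. The recursive call $\construct(\ell+1)$ then replaces $\elem_{\ell+1}$ by $(\elem_{\ell+1}\cup\add_{\ell+1})\setminus\setd$ and resets $\add_{\ell+1}$, which leaves $\celem_{\ell+1}$ unchanged. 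Deeper recursive calls do not touch levels $\ell$ or $\ell+1$. One subtlety: if some $\construct(j)$ returns with $\elem_j=\emptyset$ and sets $T\gets j-1$, then only $\ell\le T$ matter and $\celem_{T+1}$ is defined through $\elem_{T+1}=\emptyset$ together with $\add_{T+1}$. We would check that this still fits the statement by using the convention that $\add_{T+1}$ is empty at that moment.

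\emph{Inductive step over updates.} Two kinds of update can occur.

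\textbf{Deletion of $e$.} This adds $e$ to $\setd$ and may call $\construct(\ell')$ only for some $\ell'$. If $\ell'\le\ell$, it contradicts our choice of the last invocation. If $\ell'>\ell$, it does not affect $\celem_\ell$ or $\celem_{\ell+1}$. Adding $e$ to $\setd$ removes $e$ from both sides simultaneously, since the membership test $\inspect(e,\trip_\ell,\tau)$ does not depend on $\setd$. So the equality is preserved.

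\textbf{Insertion of $e$.} This first removes $e$ from $\setd$, which handles reinsertion of a previously deleted element. It then loops over layers and adds $e$ to $\add_\ell$ iff $\inspect(e,\trip_{\ell-1},\tau)\neq\err$ and $e\notin\elem_\ell$, and similarly for $\add_{\ell+1}$ using $\trip_\ell$. The key case analysis is whether $e$ now lies in $\celem_\ell$.
\begin{itemize}
\item \emph{Case $e\in\elem_\ell$.} Then $e\in\celem_\ell$ once $e$ leaves $\setd$. Since $\elem_{\ell+1}$ was filtered from $\elem_\ell$, $e\in\elem_{\ell+1}$ exactly when $\inspect(e,\trip_\ell,\tau)\ne\err$. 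So either $e$ is already in $\elem_{\ell+1}$ or it gets added to $\add_{\ell+1}$ precisely when the test passes.
\item \emph{Case $e$ newly put into $\add_\ell$.} Then at iteration $\ell+1$ it is added to $\add_{\ell+1}$ exactly when $\inspect(e,\trip_\ell,\tau)\neq\err$ and $e\notin\elem_{\ell+1}$.
\item \emph{Case $e$ not entering $\celem_\ell$.} We must show $e\notin\celem_{\ell+1}$. This needs the contrapositive chain: if $\inspect(e,\trip_\ell,\tau)\ne\err$ then $\inspect(e,\trip_{\ell-1},\tau)\ne\err$.
\end{itemize}

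The last case is the main obstacle. We would attempt to establish the monotonicity claim through the structure of $\expand$ (Lemma~\ref{lm:invar:expand}). The union $\uni$ only grows, so the threshold test can only fail more at later layers. For the matroid part, $\indep_\ell$ is obtained from $\indep_{\ell-1}$ by swaps in which the evicted elements have smaller $\delt$. We would argue that if $e$ passes at layer $\ell$, via either free addition or a cheap exchange, it passes at layer $\ell-1$.

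If this monotonicity fails in general, the fallback is different. We would instead exploit that the loop in $\ins$ may \texttt{return} early through $\construct(\ell')$ with $\ell'\le\ell+1$. We would also rely on the fact that elements surviving in $\add_{\ell+1}$ without being in $\celem_\ell$ can only arise from a prior level-$\ell$ membership, since $\add_{\ell+1}$ is reset at every reconstruction of level $\ell$. Together these would show that an insertion touching $\add_{\ell+1}$ always first passed the level-$\ell$ filter in the same loop.

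We must also handle an insertion whose loop stops at some $\construct(\ell')$ with $\ell'\le\ell+1$. Here $\ell'=\ell+1$ is harmless by the base-step observation, and $\ell'\le\ell$ contradicts our choice.
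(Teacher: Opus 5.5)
Your skeleton is essentially the paper's. The paper inducts over updates for all levels at once and uses Fact~\ref{fact5} exactly where you observe that a rebuild of level $\ell+1$ leaves $\celem_{\ell+1}$ unchanged.

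\textbf{Main gap: the monotonicity step.} The proof has a genuine hole at the step you flag yourself: the implication $\inspect(e,\trip_{\ell-1},\tau)=\err \Rightarrow \inspect(e,\trip_{\ell},\tau)=\err$. It cannot be bypassed. At iteration $\ell+1$ the loop in \ins{} tests only $\inspect(e,\trip_\ell,\tau)\neq\err$ and $e\notin\elem_{\ell+1}$. It never checks whether $e$ entered $\elem_\ell\cup\add_\ell$. So without the implication the lemma would be false, and your fallback (``an insertion touching $\add_{\ell+1}$ always first passed the level-$\ell$ filter in the same loop'') is not a property of the code. Your sketch (``evicted elements have smaller $\delt$'') also does not handle the exchange case, because after a swap the circuit of $\indep\cup\{e\}$ can be entirely different.

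What works is a forward argument over the \adjust{} steps of $\expand(\samp_\ell,\trip_{\ell-1},\tau_\ell)$ (Lemma~\ref{lm:invar:expand}).
\begin{itemize}
\item The threshold part follows from $\uni_{\ell-1}\subseteq\uni_\ell$.
\item For the matroid part, maintain a failure certificate: a circuit $N\cup\{e\}$ with $N\subseteq\indep$ and $2\delt(e')>\relcont{e}{\uni}$ for all $e'\in N$. Steps that do not evict an element of $N$ leave it intact.
\item If a step inserts $e_2$ and evicts some $e_3\in N$, apply circuit elimination (Lemma~\ref{lem:elimination:axiom}) at $e_3$ to $N\cup\{e\}$ and to the circuit $C_2$ of $\indep\cup\{e_2\}$. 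This yields a circuit $C_3\ni e$ inside $\indep'\cup\{e\}$.
\item Elements of $C_3$ coming from $N$ keep their $\delt$. Those from $C_2$ have $\delt'\ge\delt(e_3)$, since $\delt'(e_2)=\relcont{e_2}{\uni}\ge 2\delt(e_3)$ and $e_3$ minimizes $\delt$ on $C_2\setminus\{e_2\}$.
\item Since $e_3\in N$ and $\relcont{e}{\uni'}\le\relcont{e}{\uni}$, the certificate persists. Uniqueness of the circuit in $\indep'\cup\{e\}$ (Lemma~\ref{lem:=1circuit}) then makes \inspect{} fail.
\end{itemize}
This is precisely Lemma~\ref{lm:circ->circ}, whose proof does not depend on the threshold passed to \adjust{}. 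The paper's proof invokes this same implication at the corresponding step.

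\textbf{Smaller gap: the invariant is too weak.} The bare identity is not an inductive invariant. It says nothing about elements of $\setd$, yet a reinserted element's leftover membership in $\elem_{\ell+1}\cup\add_{\ell+1}$ matters. Moreover, $\elem_{\ell+1}$ is not frozen: $\construct(\ell+1)$ can fire directly, dropping deleted elements and merging $\add_{\ell+1}$. So your claim ``$e\in\elem_{\ell+1}$ exactly when the test passes'' can fail for a deleted $e\in\elem_\ell$.

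Carry instead two invariants, established once $\add_{\ell+1}$ is reset after line~\ref{ln:filter}:
\begin{itemize}
\item $\elem_{\ell+1}\cup\add_{\ell+1}\subseteq\elem_\ell\cup\add_\ell$, and every element of $\elem_{\ell+1}\cup\add_{\ell+1}$, deleted or not, passes $\inspect(\cdot,\trip_\ell,\tau)$;
\item every element of $\celem_\ell$ that passes this test lies in $\elem_{\ell+1}\cup\add_{\ell+1}$.
\end{itemize}
Deletions and rebuilds of level $\ell+1$ preserve both. Insertions preserve the first by the implication above. They preserve the second because iteration $\ell+1$ adds a passing reinserted element to $\add_{\ell+1}$ whenever it is missing from $\elem_{\ell+1}$. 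Together the two invariants give the identity immediately.
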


\begin{proof}
Recall that the statement of this lemma concerns the state of the variables following the execution of a fixed update, namely, the current time.

We prove this by showing that the statement holds with respect to the state of the variables before any update is applied, and that, assuming it holds prior to an update, we can conclude it also holds afterward. Hence, by induction, the statement holds for the current time.

Note that before any update is applied, the claim holds trivially as \(T = 0\).

Consider a call to the procedure \(\ins(v)\) or \(\del(v)\). For any variable \( X \), let \( X^{-} \) and \( X^{+} \) denote the state of \( X \) before and after the execution of the procedure, respectively.
Suppose that for all \( \ell \in [T^{-}] \),
    \[
    \celem_{\ell + 1}^{-} = \cbr{e \in \celem_{\ell}^{-}: \inspect(e, \trip_{\ell}^{-}, \tau) \ne \err}. 
    \]
We aim to prove that for all \( \ell \in [T^{+}] \),
\[
    \celem_{\ell + 1}^{+} = \cbr{e \in \celem_{\ell}^{+}: \inspect(e, \trip_{\ell}^{+}, \tau) \ne \err}. 
\]

We first prove the following fact, which will be used in our argument.
\begin{fact}
\label{fact5}
If \(\construct(\ell)\) is invoked during an update, and we use \( Y^{/} \) to denote the state of \( Y \) immediately before the invocation of \(\construct(\ell)\), then:
\[
\celem_{\ell}^{+} = \elem_{\ell}^{+} = \celem_{\ell}^{/}.
\]
\end{fact}

\begin{proof}
Note that \(\setd^{+} = \setd^{/}\). During the execution of \(\construct(\ell)\), the set \(\elem_{\ell}\) is updated as \((\elem_{\ell} \cup \add_{\ell}) \setminus \setd\) and does not change afterward. Then, \(\add_{\ell}\) is set to \(\emptyset\) and also remains unchanged. We have
\begin{align*}
\elem_{i}^{+} &= (\elem_{i}^{/} \cup \add_{i}^{/}) \setminus \setd^{/} = \celem_{i}^{/},\\
\add_{i}^{+} &= \emptyset,\\
\celem_{i}^{+} &= (\elem_{i}^{+} \cup \add_{i}^{+}) \setminus \setd^{+} =
(((\elem_{i}^{/} \cup \add_{i}^{/}) \setminus \setd^{/}) \cup \emptyset) \setminus \setd^{/}
=(\elem_{i}^{/} \cup \add_{i}^{/}) \setminus \setd^{/}.
\end{align*}
\end{proof}

Now, we proceed to prove that the following claim holds for an arbitrary fixed \( i \in [T^{+}] \), which will complete our proof.
\[
\celem_{i + 1}^{+} = \{ e \in \celem_{i}^{+} : \inspect(e, \trip_{i}^{+}, \tau) \ne \err \}.
\]

First, consider the case where \(\construct(i)\) is invoked.  
By Fact \ref{fact5}, we have 
\(
\celem_{i}^{+} = \elem_{i}^{+}.\)
When line~\ref{ln:filter} is executed, both \(\elem_{i}\) and \(\trip_{i}\) are finalized. Hence, \(\elem_{i + 1}\) is set as
\[
\{ e \in \elem_{i}^{+}: \inspect(e, \trip_{i}^{+}, \tau) \ne \err \} = \{ e \in \celem_{i}^{+} : \inspect(e, \trip_{i}^{+}, \tau) \ne \err \} .
\]
Then, \(\add_{i + 1}\) is set to \(\emptyset\). Since \(\add_{i + 1}\) is already \(\emptyset\), and \((\elem_{i + 1} \cap \setd) \subseteq (\elem_i \cap \setd) = \emptyset\), \(\elem_{i + 1}\) remains unchanged in \(\construct(i + 1)\). Hence, we have shown that
\[
\elem_{i + 1}^{+} = \{ e \in \elem_{i}^{+} = \celem_{i}^{+} : \inspect(e, \trip_{i}^{+}, \tau) \ne \err \}
\]
Moreover, by fact \ref{fact5}, we have 
\( \celem_{i + 1}^{+} = \elem_{i + 1}^{+}, \) thus proving the claim in this case.

Now, we consider the case where \(\construct(i)\) is not invoked.

Note that this assumption does not specify whether \(\construct(i + 1)\) is invoked.  
Let \( Y^{/} \) denote the state of the variable \( Y \) when we reach line~\ref{line:insert_checkforconstruct} in the case of executing \(\ins(v)\), or line~\ref{line:delete_cond} in the case of executing \(\del(v)\), with  \(\ell = i\).

We know that by that point, no change is made to \(\trip_{\ell}\) for \(\ell \leq i\), and all of \(\elem_{i}, \add_{i},\) and consequently \(\celem_{i}\) are already finalized; that is, 
\[
\elem_{i}^{+} = \elem_{i}^{/}\quad , \quad 
\add_{i}^{+} = \add_{i}^{/} \quad \text{and} \quad 
\celem_{i}^{+} = \celem_{i}^{/}.
\]

We also know that \(\elem_{i+1}\) and \(\add_{i+1}\) are already finalized, and that \(\celem_{i + 1}^{+} = \celem_{i + 1}^{/}\) unless \(\construct(i + 1)\) is invoked.  
Even though \(\elem_{i + 1}^{/}\) and \(\add_{i + 1}^{/}\) may then differ from \(\elem_{i + 1}^{+}\) and \(\add_{i + 1}^{+}\), by Fact~\ref{fact5}, we still have
\[
\celem_{i + 1}^{+} = \celem_{i + 1}^{/}.
\]
Therefore, regardless of the case, it suffices to show that
\begin{align}
\label{eq:pastnow}
    \celem_{i + 1}^{/} = \{ e \in \celem_{i}^{/} : \inspect(e, \trip_{i}^{-}, \tau) \neq \err \}, \end{align} given the assumption that
\begin{align}
\label{eq:pastassump}
\celem_{i + 1}^{-} = \{ e \in \celem_{i}^{-} : \inspect(e, \trip_{i}^{-}, \tau) \neq \err \}.
\end{align}

We prove this for \(\ins\) and \(\del\) separately.

1. During \(\ins(v)\), until we reach line~\ref{line:insert_checkforconstruct} with \(\ell = i\), 
the element \(v\) is removed from \(\setd\), and no change is made to \(\elem_{\ell}\) or \(\trip_{\ell - 1}\). 
For each \(\ell \in [i]\), the procedure checks whether \(v\) passes \(\inspect(v, \trip_{\ell - 1}, \tau)\).  
If it does, \(v\) is added to \(\add_{\ell}\).

Given these changes, we know that the pairs \(\celem_{i + 1}^{/}\) and \(\celem_{i + 1}^{-}\), and \(\celem_{i}^{/}\) and \(\celem_{i}^{-}\), may only differ at \(v\). 
Hence, to show that equation~\eqref{eq:pastnow} holds, we only need to focus on \(v\) and show that 
\begin{align}
\label{eq:ifonlyif}
v \in \celem_{i + 1}^{/} \text{ if and only if } 
v \in \celem_{i}^{/}, \text{ and } 
\inspect(v, \trip_{i}^{-}, \tau) \neq \err. 
\end{align}

We have that if \(v \in \celem_{i + 1}^{/}\), then \(v \in \elem_{i + 1}^{/} \cup \add_{i + 1}^{/} \). It follows that \(v \in \elem_{i + 1}^{-}\) or \(v \add_{i + 1}^{/}\). From both, we can infer that
\(\inspect(v, \trip_{i}^{-}, \tau) \neq \err.\)
It also follows that \(\inspect(v, \trip_{i - 1}^{-}, \tau) \neq \err.\) %todo by lemma
Hence, \(v\) was already in \(\elem_{i}^{-}\) or now is in 
\(\add_{i}^{/}\). Thus, \(v \in \celem_{i}^{/}.\)

Moreover, we have that if \(\inspect(v, \trip_{i}^{-}, \tau) \neq \err\), then \(v \in \add_{i + 1}^{/} \) if \(v \notin \elem_{i + 1}^{-}\). Therefore, \(v \in \add_{i + 1}^{/}\) or \(v \in \elem_{i + 1}^{/}\). Hence, the statement \eqref{eq:ifonlyif} holds. 

2. During \(\del(v)\), we only add \(v\) to \(\setd\) until we reach line \ref{line:delete_cond} with \(\ell = i\). Hence, we have 
\(\celem_{i}^{/} = \celem_{i}^{-} \setminus \{v\}\) and \(\celem_{i + 1}^{/} = \celem_{i + 1}^{-} \setminus \{v\}\). Therefore, the assumption \eqref{eq:pastassump} clearly implies equation \eqref{eq:pastnow}. 

This proves the claim in the case where \(\construct(i)\) is not invoked, thus completing the entire proof. 
\end{proof}

The next lemma establishes $T$ as the number of non-empty layers in our data structure. 
\begin{lemma}
\label{lm:inv:finish}
The following holds: 
\(\elem_{T + 1} = \emptyset\), and if \(T > 0\) then  \(\elem_{T} \neq \emptyset\). 
\end{lemma}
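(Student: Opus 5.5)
We argue by induction over the sequence of updates, in the same style as Lemma~\ref{lm:invarFilter}: we show the invariant holds before any update and is preserved by every call to \(\ins\) and \(\del\). Initially \(T=0\) and \(\elem_1=\emptyset\) by \(\init\), so the claim holds. For the inductive step, we separate updates that invoke \(\construct\) at some point from those that do not.

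\textbf{Updates that invoke \(\construct\).} The recursion \(\construct(\ell)\to\construct(\ell+1)\to\cdots\) terminates exactly at the first index \(\ell'\ge\ell\) for which the freshly computed \(\elem_{\ell'}\) is empty. At that point it sets \(T\gets\ell'-1\).

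We first show \(\elem_{T+1}=\emptyset\). This is immediate: \(\elem_{\ell'}=\emptyset\) and no later code touches it.

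We then show \(\elem_T\neq\emptyset\) when \(T>0\). If \(\ell'>\ell\), then layer \(\ell'-1=T\) went through the full body of \(\construct\) without returning early, so \(\elem_T\neq\emptyset\). If \(\ell'=\ell\), then \(T=\ell-1\). Here we need \(\elem_{\ell-1}\neq\emptyset\) whenever \(\ell-1\ge1\); this is the subtle case. We handle it by noting that \(\construct(\ell)\) is only triggered from the loops in \(\ins\) (\(\ell\le T^-+1\)) or \(\del\) (\(\ell\le T^-\)). Layers below \(\ell\) are not modified in this update. So \(\elem_{\ell-1}\) is the pre-update set, and since \(\ell-1\le T^-\), the induction hypothesis applied to layers \(1,\dots,T^-\) gives nonemptiness. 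This needs a strengthened hypothesis: \(\elem_j\neq\emptyset\) for \emph{all} \(j\in[T]\), not just \(j=T\). That is easy to carry along, since each \(\construct\) pass leaves nonempty \(\elem_j\) for every \(j<\ell'\) that it rebuilt.

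\textbf{Updates that invoke no \(\construct\).} Then \(T\) and every \(\elem_j\) are unchanged: \(\ins\) only touches \(\add_\ell\) and \(\setd\), and \(\del\) only touches \(\setd\). The invariant, in its strengthened form, is inherited directly.

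\textbf{Main obstacle.} The main obstacle is exactly this bookkeeping. We must verify that \(\elem_{T+1}\) is not altered by \(\ins\) through \(\add_{T+1}\). It is not, since \(\ins\) only modifies \(\add\), never \(\elem\). We must also verify that \(\elem_{T+1}\) was set to \(\emptyset\) the last time the structure was rebuilt at that depth. In particular, if \(T\) decreases, we must check that the stale higher layers \(\elem_{j}\), \(j>T+1\), do not matter to the statement. They do not, because the statement only concerns \(\elem_{T}\) and \(\elem_{T+1}\).
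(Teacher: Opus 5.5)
Your proof is correct. Its skeleton matches the paper's: the \(\construct\) chain stops at the first layer whose rebuilt \(\elem\) is empty and sets \(T\) to one less than that index. Updates that trigger no \(\construct\) leave \(T\) and every \(\elem_j\) unchanged. The paper phrases this as examining the last update that triggered \(\construct\), which is equivalent to your induction over updates.

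The difference is how you justify that the layers below the triggering index \(\ell\) are nonempty. The paper argues that the trigger conditions must have failed for every \(j<\ell\). You instead carry the strengthened invariant \(\elem_j\neq\emptyset\) for all \(j\in[T]\), and note that layers below \(\ell\) are untouched and satisfy \(\ell-1\le T^-\).

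Your route is the more robust one. A failed insertion test \(\abs{\add_j}\le\epsIns\abs{\elem_j}\) is perfectly consistent with \(\elem_j=\add_j=\emptyset\). So the paper's justification tacitly relies on exactly the invariant you make explicit, and the paper in fact asserts that invariant at the end of its proof: \(\elem_j\neq\emptyset\) for all \(1\le j\le T\).
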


\begin{proof}
Consider the entire execution of the algorithm. The statement holds at the beginning, as initially \( T = 0 \) and \( \elem_{1} = \emptyset \).

Both update procedures, \(\ins(e)\) and \(\del(e)\), may change \( T \) and \( \elem_{T + 1} \) only through a call to the procedure \(\construct\).
Consider the last update that triggers a call to \(\construct\), and let \( i \) denote its corresponding layer number, i.e., \(\ins(e)\) or \(\del(e)\) has called \(\construct(i)\).
Note that if no such update occurs, then \( T \) and \( \elem_{1} \) have not changed since the beginning.

Invocation of \(\construct(i)\) implies that the condition in line \ref{line:insert_checkforconstruct} or \ref{line:delete_cond} did not hold for $1 \le j < i $, which guarantees \(\elem_{j} \neq \emptyset\) for all $1 \leq j < i$. 
Each call to \(\construct(\ell)\) is followed by a call to \(\construct(\ell + 1)\), continuing until reaching \(\construct(h)\) (\( h \geq i \)) with \( \elem_{h} = \emptyset \).
During \(\construct(h)\), the value of \( T \) is updated to \( h - 1 \), ensuring that \( \elem_{T + 1} = \emptyset \), while \( \elem_{j} \neq \emptyset \) for all \(1 \leq j \leq T\).
By our selection of the last update that triggers a call to \(\construct\), we conclude that no further changes have been made to \( T \) or \( \elem_{T + 1} \) thereafter.
\end{proof}

This final lemma characterizes the structural properties of our data structure imposed by the reconstruction conditions.
\begin{lemma}
\label{lm:reconstruction_condition}
The following hold:
\begin{align*}
    &|\buck_i \cap \setd| \leq \epsDel |\buck_i|, &\text{For all \( i \in [T] \),}\\
    &|\add_i| \leq \epsIns |\elem_i|.& \text{For all \( i \in [T + 1]. \)}
\end{align*}
\end{lemma}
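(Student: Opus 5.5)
We argue by induction over the sequence of updates, in the same style as Lemmas~\ref{lm:invarFilter} and~\ref{lm:inv:finish}. Both inequalities hold initially, since $T = 0$, $\add_1 = \emptyset$ and $\elem_1 = \emptyset$. So it suffices to show that if they hold for the pre-update state $(\cdot)^{-}$, they also hold after a single call to $\ins(v)$ or $\del(v)$.

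The key observation is that a layer rebuilt by \construct{} satisfies both inequalities trivially. After \construct$(i)$ has run, $\add_i = \emptyset$. Also $\elem_i \cap \setd = \emptyset$, because $\elem_i$ is reset to $(\elem_i \cup \add_i) \setminus \setd$. By line~\ref{ln:filter}, $\elem_{i+1} \subseteq \elem_i$ is then chosen with $\add_{i+1} = \emptyset$, and $\buck_i \subseteq \elem_i$. Hence $|\buck_i \cap \setd| = 0$ and $|\add_i| = 0$ for every layer rebuilt during the update. That includes the final layer $T^{+}+1$, for which $\add_{T^{+}+1} = \emptyset$.

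It remains to handle layers that are not rebuilt. Let $\construct(h)$ be the first reconstruction triggered during the update, or set $h = \infty$ if none is triggered. Layers $i \ge h$ are all rebuilt, so they are covered by the observation above. For layers $i < h$, the values of $\elem_i$ and $\buck_i$ are unchanged from before the update. The condition at line~\ref{line:insert_checkforconstruct} (for $\ins$) or line~\ref{line:delete_cond} (for $\del$) was evaluated at $\ell = i$ and failed.

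\begin{itemize}
\item \emph{Insertion.} Failure at line~\ref{line:insert_checkforconstruct} gives $|\add_i^{+}| \le \epsIns |\elem_i^{+}|$ directly. The set $\setd$ only shrinks, so $|\buck_i \cap \setd|$ does not increase and the deletion inequality carries over from the induction hypothesis.
\item \emph{Deletion.} Here $\add_i$ is unchanged, so the insertion inequality carries over. For the deletion inequality there is an off-by-boundary issue. Line~\ref{line:delete_cond} uses $\ge$, so its failure yields the strict bound $|\buck_i \cap \setd| < \epsDel |\buck_i|$, which implies the claimed $\le$.
\end{itemize}

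One subtlety concerns $T$ in the insertion loop, which runs over $\ell \le T+1$. If no rebuild occurs, the layer $T+1$ condition is checked as well, so the bound for $\add_{T+1}$ is covered too. In the deletion case $\add_{T+1}$ is untouched and the induction hypothesis applies.

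The main obstacle is the bookkeeping needed to establish that layers $i < h$ are truly untouched. Specifically, one must confirm that neither $\buck_i$ nor $\elem_i$ is modified before the loop reaches layer $i$, and that $\add_i$ changes only through the single possible addition of $v$ made before the check at layer $i$. Once this is in place, the loop conditions directly give the bounds, since each check is performed after that layer's own modification and later iterations do not alter it.
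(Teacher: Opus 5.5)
Your proposal is correct and follows essentially the same argument as the paper: a call to \construct$(i)$ empties $\add_i$ and forces $\buck_i \cap \setd = \emptyset$, and every later update either reaches the layer-$i$ check and fails it, or only shrinks or leaves unchanged the relevant quantity (insertions only shrink $\setd$, deletions leave $\add_i$ and $\elem_i$ untouched). The paper fixes a layer and argues forward from its most recent \construct$(i)$ call rather than inducting over updates, but the content is the same, and your explicit treatment of $\setd$ shrinking on insertions (which the paper glosses as insertions not modifying $\setd$) and of layer $T+1$ is slightly more careful.
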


\begin{proof}
Consider a fixed \( i \in [T] \).
Consider the most recent call to \(\construct(i)\). Immediately after its execution, we have \(\buck_i \cap D = \emptyset\). This holds because \(\elem_i\) was set to \((\elem_i \cup \add_i) \setminus \setd\), and \(\buck_i \subseteq \elem_i\).

Since any call to \(\construct(\ell)\) with \(\ell < i\) also triggers a call to \(\construct(i)\), our selection of the most recent call to \(\construct(i)\) implies that during all subsequent executions of \(\del(e)\), we reached line~\ref{line:delete_cond} with \(\ell = i\) without invoking \(\construct(i)\), meaning that the first condition was satisfied.
We also know that no subsequent \(\ins(e)\) could be the first update to violate condition~1, as insertions do not modify \(\buck_i\) or \(\setd\).

We make the same argument for insertions, but with a fixed  \( i \in [T + 1] \).
We consider the most recent call to \(\construct(i)\). Immediately after its execution, we have \(\add_i = \emptyset\).
During each subsequent insertion, we reached the line~\ref{line:insert_checkforconstruct} with \(\ell = i\), and \(\construct(i)\) was not invoked, meaning that the second condition was satisfied.
Moreover, no subsequent \(\del(v)\) could be the first update to violate condition~2, as deletions do not modify \(\elem_i\) or \(\add_i\).

Therefore, both conditions still hold. 
\end{proof}

\subsection{Approximation}
\label{sec:approx:first}
% This section initiates the analysis of the approximation ratio guaranteed by our algorithm. 
% To this end, we formalize the contribution of each element to the overall objective by introducing a weight function that quantifies the marginal significance of elements according to their state in the data structure.
This section initiates the analysis of the approximation ratio guaranteed by our algorithm. (Note that parts of this analysis draw inspiration from \cite{chakrabarti2015submodular}.) We begin by formalizing the contribution of each element to the overall objective by introducing a weight function that quantifies the marginal significance of elements according to their state in the data structure.
\begin{definition}
\label{def:weightfunction}
We define the weight function $w: \mathcal{N} \to \mathbb{R}$ as follows:
\begin{itemize}
  \item For $e \in \uni_T$, set $w(e) = \delt_T(e)$.
  \item For $e \notin \uni_T$, set $w(e) = \relcont{e}{\uni_T}$.
\end{itemize}
We also extend the definition of the weight function to sets by defining the total weight of a subset $S \subseteq \mathcal{N}$ as
$
w(S) := \sum_{e \in S} w(e),
$
with the convention that $w(\emptyset) := 0$ \footnote{Note that this definition is based on the simplifying assumption that \( f(\emptyset) = 0 \). Without this assumption, one should define \( w(S) := f(\emptyset) + \sum_{e \in S} w(e) \), with \( w(\emptyset) := f(\emptyset) \), to ensure the correctness of the lemmas.}.
\end{definition}

The following lemma demonstrates that when an element \(e\) gets accepted into, or already belongs to, the union set \(\uni\) during the execution of the \(\expand\) procedure, its associated marginal value \(\delt(e)\) remains fixed.
\begin{lemma}
\label{lm:w_withinlevel}
Consider the execution of \(\expand(\samp_{\ell}, \trip_{\ell, 0}, \tau_{\ell})\) with \(\tau_{\ell} > 0\). Let \( e_1, e_2, \ldots, e_{\abs{\samp_{\ell}}} \) denote the elements of \(\samp_{\ell}\) in their processing order. For any \(\eta \in [\abs{\samp_{\ell}}]\), let \(\trip_{\ell, \eta}\) denote the state of \(\trip'\) immediately after processing \(e_{\eta}\). If \(e \in \uni_{\ell, i}\) for some \(i \in [0, \abs{\samp_{\ell}}]\), then
\[
\delt_{\ell, i}(e) = \delt_{\ell, \abs{\samp_{\ell}}}(e).
\]

Informally, once \(e \in \uni\), the \(\expand\) procedure preserves the value of \(\delt(e)\) within that layer.
\end{lemma}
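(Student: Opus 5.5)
The plan is an induction on the processing step $\eta$, from $i$ up to $\abs{\samp_{\ell}}$. The key step is to show that a single call to $\adjust$ changes $\delt$ at most at the element it processes, and only when that element is not yet in $\uni$. The statement then follows once we know that $\uni$ only grows along the expansion.

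First, I will read off from $\adjust$ how one step changes the triplet. Going from $\trip_{\ell,\eta-1}$ to $\trip_{\ell,\eta}$, $\adjust$ calls $\inspect(e_\eta,\trip_{\ell,\eta-1},\tau_\ell)$. If this returns $\err$, the triplet is unchanged. Otherwise the only modifications are:
\begin{itemize}
\item $\delt(e_\eta)\gets \relcont{e_\eta}{\uni_{\ell,\eta-1}}$,
\item $\indep$ is replaced by $\indep\cup\cbr{e_\eta}\setminus X$,
\item $\uni_{\ell,\eta}=\uni_{\ell,\eta-1}\cup\cbr{e_\eta}$.
\end{itemize}
Two consequences follow. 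The union set is monotone, $\uni_{\ell,\eta-1}\subseteq\uni_{\ell,\eta}$. Also, $\delt_{\ell,\eta}(x)=\delt_{\ell,\eta-1}(x)$ for every $x\neq e_\eta$, since the evicted set $X$ leaves $\uni$ and $\delt$ untouched.

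Second, I will handle the case $x=e_\eta$ with $x\in\uni_{\ell,\eta-1}$. Here $\relcont{e_\eta}{\uni_{\ell,\eta-1}}=f(\uni_{\ell,\eta-1}\cup\cbr{e_\eta})-f(\uni_{\ell,\eta-1})=0$. Because $\tau_\ell>0$, the first test in $\inspect$ fails, so it returns $\err$ and $\adjust$ changes nothing. This is the only place the hypothesis $\tau_\ell>0$ enters. It is also the step that needs care: the same element could in principle occur again in $\samp_\ell$, or already lie in $\uni_{\ell,0}$ from an earlier layer, and without this observation $\delt(e)$ would be overwritten with $0$. Together with the first step, this gives the one-step claim: if $x\in\uni_{\ell,\eta-1}$, then $x\in\uni_{\ell,\eta}$ and $\delt_{\ell,\eta}(x)=\delt_{\ell,\eta-1}(x)$.

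Finally, I will induct on $\eta$ from $i+1$ to $\abs{\samp_{\ell}}$, applying the one-step claim at each step starting from $e\in\uni_{\ell,i}$. Monotonicity of $\uni$ keeps the hypothesis $e\in\uni_{\ell,\eta-1}$ true at every step, so $\delt_{\ell,\eta}(e)=\delt_{\ell,i}(e)$ for all $\eta\ge i$. Taking $\eta=\abs{\samp_{\ell}}$ gives $\delt_{\ell,i}(e)=\delt_{\ell,\abs{\samp_{\ell}}}(e)$.
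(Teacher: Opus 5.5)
Your proof is correct and follows essentially the same argument as the paper: \adjust{} can only change $\delt$ at the element it processes, and if that element already lies in $\uni$ its marginal gain is $0<\tau_\ell$, so \inspect{} returns \err{} and nothing changes. The difference is only in presentation: the paper argues by contradiction from a step $j>i$ with $\delt_{\ell,j}(e)\ne\delt_{\ell,j-1}(e)$, whereas you use forward induction and state the monotonicity of $\uni$ explicitly, which the paper uses implicitly.
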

\begin{proof}
    First, observe that \(\expand(\samp_{\ell}, \trip_{\ell, 0}, \tau_{\ell})\) initializes \(\trip'\) as a copy of \(\trip_{\ell, 0}\), and that \(\trip_{\ell, \eta}\) denotes the state of \(\trip'\) immediately after processing \(e_{\eta}\). Therefore, for all \(\eta \in [\abs{\samp_{\ell}}]\), we have
    \begin{align}
        \label{thisequationhere}
        \trip_{\ell, \eta} = \adjust(e_{\eta}, \trip_{\ell, \eta - 1}, \tau_{\ell}).
    \end{align}

    Suppose, for the sake of contradiction, that \(\delt_{\ell, i}(e) \ne \delt_{\ell, \abs{\samp_{\ell}}}(e)\). Then there must exist some index \(j\) such that \(i < j \le \abs{\samp_{\ell}}\) and \(\delt_{\ell, j}(e) \ne \delt_{\ell, j - 1}(e)\).

    By \eqref{thisequationhere}, we have
    \[
    \trip_{\ell, j} = \adjust(e_j, \trip_{\ell, j - 1}, \tau_{\ell}).
    \]
    The procedure \(\adjust(e_j, \cdot, \tau_{\ell})\) may only change the value of \(\delt'\) at the element \(e_j\). Therefore, for \(\delt_{\ell, j}(e) \ne \delt_{\ell, j - 1}(e)\) to hold, it must be the case that \(e_j = e\).

    One may argue that given that \(e \in \uni_{\ell, i}\) and \(i < j\) this may never happen. However, we prove the Lemma without any further assumption on $\samp_{\ell}$.% and $\trip_{\ell, 0}$.

    We show that even if \(e_j = e\), the call \(\adjust(e, \trip_{\ell, j - 1}, \tau_{\ell})\) does not change the value of \(\delt'\) at element \(e\). This is because such an update requires that \(\inspect(e, \trip_{\ell, j - 1}, \tau_{\ell}) \ne \err\), which in turn requires:
\[
\relcont{e}{\uni_{\ell, j - 1}} \ge \tau_{\ell} > 0.
\]

However, \(e \in \uni_{\ell, i}\) and \(i < j\), implies \(\uni_{\ell, i} \subseteq \uni_{\ell, j - 1}\). Therefore, \(e \in \uni_{\ell, j - 1}\), and by definition, \(\relcont{e}{\uni_{\ell, j - 1}} = 0\).

This contradiction implies that \(\delt_{\ell, j}(e) = \delt_{\ell, j - 1}(e)\) for all \(j > i\), and hence \(\delt_{\ell, i}(e) = \delt_{\ell, \abs{\samp_{\ell}}}(e)\), as claimed.
\end{proof}

We now use the above lemma to relate the weight assigned to each element to its recorded \(\delt\) value across the different layers of the data structure, illustrating their alignment.
\begin{lemma}
\label{w=delt}
    If $e \in \uni_{i}$, then \(w(e) = \delt_i({e}) \). 
\end{lemma}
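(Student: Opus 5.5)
Since $e \in \uni_i$ and the $\uni$ sets only grow across layers, $e \in \uni_T$, so by Definition~\ref{def:weightfunction} we have $w(e) = \delt_T(e)$. It therefore suffices to show that $\delt_j(e) = \delt_{j-1}(e)$ for every $j$ with $i < j \le T$, whenever $e \in \uni_{j-1}$. The lemma then follows by induction on $j$ from $i$ up to $T$, chaining the equalities $\delt_i(e) = \delt_{i+1}(e) = \cdots = \delt_T(e)$.

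First, fix $j \in (i, T]$. By Lemma~\ref{lm:invar:expand}, $\trip_j = \expand(\samp_j, \trip_{j-1}, \tau_j)$. Using the notation of Lemma~\ref{lm:w_withinlevel}, the initial state of this run is $\trip_{j,0} = \trip_{j-1}$ and the final state is $\trip_{j,\abs{\samp_j}} = \trip_j$. We need $\tau_j > 0$. This holds because $\tau_j = (1+\epsTau)^{j_j}\tau$ and every instance uses $\tau = \invratio \cdot 2^{i'} > 0$.

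Second, we show by a small side induction that $e \in \uni_{j-1}$ for all $j > i$, i.e.\ $\uni_{j-1} \subseteq \uni_j$. The procedure $\adjust$ only ever adds elements to $\uni$, so $\uni_{j,0} \subseteq \uni_{j,\eta}$ for every $\eta$. Hence $\uni_{j-1} \subseteq \uni_j$, and in particular $e \in \uni_{j,0}$.

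Third, we apply Lemma~\ref{lm:w_withinlevel} with index $0$ to get $\delt_{j,0}(e) = \delt_{j,\abs{\samp_j}}(e)$, that is, $\delt_{j-1}(e) = \delt_j(e)$. Combining this over all $j$ with $i < j \le T$ gives $\delt_i(e) = \delt_T(e) = w(e)$.

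The only delicate point is justifying $\tau_j > 0$ and the monotonicity $\uni_{j-1} \subseteq \uni_j$ at the current time. Both follow directly from Lemma~\ref{lm:invar:expand} and the code of $\adjust$, so I expect no real obstacle. The case $i = T$ is immediate from the definition of $w$.
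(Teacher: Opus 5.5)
Your proposal is correct and follows essentially the same argument as the paper. Both use Lemma~\ref{lm:invar:expand} to view layer $j$ as a \textsc{Expand} run started from $\trip_{j-1}$, then apply Lemma~\ref{lm:w_withinlevel} at index $0$ (valid because $e \in \uni_{j-1}$ by the monotonicity the paper cites as Lemma~\ref{lm:unigrws}) to get $\delt_{j-1}(e)=\delt_j(e)$, and chain this up to $T$, where $w(e)=\delt_T(e)$ by definition. Your explicit check that $\tau_j>0$ fills in a hypothesis of Lemma~\ref{lm:w_withinlevel} that the paper leaves implicit.
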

\begin{proof}
We show that given $e \in \uni_i$, we have $\delt_T(e) = \delt_i(e)$. Fix any index $h \in (i, T]$. By Lemma~\ref{lm:invar:expand}, we have:
\begin{align}
\label{eq:ex}
\trip_{h} = \expand(\samp_{h}, \trip_{h-1}, \tau_{h}).
\end{align}

Consider the execution of \(\expand(\samp_{h}, \trip_{h-1}, \tau_{h})\) as described in Lemma~\ref{lm:w_withinlevel}, by setting:
\[
\samp_{\ell} = \samp_{h}, \quad \trip_{\ell,0} = \trip_{h,0} = \trip_{h-1}, \quad \tau_{\ell} = \tau_{h}.
\]

Let \( e_1, e_2, \ldots, e_{\abs{\samp_{h}}} \) denote the elements of \(\samp_{h}\) in their processing order. For any $j \in [\abs{\samp_{h}}]$, let \(\trip_{h,j}\) denote the state of \(\trip'\) immediately after processing \(e_j\).  Lemma~\ref{lm:w_withinlevel} states that if \(e \in \uni_{h,q}\) for some $q \in [0, \abs{\samp_{h}}]$, then:
\[
\delt_{h,q}(e) = \delt_{h,\abs{\samp_{h}}}(e) = \delt_h(e).
\]

Since $h > i$, the condition \(e \in \uni_{h, q}\) holds for $q = 0$ as 
$\uni_{h, 0} = \uni_{h - 1}$, and by Lemma \ref{lm:unigrws} $e \in \uni_{i}$ implies $e \in \uni_{h - 1}$. Thus, by Lemma \ref{lm:w_withinlevel} we have: 
\[\delt_{h - 1}(e) = \delt_{h}(e).\]

By induction over $h$ from $i + 1$ to $T$, we conclude that $\delt_i(e) = \delt_T(e)$.

By Lemma~\ref{lm:unigrws}, we have $\uni_i \subseteq \uni_T$. Hence, $e \in \uni_i$ implies $e \in \uni_T$. By Definition~\ref{def:weightfunction}, the weight function for an element $e \in \uni_T$ is defined as \(w(e) = \delt_T(e)\). We also have established that given $e \in \uni_i$, $\delt_T(e) = \delt_i(e)$. Combined together, it follows that \(w(e) = \delt_i(e)\), as claimed.
\end{proof}

% \begin{fact}
%     If $e \notin \elem_{i}$, then \(w(e) \leq \)
% \end{fact}

We now proceed to derive bounds between the weight function or the objective function values corresponding to various sets maintained in our data structure and the optimal solution.

%The following lemma bounds the weight of the union set in terms of the weight of the independent set. 
\begin{lemma}
    \label{lm:wU<2wI}
    The following holds:
    $w(\uni_{T}) \le 2w(\indep_{T})$.
\end{lemma}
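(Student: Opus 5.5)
We prove the statement by a charging argument over the entire sequence of \(\adjust\) calls that build \(\trip_T\) from \(\trip_0\). Concatenating the calls \(\expand(\samp_h,\trip_{h-1},\tau_h)\) for \(h=1,\dots,T\) (Lemma~\ref{lm:invar:expand}) gives a single linear sequence of states \(\trip^{(0)}=\trip_0,\trip^{(1)},\dots,\trip^{(m)}=\trip_T\), where each step is one \(\adjust\) call. We then show, by induction along this sequence, the stronger invariant
\[
w(\uni^{(j)}) + w(R^{(j)}) \le 2\,w(\indep^{(j)}),
\]
where \(R^{(j)} = \uni^{(j)}\setminus \indep^{(j)}\) is the set of evicted elements. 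Since \(w\) is nonnegative (every \(\delt\) value is a positive marginal gain at least \(\tau_h>0\)), this invariant implies the lemma. The evaluation of \(w\) throughout uses Lemma~\ref{w=delt}: every element that enters \(\uni\) keeps its \(\delt\) value forever, so for \(e\in\uni^{(j)}\) we have \(w(e)=\delt^{(j)}(e)\), and the weights of elements already present never change along the sequence.

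The base case follows from Lemma~\ref{lm:initialII'}: \(\indep_0=\uni_0=\emptyset\), so both sides are \(0\). For the inductive step, consider an \(\adjust(e,\cdot)\) call. If \(\inspect\) returns \(\err\), nothing changes. If it returns \(\emptyset\), then \(e\) is added to both \(\indep\) and \(\uni\), and it is new to \(\uni\) because \(\relcont{e}{\uni}\ge\tau>0\). The left side increases by \(w(e)\) and the right side by \(2w(e)\), so the invariant is preserved. If it returns \(\cbr{\hat e}\), then \(e\) enters \(\uni\) and \(\indep\), and \(\hat e\) moves from \(\indep\) to \(R\). The left side increases by \(w(e)+w(\hat e)\), and the right side changes by \(2w(e)-2w(\hat e)\). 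The requirement is therefore \(3w(\hat e)\le w(e)\), but the test in \(\inspect\) only gives \(2\delt(\hat e)\le \relcont{e}{\uni}=\delt(e)=w(e)\).

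This gap is the main obstacle, so the potential must be weighted differently. We instead use the invariant \(w(\uni^{(j)})\le 2w(\indep^{(j)})\) directly. In the swap case the left side grows by \(w(e)\) and the right side by \(2w(e)-2w(\hat e)\ge 2w(e)-w(e)=w(e)\), so the inductive step goes through exactly with the factor \(2\) from \(\inspect\). The \(\emptyset\) case gives \(w(e)\le 2w(e)\), and the \(\err\) case is trivial. This simpler invariant suffices, and it is exactly the lemma at \(j=m\).

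The only remaining care points are the following. First, \(\hat e\) must lie in \(\uni\) so that \(w(\hat e)=\delt(\hat e)\) by Lemma~\ref{w=delt}. This holds because \(\indep\subseteq\uni\) at every step, which is a trivial induction since elements are added to both sets simultaneously and are only ever removed from \(\indep\). Second, weights of previously present elements must not change across the steps. This is guaranteed by Lemma~\ref{lm:w_withinlevel} within a single layer and by Lemma~\ref{w=delt} across layers.
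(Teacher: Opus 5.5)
Your final argument is correct and is essentially the paper's proof. The paper inducts over layers and, within each layer, over the $\adjust$ calls of $\expand$, closing the swap case with $w(e)\ge 2w(\hat e)$ from the $\inspect$ test plus Lemma~\ref{w=delt} and Lemma~\ref{lm:w_withinlevel}; you do the same after flattening all layers into one sequence. In a clean write-up, drop the opening claim about the invariant $w(\uni^{(j)})+w(R^{(j)})\le 2w(\indep^{(j)})$, since, as you observe, it does not follow and can in fact fail after repeated swaps.
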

\begin{proof}
We prove by induction on $i$ that $w(\uni_i) \le 2w(\indep_i)$ for all $i \in [0, T]$.

For the base case ($i = 0$), we know by Lemma~\ref{lm:initialII'} that $\indep_0 = \uni_0 = \emptyset$.  
This, by Definition~\ref{def:weightfunction}, implies that $w(\uni_0) = w(\indep_0) = w(\emptyset)$, so the inequality holds trivially.

Now, for the inductive step, we show that $w(\uni_i) \le 2w(\indep_i)$ using the inductive hypothesis, which assumes the claim holds for $i - 1$, i.e., $w(\uni_{i-1}) \le 2w(\indep_{i-1})$. To this end, it suffices to show that
\[
w(\uni_i) - w(\uni_{i-1}) \le 2(w(\indep_i) - w(\indep_{i-1})).
\]

By Lemma~\ref{lm:invar:expand}, we have
\[
\trip_{i} = \expand(\samp_{i}, \trip_{i - 1}, \tau_{i}),
\]
where $\expand$ constructs $\trip_{i}$ by iteratively modifying a copy of $\trip_{i - 1}$ via calls to $\adjust$ for each $e \in \samp_{i}$.

Within $\adjust$, each element is sent through inspection, which can result in one of three possible outcomes.

If an element $e$ fails the test in $\inspect$, it is not added to either $\uni$ or $\indep$, and thus contributes nothing to either $w(\uni_i) - w(\uni_{i-1})$ or $w(\indep_i) - w(\indep_{i-1})$.  

Elements that pass the test in $\inspect$ and can be added to $\indep$ without any replacement are added to both $\uni$ and $\indep$, and thus contribute an equal amount of weight to both $w(\uni_i) - w(\uni_{i-1})$ and $w(\indep_i) - w(\indep_{i-1})$.  

Thus, it remains to examine the case where an element $e$ passes the $\inspect$ test and is added to both $\uni$ and $\indep$ while replacing another element $\hat{e}$ in $\indep$. 

It suffices to show that in such cases the following holds:
\[
w(e) \le 2 (w(e) - w(\hat{e})).
\]
% or equivalently,
% \(
% w(e) \ge 2w(\hat{e}). 
% \)

For such elements $e$ and $\hat{e}$, since both belong to $\uni_i$ and $\uni_i \subseteq \uni_T$ by Lemma \ref{lm:unigrws}, we have $e, \hat{e} \in \uni_T$. Hence, by Lemma \ref{w=delt}, it follows that $ w(e) = \delt_i({e})$ and $ w(\hat{e}) = \delt_i(\hat{e})$. Therefore, it suffices to show that \[\delt_i({e}) \ge 2\delt_i(\hat{e}).\]

We know that in such cases for the $\adjust$ procedure, during the $\inspect$ test, it is verified that $2\delt(\hat{e}) \le \relcont{e}{\uni}$.  Immediately after the inspection $\delt({e})$ is set to $\relcont{e}{\uni}$, and then $\indep$ and $\uni$ are modified accordingly. Once the $\adjust$ completes, the followings hold for the updated $\trip'$ within the $\expand$:
\[e, \hat{e} \in \uni', \quad \text{and} \quad  \delt'(e) \ge 2 \delt'(\hat{e}).\]

By Lemma \ref{lm:w_withinlevel}, we know that once an element is included in $\uni'$, its corresponding $\delt$ remains unchanged during the execution of $\expand$. Therefore, we have 
\[\delt_i({e}) \ge 2\delt_i(\hat{e}), \]
completing the proof.
\end{proof}

\begin{lemma} \label{lm:fU=wU}
The following holds: $ f(\uni_{T}) = w(\uni_{T})$.
\end{lemma}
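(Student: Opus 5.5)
We prove $f(\uni_T)=w(\uni_T)$ by telescoping: with $f(\emptyset)=0$ (as in the footnote to Definition~\ref{def:weightfunction}), we show that $f(\uni_i)-f(\uni_{i-1}) = w(\uni_i)-w(\uni_{i-1})$ for every layer $i\in[T]$. The base case is Lemma~\ref{lm:initialII'}: $\uni_0=\emptyset$, so $f(\uni_0)=0=w(\uni_0)$.

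For the inductive step, Lemma~\ref{lm:invar:expand} gives $\trip_i=\expand(\samp_i,\trip_{i-1},\tau_i)$. We use the notation of Lemma~\ref{lm:w_withinlevel}: $\trip_{i,\eta}$ is the state after processing the $\eta$-th element $e_\eta$ of $\samp_i$, and $\trip_{i,0}=\trip_{i-1}$. It suffices to show that each \adjust\ step changes $f(\uni')$ and $w(\uni')$ by the same amount. If \inspect\ returns \err, then $\uni'$ is unchanged and both differences are zero. Otherwise $\uni_{i,\eta}=\uni_{i,\eta-1}\cup\{e_\eta\}$, and the increase in $f$ is $\relcont{e_\eta}{\uni_{i,\eta-1}}$. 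Passing \inspect\ requires $\relcont{e_\eta}{\uni_{i,\eta-1}}\ge\tau_i>0$, so $e_\eta\notin\uni_{i,\eta-1}$ and the union really gains a new element. At this step \adjust\ sets $\delt_{i,\eta}(e_\eta)=\relcont{e_\eta}{\uni_{i,\eta-1}}$, so the increase in $w$ is $w(e_\eta)$. It remains to identify $w(e_\eta)$ with this recorded value. By Lemma~\ref{lm:w_withinlevel}, $\delt_i(e_\eta)=\delt_{i,\eta}(e_\eta)$. Since $e_\eta\in\uni_i$, Lemma~\ref{w=delt} gives $w(e_\eta)=\delt_i(e_\eta)$. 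Combining these, the $f$-increment equals the $w$-increment.

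Summing over $\eta$ yields $f(\uni_i)-f(\uni_{i-1})=\sum_{e\in\uni_i\setminus\uni_{i-1}}w(e)=w(\uni_i)-w(\uni_{i-1})$, and summing over $i$ gives the claim.

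The only delicate point is that $w$ is additive over distinct elements. We need every element added to $\uni$ to be genuinely new, which the positivity of the threshold guarantees, and we need $w$ on elements of $\uni_T$ to equal the $\delt$ value recorded when they entered, which Lemmas~\ref{lm:w_withinlevel} and~\ref{w=delt} supply. We also use that $\tau_i>0$, which holds since $\tau>0$.
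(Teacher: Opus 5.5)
Your proposal is correct and follows essentially the same route as the paper: induction over layers via Lemma~\ref{lm:invar:expand}, then telescoping over the individual \textsc{Adjust} steps within a layer. At each step you use $\tau_i>0$ to show the added element is new, and Lemmas~\ref{lm:w_withinlevel} and~\ref{w=delt} to identify $w(e_\eta)$ with the recorded marginal gain $\relcont{e_\eta}{\uni_{i,\eta-1}}$.
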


\begin{proof}
Similar to Lemma~\ref{lm:wU<2wI}, we use an induction on \( i \), and prove that \( w(\uni_i) = f(\uni_i) \) for all \( i \in [0, T] \).

For the base case ($i = 0$), we know by Lemma~\ref{lm:initialII'} that $\uni_0 = \emptyset$, and by Definition \ref{def:weightfunction}, we have $w(\emptyset) = f(\emptyset)$. Thus, the base case holds.

For the inductive step, we assume that the claim holds for \( i - 1 \), i.e., \( w(\uni_{i-1}) = f(\uni_{i-1}) \), and prove that \( w(\uni_i) = f(\uni_i) \). It suffices to show that:
\begin{align}
\label{eq:stepwf}
w(\uni_i) - w(\uni_{i-1}) = f(\uni_i) - f(\uni_{i-1}).
\end{align}

By Lemma~\ref{lm:invar:expand}, we have
\[
\trip_i = \expand(\samp_i, \trip_{i-1}, \tau_i). 
\]
The procedure $\expand(\samp_i, \trip_{i-1}, \tau_i)$ initializes \( \trip' \) as a copy of \( \trip_{i-1} \), then sequentially modifies it via calls to $\adjust$ for each \( e \in \samp_i \), and uses it to set the value of \( \trip_i \) at the end.
Let \( e_1, e_2, \ldots, e_{\abs{\samp_{i}}} \) denote the elements of \( \samp_i \) in the order they are processed.
For each \( j \in [\abs{\samp_{i}}] \), let \( \uni_{i, j} \) denote the state of \( \uni' \) immediately after processing \( e_j \). 
This inherently implies that \( \uni_{i, \abs{\samp_{i}}} \) = \( \uni_{i} \). We also use \( \uni_{i, 0} \) to denote \( \uni_{i - 1} \).

Then, the equality~\eqref{eq:stepwf}, which we aim to prove can be rewritten as:
\begin{align}
\label{eq:telescoping}
\sum_{j=1}^{|\samp_i|} \left( w(\uni_{i,j}) - w(\uni_{i,j-1}) \right)
= \sum_{j=1}^{|\samp_i|} \left( f(\uni_{i,j}) - f(\uni_{i,j-1}) \right).
\end{align}

Now, we consider the outcome of the call to $\adjust$ on each element \( e_j \):

\begin{itemize}
    \item If \( e_j \) fails the test in $\inspect$, then \( \uni_{i,j} = \uni_{i,j-1} \), so the terms corresponding to \( e_j \) on both sides of \eqref{eq:telescoping} are equal to zero.
    
    \item Assuming \( e_j \) passes the $\inspect$ test, the following hold:
    \begin{itemize}
        \item[(a)] \( \uni_{i,j} = \uni_{i,j-1} \cup \{e_j\} \),
        \item[(b)] \( e_j \notin \uni_{i,j-1} \).
    \end{itemize}
    Property (b) follows because if \( e_j \in \uni_{i,j-1} \), then \( \relcont{e_j}{\uni_{i,j-1}} \) would be zero, contradicting the fact that \( e_j \) passed the \textsc{inspect} test, which requires \( \relcont{e_j}{\uni_{i,j-1}} \ge \tau_i \).
    
    From Definition~\ref{def:weightfunction}, together with (a) and (b), we have:
    \begin{align}
        \label{eq:wdelt}
        w(\uni_{i,j}) - w(\uni_{i,j-1}) = w(e_j) = \delt_i(e_j),
    \end{align}
    where the second equality follows from Lemma~\ref{w=delt} and the fact that \( e_j \in \uni_i \).
    
    Additionally, by the definition of relative contribution and using (a), we have:
    \begin{align}
        \label{eq:frelcon}
        f(\uni_{i,j}) - f(\uni_{i,j-1}) = \relcont{e_j}{\uni_{i,j-1}}.
    \end{align}
    
    Immediately after the completion of the $\adjust$ procedure for \( e_j \), we have:
    \[
    \delt'(e_j) = \relcont{e_j}{\uni_{i,j-1}}.
    \]
    By Lemma~\ref{lm:w_withinlevel}, we know that throughout the execution of $\expand$, once an element is included in $\uni'$, its corresponding \(\delt\) value remains unchanged. Hence,
    \[
    \delt_i(e_j) = \relcont{e_j}{\uni_{i,j-1}}.
    \]
    Therefore, using this together with Equations~\eqref{eq:wdelt} and~\eqref{eq:frelcon}, we conclude:
    \[
    w(\uni_{i,j}) - w(\uni_{i,j-1}) = f(\uni_{i,j}) - f(\uni_{i,j-1}).
    \]
\end{itemize}

Since, regardless of the case, each pair of corresponding terms in~\eqref{eq:telescoping} is equal, the equality holds. This completes the inductive step and the proof.
\end{proof}

\begin{lemma}
\label{lm:f_opt_w}
Let $\optInd$ denote the optimal solution at the current time. The following holds:
\[
f(\optInd) \le 2w(\indep_T) + w(\optInd \setminus \uni_T).
\]
\end{lemma}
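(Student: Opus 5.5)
Split the optimum as $\optInd = (\optInd \cap \uni_T) \cup (\optInd \setminus \uni_T)$. First bound $f(\optInd)$ by $f(\optInd \cup \uni_T)$ using monotonicity. Then use submodularity to write
\[
f(\optInd \cup \uni_T) \le f(\uni_T) + \sum_{e \in \optInd \setminus \uni_T} \relcont{e}{\uni_T}.
\]
By Definition~\ref{def:weightfunction}, every $e \notin \uni_T$ has $w(e) = \relcont{e}{\uni_T}$. So the sum equals $w(\optInd \setminus \uni_T)$ exactly. It is nonnegative by monotonicity, which matters because $w$ is only a sum of marginals.

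For the first term, Lemma~\ref{lm:fU=wU} gives $f(\uni_T) = w(\uni_T)$, and Lemma~\ref{lm:wU<2wI} gives $w(\uni_T) \le 2w(\indep_T)$. Chaining these yields
\[
f(\optInd) \le f(\uni_T) + w(\optInd \setminus \uni_T) = w(\uni_T) + w(\optInd\setminus\uni_T) \le 2w(\indep_T) + w(\optInd \setminus \uni_T),
\]
which is the claim. If $f(\emptyset)\neq 0$, the footnote convention $w(\emptyset)=f(\emptyset)$ should absorb the constant. With $f(\emptyset)=0$ nothing extra is needed.

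The main point to verify is the submodularity step. Adding the elements of $\optInd\setminus\uni_T$ one at a time to $\uni_T$, each marginal gain is at most the gain with respect to $\uni_T$ alone, by diminishing returns. No matroid structure is used here. All the structural work (the exchange rule guaranteeing $\delt(e)\ge 2\delt(\hat e)$, and the telescoping identity for $f(\uni_T)$) has already been packaged into Lemmas~\ref{lm:wU<2wI} and~\ref{lm:fU=wU}. I therefore expect the proof to be short, with no real obstacle beyond applying the definition of $w$ correctly for elements outside $\uni_T$.
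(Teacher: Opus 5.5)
Your proposal is correct and follows the paper's proof essentially step for step: bound $f(\optInd)$ by $f(\uni_T)$ plus the marginals of $\optInd \setminus \uni_T$ with respect to $\uni_T$, which equal their weights by Definition~\ref{def:weightfunction}, and then apply Lemma~\ref{lm:fU=wU} and Lemma~\ref{lm:wU<2wI}. The paper attributes the first inequality to submodularity alone; you make the monotonicity step $f(\optInd) \le f(\optInd \cup \uni_T)$ explicit.
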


\begin{proof}
We begin by bounding the difference between \( f(\optInd) \) and \( f(\uni_T) \). By submodularity of \( f \), we have:
\[
f(\optInd) - f(\uni_T) \le \sum_{e \in \optInd \setminus \uni_T} \relcont{e}{\uni_T}.
\]

By Definition~\ref{def:weightfunction},  for any element $e$ that \( e \notin \uni_T \), the weight function is defined as \( w(e) = \relcont{e}{\uni_T} \). Therefore, we have: 
\[
f(\optInd) \le f(\uni_T) + \sum_{e \in \optInd \setminus \uni_T} w(e) = f(\uni_T) + w(\optInd \setminus \uni_T), %\le f(\uni_T) + w(\optInd),
\]
where the equality follows from Definition~\ref{def:weightfunction} for sets, which defines the total weight of a set of elements as the sum of their individual weights.

By Lemma~\ref{lm:fU=wU}, we have \( f(\uni_T) = w(\uni_T) \), and by Lemma~\ref{lm:wU<2wI}, we know \( w(\uni_T) \le 2w(\indep_T) \). Therefore, we have:
\[
f(\uni_T) \le 2w(\indep_T).
\]

Combining the above, we conclude that:
\[
f(\optInd) \le f(\uni_T) + w(\optInd 
\setminus \uni_T) \le 2w(\indep_T) + w(\optInd 
\setminus \uni_T),
\]
as desired.
\end{proof}

The following lemma shows how the existence of a blocking subset in the independent set is preserved under the execution of the \(\adjust\) procedure for another element. 
\begin{lemma}
\label{lm:circ->circ}
Consider arbitrary elements \( e_1 \) and \( e_2 \), and a set \( \trip \). Assume that 
\(\relcont{e_1}{\uni} \ge \tau\), but 
\(\inspect(e_1, \trip, \tau) = \err\). Let \( \trip' \) be the output of 
\(\adjust(e_2, \trip, \tau)\). Then, there exists a set \( N \subseteq \indep' \) such that \( N \cup \cbr{e_1} \) forms a circuit, and 
\(\relcont{e_1}{\uni'} < 2\delt'(e')\) for all \( e' \in N \).
\end{lemma}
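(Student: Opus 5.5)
Since $\relcont{e_1}{\uni}\ge\tau$ while $\inspect(e_1,\trip,\tau)=\err$, the procedure got past the independence check, so $\indep\cup\cbr{e_1}\notin\mI$. (We may assume $\indep\in\mI$, as holds for every triple the algorithm maintains.) By Lemma~\ref{lem:=1circuit}, $\indep\cup\cbr{e_1}$ contains a unique circuit $C$, and $e_1\in C$. Set $N_0=C\setminus\cbr{e_1}\subseteq\indep$. The candidates considered on line~\ref{line:matroid:promote:find_min} are exactly the $e'\in\indep$ with $\indep\cup\cbr{e_1}\setminus\cbr{e'}\in\mI$, and these are exactly the elements of $N_0$. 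Since $\inspect$ returned $\err$, the minimizer $\hat e$ satisfies $2\delt(\hat e)>\relcont{e_1}{\uni}$. Hence $2\delt(e')>\relcont{e_1}{\uni}$ for every $e'\in N_0$. This gives the statement before the call to $\adjust$.

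Next we check that the property survives $\adjust(e_2,\trip,\tau)$. First, $\uni\subseteq\uni'$, so by submodularity $\relcont{e_1}{\uni'}\le\relcont{e_1}{\uni}$. Second, $\adjust$ only changes $\delt$ at $e_2$, and only when $e_2\notin\uni$, so $e_2\notin\indep$; hence $\delt'=\delt$ on $\indep$. If $\inspect(e_2,\cdot)=\err$, nothing changes and $N=N_0$ works. If $e_2$ is simply added, then $N_0\subseteq\indep'$, and $N=N_0$ still works with the weaker bound $\relcont{e_1}{\uni'}$.

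The main case is a swap: $\indep'=\indep\cup\cbr{e_2}\setminus\cbr{\hat e_2}$ with $2\delt(\hat e_2)\le\relcont{e_2}{\uni}=\delt'(e_2)$. If $\hat e_2\notin N_0$, then $N=N_0$ again works. Otherwise $\hat e_2\in C$. Since $\hat e_2$ was a valid exchange for $e_2$, the set $\indep\cup\cbr{e_2}$ contains a unique circuit $C_2$ with $e_2,\hat e_2\in C_2$, and $C_2\setminus\cbr{e_2}\subseteq\indep$. Apply circuit elimination to $C$ and $C_2$ at the common element $\hat e_2$. This gives a circuit $C_3\subseteq(C\cup C_2)\setminus\cbr{\hat e_2}$. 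Because $C_3$ lies in $\indep'\cup\cbr{e_1}$ and $\indep'$ is independent, $C_3$ must contain $e_1$. Take $N=C_3\setminus\cbr{e_1}\subseteq\indep'$.

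It remains to bound $\delt'$ on $N$. Elements of $N$ lying in $C\setminus\cbr{e_1}$ keep their $\delt$ value, so they satisfy $2\delt'(e')>\relcont{e_1}{\uni}\ge\relcont{e_1}{\uni'}$. For the element $e_2$ we have $2\delt'(e_2)\ge 4\delt(\hat e_2)>2\relcont{e_1}{\uni}$, using $\hat e_2\in N_0$; this is at least $\relcont{e_1}{\uni'}$. The remaining elements are those of $C_2\setminus\cbr{e_2,\hat e_2}$ that are not in $C$. For them, minimality of $\hat e_2$ gives $\delt(e')\ge\delt(\hat e_2)$, since each such $e'$ was also a valid exchange for $e_2$. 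Together with $2\delt(\hat e_2)>\relcont{e_1}{\uni}$, this yields $2\delt'(e')\ge2\delt(\hat e_2)>\relcont{e_1}{\uni}\ge\relcont{e_1}{\uni'}$. This completes the swap case.

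The main obstacle is this last step: choosing the circuit elimination correctly and noticing that the minimality of $\hat e_2$ controls every element of $C_2$. Care is also needed to justify that $C_3$ contains $e_1$, which uses $\indep'\in\mI$.
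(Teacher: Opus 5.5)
Your proof is correct and follows the paper's argument. You take the unique circuit $C$ in $\indep\cup\{e_1\}$ and dispose of the cases where $\adjust$ fails, adds $e_2$ without eviction, or evicts an element outside $C$. In the remaining swap case you apply circuit elimination to $C$ and the circuit $C_2$ of $\indep\cup\{e_2\}$ at the evicted element, using its minimality to bound $\delt'$ on $C_2$. You are slightly more careful than the paper at two points: you check that the resulting circuit $C_3$ must contain $e_1$ because $\indep'$ is independent, and that $\delt'$ agrees with $\delt$ on $\indep$.
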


\begin{proof}
Firstly, we note a few helpful facts for the proof. 

1.We have \(\uni \subseteq \uni'\), and the submodularity of the function \(f\) ensures that \(\relcont{e_1}{\uni'} \le \relcont{e_1}{\uni}\).

2.\(\delt\) and \(\delt'\) may only differ at \(e_2\). 

3.The condition 
\(\relcont{e_1}{\uni'} < 2\delt(e')\) for all \( e' \in N \) is equivalent to 
\[
\relcont{e_1}{\uni'} < 2\delt'(\hat{e}) \text{ for }
\hat{e} = \arg\min \{ \delt'(e') : e' \in N 
\}.
\]

Since \(\relcont{e_1}{\uni} \ge \tau\), but \(\inspect(e_1, \trip, \tau) = \err\), \(\indep \cup \cbr{e_1}\) must be dependent.
We define \(C_1\) as:
\[C_1 := \cbr{e_1} \cup \cbr{e' \in \indep:
\indep \cup \cbr{e_1} \setminus \cbr{e'} \in \mI}. \]
We know that \(C_1\) is the unique  circuit in \(\indep \cup \cbr{e_1}\). (It can be easily verified that each element in \(C_1\) must be in a circuit, \(\indep \cup \cbr{e_1}\) contains only one circuit by Lemma \ref{lem:=1circuit}, and any element in the circuit is included in \(C_1\).) 
Since $\inspect(e_1, \trip, \tau) = \err$, for the element $\hat{e} \in C_1 \setminus \{e_1\}$ with minimum $\delt$, we have:
\[\relcont{e_1}{\uni} < 2 \delt(\hat{e}).\]
%It follows that \(C_1 \setminus \{e_1\}\) forms a circuit with.

Now, consider the $\adjust(e_2, \trip, \tau)$ procedure.  
The $\adjust$ procedure calls $\inspect(e_2, \trip, \tau)$. 
If the output is either $\err$ or \(\emptyset\) or \(\cbr{\hat{e}_2} \not\subseteq C_1\),
the set \( C_1 \setminus \{e_1\} \) satisfies all the requirements for \(N\). This is because in all these cases, \(C_1 \setminus \{e_1\} \subseteq \indep'\), \(C_1\) is a circuit and the facts imply that
\[
\relcont{e_1}{\uni'} \le \relcont{e_1}{\uni} < 2 \delt(\hat{e}) = 
2\delt'(\hat{e}) \text{ for } \hat{e} = \arg\min \{ \delt'(e') : e' \in  C_1 \setminus \{e_1\} \}.
\]

The only remaining case, is that if $\inspect(e_2, \trip, \tau)$ returns \(\{e_3\}\) that \(e_3 \in C_1\).
Let \(C_2\) denote the circuit in \(\indep \cup \cbr{e_2}\). We know that \(e_3 \) was the edge with minimum \(\delt\), and 
\[
\delt'(e_2) = \relcont{e_2}{\uni} \ge \delt(e_3) = \delt'(e_3).\]

Moreover, since \(e_3 \in C_1\), for all \(e' \in C_1 \cup C_2 \setminus \{e_1\}\), we have
\[\relcont{e_1}{\uni'} < 2\delt'(e'). \]

By Lemma \ref{lem:elimination:axiom}, there exists a circuit \(C_3\) such that
\[
C_3 \subseteq C_1 \cup C_2 \setminus \{e_3\}.
\]
We have established that \(C_3 \setminus \{e_1\}\) satisfies all the properties for \(N\). The proof is complete.

\end{proof}

Combining the previous results and leveraging our carefully designed Algorithm \ref{alg:parallel} alongside an elegant charging scheme yields a global bound on the optimum's value in terms of the weight of \(\indep_T\).
\begin{lemma}
\label{lm:f_I*<4I}
    Let $\optInd$ denote the optimal solution at the current time.
    % Assuming that
    % \(2^{i^{*} - 1} \le f(\optInd) < 2^{i^{*}}, \)
    There exists an instance $\mathcal{I}$ in Algorithm \ref{alg:parallel}, for which
    the following holds: 
    \[
    (1 - \frac{\eps}{2})f(\optInd) < 4 w(\indep_T).
    \]
\end{lemma}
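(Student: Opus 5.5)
Starting from Lemma~\ref{lm:f_opt_w}, we have $f(\optInd) \le 2w(\indep_T) + w(\optInd\setminus\uni_T)$. It is therefore enough to pick an instance for which
\[
w(\optInd\setminus\uni_T) \le 2w(\indep_T) + \tfrac{\eps}{2} f(\optInd),
\]
with strict inequality somewhere. Combining the two bounds gives $(1-\frac{\eps}{2})f(\optInd) < 4w(\indep_T)$ (or $\le$, after a small slack adjustment).

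\textbf{Choice of instance.} Take the $i$ with $2^{i-1}\le f(\optInd)<2^{i}$, and set $\tau=\invratio 2^i$, so that $\tau \le \frac{\eps}{2k} f(\optInd)$. By Algorithm~\ref{alg:parallel}, every $e$ with $\invratio 2^i \le f(e) < 2^i$ is inserted into this instance. Elements with $f(e)\ge 2^i$ cannot lie in $\optInd$, since $f(e)\le f(\optInd)<2^i$ by monotonicity. Elements of $\optInd$ with $f(e)<\tau$ are never in this instance, but they satisfy $w(e)\le f(e\mid\uni_T)\le f(e)<\tau$.

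\textbf{Splitting $\optInd\setminus\uni_T$.} We divide it into:
\begin{itemize}
\item[(a)] elements with $f(e\mid\uni_T)<\tau$;
\item[(b)] elements with $f(e\mid\uni_T)\ge\tau$ for which $\inspect(e,\trip_T,\tau)=\err$.
\end{itemize}
We also need to exclude a third class: elements with $f(e\mid\uni_T)\ge\tau$ and $\inspect\neq\err$. Lemma~\ref{lm:invarFilter} and Lemma~\ref{lm:inv:finish} give $\celem_{T+1}=\emptyset$. If $e$ is present and was inserted into this instance, and it passed inspection at every layer, it would lie in $\celem_{T+1}$. So it fails at some layer $\ell\le T$. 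Either its marginal with respect to $\uni_\ell$ dropped below $\tau$, or it was blocked. In the blocked case, Lemma~\ref{lm:circ->circ}, applied repeatedly through each $\adjust$ call in the later layers, shows that the block persists to $\trip_T$. The marginal can only decrease, so the element falls into (a) or (b) at time $T$. Hence the third class is empty.

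Class (a) contributes at most $k\tau \le \frac{\eps}{2} f(\optInd)$, because $|\optInd|\le k$.

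\textbf{Charging class (b).} By Lemma~\ref{lm:circ->circ}, each $e$ in (b) has a set $N_e\subseteq\indep_T$ such that $N_e\cup\{e\}$ is a circuit and $w(e)<2\delt_T(e')=2w(e')$ for every $e'\in N_e$ (using Lemma~\ref{w=delt}). It remains to find an injection $\phi$ from class (b) into $\indep_T$ with $\phi(e)\in N_e$. Then $w(\text{(b)})\le 2w(\indep_T)$, and we are done.

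We build $\phi$ by the standard matroid exchange argument, as in Chakrabarti--Kale. Let $B=$ class (b). Every $e\in B$ lies in the span of $\indep_T$ through its circuit $N_e\cup\{e\}$, so $B\subseteq\spn(\indep_T)$. Since $B$ is independent as a subset of $\optInd$, there is a bijection-type exchange (Brualdi's lemma) between $B$ and a subset of $\indep_T$ in which each $e$ is matched to an element of its fundamental circuit with respect to $\indep_T$. Because $\indep_T\cup\{e\}$ has a unique circuit (Lemma~\ref{lem:=1circuit}), that circuit is exactly $N_e\cup\{e\}$.

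\textbf{Main obstacle.} The crux is this injective charging step. We must verify that the circuit supplied by Lemma~\ref{lm:circ->circ} is the fundamental circuit of $e$ in $\indep_T$, which the uniqueness lemma guarantees. We must also invoke a Hall-type or Brualdi-type exchange to get injectivity. A secondary technicality is tracking the blocking property through the \emph{sequence} of $\adjust$ calls and layers, which is the persistence step in the argument excluding the third class.
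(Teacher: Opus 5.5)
Your proposal is correct and follows essentially the same route as the paper. It uses the same instance $i^*$ with $\tau=\invratio 2^{i^*}$ and the same split of $\optInd\setminus\uni_T$ at threshold $\tau$: your class (a) is the paper's $\optInd_{<\tau}$, and your exclusion of the ``third class'' is the paper's drop-out-layer argument via Lemma~\ref{lm:invarFilter} and repeated Lemma~\ref{lm:circ->circ}, followed by a matching into $\indep_T$. The differences are cosmetic: you cite Brualdi's exchange where the paper checks Hall's condition directly from $E\subseteq\spn(N(E))$, and $\celem_{T+1}=\emptyset$ also needs $\add_{T+1}=\emptyset$ from Lemma~\ref{lm:reconstruction_condition}, not only Lemma~\ref{lm:inv:finish}.
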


\begin{proof}
    Let's assume that
    \(2^{i^{*} - 1} \le f(\optInd) < 2^{i^{*}}. \)
    In this Lemma, we bound \(w(\optInd \setminus \uni_T)\). 
    Define $\optInd_{\ge \tau}$ and $\optInd_{< \tau}$
    with \(\tau = \invratio \cdot 2^{i^*}\), 
    as follows: 
\begin{align*}
  \optInd_{\ge \tau} := \{e^* \in \optInd \setminus \uni_T : w(e^*) \ge \tau\},
  \quad
  \optInd_{< \tau} := \{e^* \in \optInd \setminus \uni_T : w(e^*) < \tau\}.
\end{align*}

By definition of weight function for set, we have
\begin{align*}
w(\optInd_{< \tau}) =\sum_{e\in \optInd_{< \tau}} w(e)
\le |\optInd_{< \tau}|\tau \le \tau k,
\end{align*}
where the last inequality follows from the fact that \(
\abs{\optInd_{< \tau}} \le \abs{\optInd} \le k. 
\) By assumption it follows that
\[\tau k \le \frac{\eps}{4} \cdot 2^{i^*} = \frac{\eps}{2} \cdot 2^{(i^* - 1)} \le \frac{\eps}{2}\cdot f(\optInd).
\] 
Hence, 
we have the following bound on \(w(\optInd_{< \tau})\) that 
\begin{align*}
\label{eq: w{< \tau}}
w(\optInd_{< \tau}) &\le \frac{\eps}{2} \cdot f(\optInd). 
\end{align*}

Now, we turn our focus to \(\optInd_{\ge \tau}\). 
By Definition \ref{def:weightfunction}, for \(e \notin \uni_T\), we have \(w(e) = \relcont{e}{\uni_T}\). Therefore, since for any element \(e \in \optInd_{\ge \tau}\), we know that \(e \notin \uni_T \), we have 
\[\tau \le w(e) = \relcont{e}{\uni_T} \le f(e), \]
where the last inequality comes from the submodularity of \(f\).
The assumption also implies that \(f(e) \le f(\optInd) < 2^{i^*}.\)
Consider the instance $\mathcal{I}_{i^*}$ in Algorithm \ref{alg:parallel}. We are certain that any element \(e \in \optInd_{\ge \tau}\) exists in the instance $\mathcal{I}_{i^*}$.

Consider a fixed arbitrary element \(e \in \optInd_{\ge \tau}\). 
For all \(\ell \in [T + 1]\), we define \(\celem_{\ell}\) as in Lemma \ref{lm:invarFilter} to be  
\((\elem_{\ell} \cup \add_{\ell}) \setminus \setd\).
We know that \(e\) is an element present at the current time, with \(f(e) \ge \tau\). Hence, it must exist in one of \(\elem_1\) or \(\add_1\), and \(e \notin \setd\). It follows that \(e \in \celem_1\). 

We know by Lemma \ref{lm:inv:finish} that \(\elem_{T + 1} = \emptyset\). Moreover, Lemma \ref{lm:reconstruction_condition} implies that \(\add_{T + 1} = \emptyset\). Hence,  
\(e \notin \celem_{T + 1}\). 

let \(\ell^*\) be a level such that \(e \in \celem_{\ell^*}\) but \(e \notin \celem_{\ell^* + 1}\). By Lemma \ref{lm:invarFilter}, we have 
\[
\celem_{\ell^* + 1} = \cbr{e' \in \celem_{\ell^*}: \inspect(e', \trip_{\ell^*}, \tau) \ne \err}.
\]

Therefore, we have 
\[
\inspect(e, \trip_{\ell^*}, \tau) = \err.
\]
Furthermore, since \(\relcont{e}{\uni_T} \ge \tau \), we have
\(\relcont{e}{\uni_{\ell^*}} \ge \tau. \)

We know from Lemma \ref{lm:invar:expand} that \(\trip_{T}\) is obtained from \(\trip_{\ell^*}\) by iteratively applying \(\adjust\) to the elements of \(\samp_{\ell^* + 1}\), then \(\samp_{\ell^* + 2}\), until \(\samp_{T}\).

Therefore, even when \(\ell^* \neq T\) by applying Lemma \ref{lm:circ->circ}, we know that there exists a set $N \subseteq \indep_{T}$ such that \(N \cup \cbr{e}\) forms a circuit, and for all \(e' \in N\), 
\[w(e) = \relcont{e}{\uni_{T}} < 2\delt_{T}(e') = 2w(e').\]
For each element \(e \in \optInd_{\ge \tau}\), we denote its corresponding \(N\) by \(N(e)\).

Consider a bipartite graph \((\optInd_{\ge \tau}, \indep_T, E) \) with an edge from \(e \in \optInd_{\ge \tau}\) to \(r \in  \indep_T\) if \(r \in N(e)\). 
For each \(e \in \optInd_{\ge \tau}\), we know that \(e \in \spn(N(e))\). 
This further implies that for all
\(E \subseteq \optInd_{\ge \tau}\), $E \subseteq \spn(N(E))$.
Therefore, Since $E$ is independent, we can
    %use Lemma~\ref{lm:matroid:hall_property_in_matroid} to 
conclude that $|E| \le |N(E)|$.
Thus, by Hall's theorem, There is a perfect matching \(M\) of \(\optInd_{\ge \tau}\) into \(\indep_T\). Since, weight of each element in \(\optInd_{\ge \tau}\) is less than twice the weight of its matched element in \(\indep_{T}\), we have 
\[w(\optInd_{\ge \tau}) < 2 w(\indep_T). \]

Therefore, we have 
\begin{align*}
    w(\optInd \setminus \uni_T) &= w(\optInd_{\ge \tau}) + w(\optInd_{< \tau})
     < 2 w(\indep_T) + \frac{\eps}{2} \cdot f(\optInd). 
\end{align*}
Then, by Lemma~\ref{lm:f_opt_w}, we obtain
\[(1 - \frac{\eps}{2})f(\optInd) < 4 w(\indep_T). \]
The proof is complete. 
\end{proof}

\subsection{Sample Size and Randomized Properties of the Data Structure}
\label{sec:random:sample:size}
This section analyzes the randomized aspects of our data structure. We begin by formally defining the notion of a suitable sample size, which is a key component of our algorithm.
\begin{definition}
\label{def:suitable_sample_size}
Consider given variables \(\buck\), \(\trip\), and \(\tau\).
Let \(\buck'\) be a random permutation of \(\buck\), and let \(\trip'\) denote the output of \( \expand(\buck', \trip, \tau)\). 
Define a random vector \(Y\) of length \( \abs{\buck} + 1 \) as follows: for each \(j \in [\abs{\buck}]\),
\[
Y[j] = \Pr{\buck'[j] \in \uni'},
\]
and set
\[
Y[\abs{\buck} + 1] = 0.
\]
We define \(\suit(\buck, \trip, \tau)\) as the set of indices \(s,\)
for which the following conditions hold:
\begin{enumerate}
    \item $Y[j] \ge 1 - \frac{3\epsSamp}{2} \quad \text{for all } j \le s, $
    \label{def:suitable-cond1}
    \item $Y[j] \le 1 - \frac{\epsSamp}{2} \quad \text{for } j = s + 1. $
    \label{def:suitable-cond2}
\end{enumerate}

In particular, for \(\ell \in [T]\), we define \(\suit_\ell := \suit(\buck_\ell, \trip_{\ell - 1}, \tau_\ell)\).  
\end{definition}

The following lemma further establishes the significance of the above definition by demonstrating desirable probabilistic properties of selecting a sample subset with a suitable size, guaranteeing both sufficient expansion of the solution at the current level and effective filtration of the remaining elements for the next level.
\begin{lemma}
    \label{lm:suit1impliessuit2}
    Consider \(\buck\), \(\trip\), \(\tau\), \(\buck'\), \(\trip'\), and \(Y\) as in Definition~\ref{def:suitable_sample_size}. Let $\samp$ be an ordered sample subset of \(\buck\) of size $\ssize$, and let \(\trip''\) denote the output of \( \expand(\samp, \trip, \tau)\). 
    
    Assume \(\ssize \in \suit(\buck, \trip, \tau)\), then the following hold:
    \begin{enumerate}
        \item \( \Ex{\abs{\samp \cap \uni''}} \ge (1- \frac{3\epsSamp}{2})\ssize, \)
        \label{lm:suitcon1}
        \item \(
        \Ex{
        \abs{\cbr{e \in \buck: \inspect{}(e, \trip'', \tau) = \err}}} \ge (\frac{\epsSamp}{2})\abs{\buck}.\)
        \label{lm:suitcon2}
    \end{enumerate}
\end{lemma}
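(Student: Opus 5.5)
The plan is to couple the ordered sample $\samp$ with the length-$\ssize$ prefix of the random permutation $\buck'$ in Definition~\ref{def:suitable_sample_size}. Throughout I read ``ordered sample subset of size $\ssize$'' as in line~\ref{line:construct_sample}: a uniformly random $\ssize$-subset of $\buck$ in uniformly random order. This has exactly the distribution of $(\buck'[1],\dots,\buck'[\ssize])$.

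Since $\expand$ processes its input sequentially through $\adjust$, running $\expand(\samp,\trip,\tau)$ is identical to stopping $\expand(\buck',\trip,\tau)$ after $\ssize$ steps. So $\trip''$ has the law of the intermediate state $\trip'_{\ssize}$, in the notation of Lemma~\ref{lm:w_withinlevel}.

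I also need a persistence fact. $\buck'[j]$ lies in the final $\uni'$ iff it was accepted at step $j$, since the elements of $\buck$ are distinct and $\uni$ only changes at the processed element. The same then holds for $\uni'_{\ssize}$ whenever $j\le \ssize$.

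\textbf{Part~\ref{lm:suitcon1}.} By linearity,
\[
\Ex{\abs{\samp\cap\uni''}}=\sum_{j\le \ssize}\Pr{\buck'[j]\in\uni'_{\ssize}}=\sum_{j\le\ssize}Y[j].
\]
Each term is at least $1-\frac{3\epsSamp}{2}$ by condition~\ref{def:suitable-cond1} of $\suit$, which gives the bound.

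\textbf{Part~\ref{lm:suitcon2}.} I will show that every element of $\samp$ fails $\inspect(\cdot,\trip'',\tau)$ deterministically, and that a $\frac{\epsSamp}{2}$-fraction of $\buck\setminus\samp$ fails in expectation.

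For $e\in\samp\cap\uni''$ we have $\relcont{e}{\uni''}=0<\tau$, assuming $\tau>0$ as in all calls. Now take $e\in\samp$ rejected at its step. If $\relcont{e}{\uni''}<\tau$ it fails immediately. Otherwise, by monotonicity of $\uni$ and submodularity, its marginal was also at least $\tau$ at the moment it was rejected. Applying Lemma~\ref{lm:circ->circ} iteratively over the remaining steps then gives a circuit $N\cup\{e\}$ with $N\subseteq\indep''$ and $\relcont{e}{\uni''}<2\delt''(e')$ for all $e'\in N$. Hence the independence test fails, and the minimum-$\delt$ exchange candidate lies in this unique circuit, so the doubling test fails as well.

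For the rest of $\buck$, use that $\ssize+1\le\abs{\buck}+1$. If $\ssize=\abs{\buck}$ the claim already follows from the previous paragraph. Otherwise, conditioned on the prefix, $\buck'[\ssize+1]$ is uniform on $\buck\setminus\samp$. It is rejected iff $\inspect(\buck'[\ssize+1],\trip'_{\ssize},\tau)=\err$. Let $F$ denote the number of elements $e\in\buck\setminus\samp$ with $\inspect(e,\trip'',\tau)=\err$. Then condition~\ref{def:suitable-cond2} gives
\[
\frac{\epsSamp}{2}\le 1-Y[\ssize+1]=\Ex{\frac{F}{\abs{\buck}-\ssize}}.
\]
Consequently the expected number of failing elements is at least $\ssize+\frac{\epsSamp}{2}(\abs{\buck}-\ssize)\ge\frac{\epsSamp}{2}\abs{\buck}$.

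The main obstacle is the persistence step: showing that rejected sample elements still fail against the later state $\trip''$. This is needed because otherwise one only gets $\frac{\epsSamp}{2}(\abs{\buck}-\ssize)$. I would handle it exactly as in the proof of Lemma~\ref{lm:f_I*<4I}, via Lemma~\ref{lm:circ->circ}.
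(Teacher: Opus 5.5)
Your proof is correct. It matches the paper on part~1 and on the elements of $\buck\setminus\samp$, but handles the sampled elements in part~2 by a different and heavier route.

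\textbf{What the paper does.} It never shows that \emph{rejected} sampled elements keep failing. It counts only $\samp\cap\uni''$, whose elements fail trivially because $\relcont{e}{\uni''}=0<\tau$. It then reuses part~1 to get $\Ex{\abs{\samp\cap\uni''}}\ge(1-\frac{3\epsSamp}{2})\ssize\ge\frac{\epsSamp}{2}\ssize$, under the standing assumption $\epsSamp\le\frac12$. Adding the $\frac{\epsSamp}{2}(\abs{\buck}-\ssize)$ from position $\ssize+1$, which you derive the same way, finishes the proof.

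So your remark that the persistence step ``is needed'' is not accurate; it can be bypassed entirely.

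\textbf{What your route buys.} You get the stronger deterministic fact that every element of $\samp$ is filtered out; the paper only asserts this in passing without proof. Your bound $\ssize+\frac{\epsSamp}{2}(\abs{\buck}-\ssize)$ needs no restriction on $\epsSamp$ and does not depend on part~1.

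\textbf{What it costs.} You need the matroid machinery:
\begin{itemize}
\item Lemma~\ref{lm:circ->circ}, applied iteratively;
\item the invariant that $\indep$ stays independent, so the circuit in $\indep\cup\{e\}$ is unique;
\item converting that lemma's conclusion back into $\inspect(e,\cdot,\tau)=\err$ at every intermediate step, using that the marginal of $e$ stays at least $\relcont{e}{\uni''}\ge\tau$ throughout.
\end{itemize}
The paper's argument, by contrast, uses nothing beyond the monotonicity of $\uni$ under $\expand$.
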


\begin{proof}
    Recall that \(\buck'\) is a random permutation of \(\buck\), \(\trip'\) denotes the output of \( \expand(\buck', \trip, \tau)\), and \(Y\) is a random vector of length \( \abs{\buck} + 1 \) such that for all \(j \in [\abs{\buck}] \),
    \[
    Y[j] = \Pr{\buck'[j] \in \uni'}
    , \quad \text{and} \quad Y[\abs{\buck} + 1] = 0.
    \]

     For each \(j \in [\ssize]\), the random variables \(\ind{\buck'[j] \in \uni'}\) and \(\ind{\samp[j] \in \uni''}\) have the same distribution. 
    Therefore, 
    \begin{align*}
        \Ex{\abs{\samp \cap \uni''}} 
        &= \Ex{\sum_{j \in [\ssize]} \ind{\samp[j] \in \uni''}} 
        = \sum_{j \in [\ssize]} \Ex{\ind{\buck'[j] \in \uni'}}\\
        &= \sum_{j \in [\ssize]} \Pr{\buck'[j] \in \uni'}
        = \sum_{j \in [\ssize]} Y[j]
    \end{align*}
    
    Given the assumption that \(s \in \suit(\buck, \trip, \tau)\), condition \ref{def:suitable-cond1} in Definition \ref{def:suitable_sample_size} gives \(Y[j] \geq 1 - \frac{3\epsSamp}{2}\) for all \(j \leq \ssize\). Hence, 
    \[\Ex{\abs{\samp \cap \uni''}} \ge (1- \frac{3\epsSamp}{2})\ssize, \]
    as desired. 

    Next, consider \(\buck \setminus \samp\). If nonempty, let \(\overline{e}\) be an element selected uniformly at random from \(\buck \setminus \samp\), and define \(\overline{\samp}\) as \(\samp\) followed by \(\overline{e}\). Let \(\trip'''\) be \( \expand(\overline{\samp}, \trip, \tau)\). 
    By construction, 
    \[
    \expand(\overline{\samp}, \trip, \tau) = \adjust(\overline{e}, \trip'', \tau),
    \]
    and \(\overline{e} \in \uni'''\) if and only if \(\inspect(\overline{e}, \trip'', \tau) \neq \err\). Therefore,
    \[
    \Ex{\ind{\inspect(\overline{e}, \trip'', \tau) \neq \err}} = \Ex{\ind{\buck'[\ssize + 1] \in \uni'}} = Y[\ssize + 1] \leq 1 - \frac{\epsSamp}{2},
    \]
    where the first equality comes from the fact that the random variables \(\ind{\buck'[\ssize + 1] \in \uni'}\) and \(\ind{\overline{e} \in \uni'''}\) have the same distribution, and the last inequality uses condition (2) in Definition~\ref{def:suitable_sample_size}. It follows that
    \[
    \Ex{\ind{\inspect(\overline{e}, \trip'', \tau) = \err}} \geq \frac{\epsSamp}{2},
    \]
    and thus,
    \[
    \Ex{\abs{\{ e \in \buck \setminus \samp : \inspect(e, \trip'', \tau) = \err \}}} \geq \frac{\epsSamp}{2} \abs{\buck \setminus \samp}.
    \]

    Now, it remains to show that
    \[
    \Ex{\abs{\cbr{e \in \samp: \inspect{}(e, \trip'', \tau) = \err}}} \ge (\frac{\epsSamp}{2})\abs{\samp}.
    \]
    
    It is in fact true that \(\inspect(e, \trip'', \tau) = \err\) for any \(e \in \samp\). Yet, it suffices to show that this holds for any \(e \in \samp \cap \uni''\), which in turn implies 
    \[\abs{\cbr{e \in \samp: \inspect{}(e, \trip'', \tau) = \err}} \ge 
    \abs{\samp \cap \uni''}.\]
    This is because we have already established that \(\Ex{\abs{\samp \cap \uni''}} \ge (1- \tfrac{3\epsSamp}{2})\ssize, \) and \(1- \tfrac{3\epsSamp}{2} \ge \frac{\epsSamp}{2}\) under the assumption $\epsSamp \le \tfrac{1}{2}.$

    Finally, for all \(e \in \samp \cap \uni''\), we argue that \(\inspect(e, \trip'', \tau) = \err\) since  \(\inspect(e, \trip'', \tau) = \err\) if \(\relcont{e}{\uni''} < \tau\), which is the case for elements \(e \in \samp \cap \uni''\) because \(\relcont{e}{\uni''} = 0\), for \(e \in \uni''\). 

    The proof is complete.    
\end{proof}

The next lemma demonstrates that the procedure 
$\estimatesize$ returns a sample size from the set of suitable sample sizes defined above efficiently and with high probability.
\begin{lemma}
\label{lem:estimatesize_in_suit}
Suppose the procedure \(\estimatesize\) is called with input parameters \(\buck\), \(\trip\), and \(\tau\). 

With probability at least \(\highprob\), the following holds:
\[
\estimatesize(\buck, \trip, \tau) \in \suit(\buck, \trip, \tau).
\]
\end{lemma}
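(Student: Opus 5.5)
The plan is a Hoeffding bound on each column of the matrix $\acc$ in $\estimatesize$, followed by a union bound over all $|\buck|+1 \le n+1$ columns.

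Fix the inputs $\buck$, $\trip$, $\tau$. The $t$ rows of $\acc$ come from $t$ independent random permutations of $\buck$, each fed to $\expand$ from the same starting triple $\trip$. So, for each fixed column $j \in [\abs{\buck}]$, the entries $\acc[1,j],\dots,\acc[t,j]$ are i.i.d.\ Bernoulli random variables with mean exactly $Y[j]$, as in Definition~\ref{def:suitable_sample_size}. Write $\hat{Y}[j] := \frac{1}{t}\sum_{i=1}^t \acc[i,j]$. For column $\abs{\buck}+1$ the entries are identically zero, since the matrix is initialized to zero and never written there, so $\hat{Y}[\abs{\buck}+1] = 0 = Y[\abs{\buck}+1]$ deterministically.

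By Hoeffding's inequality, for each $j$,
\[
\Pr{\abs{\hat{Y}[j] - Y[j]} \ge \tfrac{\epsSamp}{2}} \le 2\exp\!\left(-\tfrac{t\epsSamp^2}{2}\right).
\]
With $t \ge \frac{4}{\epsSamp^2}\ln\!\left(\frac{2nk^{10}}{\epsP}\right)$, the right-hand side is at most $2\left(\frac{\epsP}{2nk^{10}}\right)^2 \le \frac{\epsP}{nk^{10}}$, using $\frac{\epsP}{2nk^{10}} \le 1$. A union bound over the at most $\abs{\buck} \le n$ random columns gives the following: with probability at least $\highprob$, we have $\abs{\hat{Y}[j]-Y[j]} < \frac{\epsSamp}{2}$ for every $j \in [\abs{\buck}+1]$. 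Call this the good event.

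On the good event, let $\ssize'$ be the index chosen by the algorithm and $\ssize = \ssize'-1$ the returned value; note $\ssize'$ always exists because $\hat{Y}[\abs{\buck}+1] = 0 < 1-\epsSamp$.

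Condition~\ref{def:suitable-cond1}: by minimality of $\ssize'$, every $j \le \ssize$ satisfies $\hat{Y}[j] \ge 1-\epsSamp$. Hence $Y[j] > \hat{Y}[j] - \frac{\epsSamp}{2} \ge 1-\frac{3\epsSamp}{2}$.

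Condition~\ref{def:suitable-cond2}: at $j = \ssize+1 = \ssize'$ we have $\hat{Y}[j] < 1-\epsSamp$. Hence $Y[j] < 1-\epsSamp+\frac{\epsSamp}{2} = 1-\frac{\epsSamp}{2}$. The deterministic last column is consistent with this, since $Y[\abs{\buck}+1]=0$.

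Therefore $\ssize \in \suit(\buck,\trip,\tau)$ on the good event.

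The main point needing care is independence. We must justify that, conditioned on the fixed input $\trip$, the $t$ executions of $\expand$ are independent and identically distributed. This holds because $\expand$ and $\adjust$ only modify a local copy $\trip'$ and never the input (as noted in the proof of Lemma~\ref{lm:invar:expand}), and because the permutations are drawn with fresh randomness. Beyond that, it only remains to check that the constants in $t$ suffice, which the computation above does.
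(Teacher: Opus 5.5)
Your proof is correct and follows the paper's own argument: Hoeffding on each column with deviation $\epsSamp/2$, a union bound over the at most $n$ random columns (the last column is identically zero), and then reading off both conditions of $\suit$ from the minimality of $\ssize'$. The differences are minor. You use the sharp Hoeffding exponent $t\epsSamp^2/2$ and then absorb the square, whereas the paper states the weaker but still valid exponent $t\epsSamp^2/4$, so its substitution of $t$ is exact. You also make explicit the independence of the $t$ runs of \expand, which the paper uses implicitly.
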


\begin{proof}
Hoeffding's inequality implies that for any \( j \in [|\buck|] \),
\begin{align*}
\Pr{ \left| \frac{\sum_{i=1}^t \acc[i, j]}{t} - \Ex{
\frac{\sum_{i=1}^t \acc[i, j]}{t}
} \right| \geq \frac{\epsSamp}{2} } \leq 2e^{-\frac{t \epsSamp^2}{4}}.
\end{align*}

Since we have:
\[
\acc[i, j] \coloneqq \ind{\buck'[j] \in \uni'}, \quad \text{and} \quad
Y[j] \coloneqq \Pr{\buck'[j] \in \uni'},
\]
it follows that
\[
\Ex{\frac{\sum_{i=1}^t \acc[i, j]}{t}} = Y[j].
\]

Thus, we have:
\begin{align*}
\Pr{\left| \frac{\sum_{i=1}^t \acc[i, j]}{t} - Y[j] \right| \geq \frac{\epsSamp}{2}} \leq 2e^{-\frac{t \epsSamp^2}{4}}.
\end{align*}

By substituting \( t = \constt \), we get:
\begin{align*}
\Pr{ \left| \frac{\sum_{i=1}^t \acc[i, j]}{t} - Y[j] \right| \geq \frac{\epsSamp}{2} }
\leq 2e^{-\ln\left( \frac{2nk^{10}}{\epsP} \right)} = \frac{\epsP}{nk^{10}}.
\end{align*}

By applying the union bound over all \( j \in [|\buck|] \), we have
\begin{align*}
\Pr{ \exists j \in [|\buck|] : \left| \frac{\sum_{i=1}^t \acc[i, j]}{t} - Y[j] \right| \geq \frac{\epsSamp}{2} }
&\leq |\buck| \cdot \frac{\epsP}{nk^{10}}
\leq \frac{\epsP}{k^{10}},
\end{align*}
where the last inequality uses \( |\buck| \leq n \). Therefore,
\begin{align*}
\Pr{ \forall j \in [|\buck|] : \left| \frac{\sum_{i=1}^t \acc[i, j]}{t} - Y[j] \right| < \frac{\epsSamp}{2} }
&\geq \highprob.
\end{align*}
Also note that  by definition, \(\left| \frac{\sum_{i=1}^t \acc[i, |\buck| + 1]}{t} - Y[|\buck| + 1] \right| = 0\). 
Hence, with probability at least \(\highprob\), for all \(j \in [|\buck| + 1]\), we have
\[
\frac{\sum_{i=1}^t \acc[i, j]}{t} - \frac{\epsSamp}{2} < Y[j] < \frac{\sum_{i=1}^t \acc[i, j]}{t} + \frac{\epsSamp}{2}. 
\]

We know that 
\(
\min{\cbr{j : 1 \le j \le \abs{\buck} + 1, \frac{\sum_{i=1}^t \acc[i,j]}{t}< 1 - \epsSamp}}
\) is selected as \(\ssize'\). 
Note that  since 
\( \frac{\sum_{i=1}^t \acc[i,|\buck| + 1]]}{t} = 0\), 
such \(\ssize'\) exists.
The choice of \(\ssize'\) implies that
\begin{align*}
    \frac{\sum_{i=1}^t \acc[i,j]}{t} &\geq 1 - \epsSamp &\text{for all }
    j < \ssize',\\
    \frac{\sum_{i=1}^t \acc[i,j]}{t} &< 1 - \epsSamp &\text{for } j = \ssize'.
\end{align*}

Hence, the following holds:
\begin{align*}
    Y[j] &> 1 - \frac{3\epsSamp}{2} &\text{for all } j < \ssize' - 1, \\
    Y[j] &< 1 - \frac{\epsSamp}{2} &\text{for } j = (\ssize' - 1) + 1.  
\end{align*}
This implies that
\[
\estimatesize(\buck, \trip, \tau) = \ssize' - 1 \in \suit(\buck, \trip, \tau).
\]
\end{proof}

Next, before proceeding with the randomized invariants of our dynamic algorithm, we first present a fact used in their analysis.
 \begin{fact}
 \label{fact}
Let \( C \) be collection of random variables \( X \), and events \( Y \), and let \( Z \) be a random variable. If \( Z \) is a function of \( X \), given \( Y \), that is, it can be deterministically determined given \( X \) and \( Y \), then:
\[
\Ex{A \mid X, Y, Z} = \Ex{A \mid X, Y} \quad \text{or} \quad \Pr{A \mid X, Y, Z} = \mathbb{P}[A \mid X, Y].
\]

The same holds if \( Z \) is an event that is implied by \( X \) and \( Y \).

However, note that if \( X \) and \( Y \) are contradictory to \( Z \), \( \Pr{A \mid X, Y, Z} \) and \( \Ex{A \mid X, Y, Z} \) are undefined.
Note that when we get \( \Ex{A \mid X, Y} = \Ex{\Ex{A \mid X, Y, Z}} \) by the law of iterated expectations, \( \Ex{A \mid X, Y, Z} \) is itself a random variable, a function only defined over values of \( (x, z) \) satisfying \( \Pr{\ X = x, Y, Z = z} > 0\).

 \end{fact}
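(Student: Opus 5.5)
The plan is to reduce the statement to an equality of conditioning events and then read off the conditional probabilities and expectations directly from the definitions given in the preliminaries. As elsewhere in the paper, all random variables are discrete (they are functions of finitely many random permutations and samples). Hence conditioning on a random variable means conditioning, value by value, on the events $\{X = x\}$, and $\Ex{A \mid X, Y}$ is the random variable whose value on $\{X = x\} \cap Y$ is $\Ex{A \mid X = x, Y}$.

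\textbf{Step 1: identify the conditioning events.} The hypothesis that $Z$ is determined by $X$ given $Y$ means there is a function $g$ with $Z = g(X)$ on the event $Y$. Fix any pair $(x,z)$ with $\Pr{X = x, Y, Z = z} > 0$; by the remark in the statement, these are the only pairs on which the left-hand side is defined. Any outcome in $\{X=x\} \cap Y \cap \{Z=z\}$ has $z = g(x)$, so positivity forces $z = g(x)$. Conversely, on $\{X=x\} \cap Y$ we have $Z = g(x) = z$. Therefore
\[
\{X = x\} \cap Y \cap \{Z = z\} = \{X = x\} \cap Y .
\]

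\textbf{Step 2: conclude.} Conditional probabilities and conditional expectations given an event depend only on that event, so Step 1 gives $\Pr{A \mid X = x, Y, Z = z} = \Pr{A \mid X = x, Y}$. Likewise, by the definition $\Ex{A \mid B} = \sum_a a \Pr{A = a \mid B}$, we get $\Ex{A \mid X=x, Y, Z=z} = \Ex{A \mid X=x, Y}$. As random variables, both sides then agree on every outcome where the left side is defined, which is the claimed identity. This is also consistent with the remark on iterated expectations: averaging over $Z$ changes nothing, because $Z$ takes a single value on each $\{X=x\} \cap Y$.

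\textbf{Step 3: the event case.} If $Z$ is an event implied by $X$ and $Y$, then $\{X = x\} \cap Y \subseteq Z$ for every $x$ with $\Pr{X=x, Y} > 0$. Hence $\{X=x\} \cap Y \cap Z = \{X=x\} \cap Y$, and Step 2 applies verbatim. When $X$ and $Y$ contradict $Z$, the intersection has probability zero, which is exactly the undefined case noted in the statement.

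The only delicate point is bookkeeping over which values $(x,z)$ are admissible. I expect Step 1 to be the place where care is needed: the identity holds only where the conditioning event has positive probability. I would state this restriction explicitly rather than try to extend the identity to null events.
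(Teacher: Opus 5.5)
The paper gives no proof of this Fact. It is stated as a standard observation and then used, e.g.\ in the proof of Lemma~\ref{lm:invar:highprob}, where $\bhist{\ell}^{+}$ is determined by $\bhist{\ell}^{-}$ on the event $C_\ell = 0$, and the event $C_\ell = 0$ is itself determined by $\bhist{\ell}^{-}$.

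Your argument is correct and is the elementary verification the statement relies on. On every positive-probability atom you show $\{X=x\}\cap Y\cap\{Z=z\} = \{X=x\}\cap Y$ (resp.\ $\{X=x\}\cap Y\cap Z = \{X=x\}\cap Y$). Conditional probabilities and expectations depend only on the conditioning event, and the contradictory case is exactly the null-atom case the statement leaves undefined.

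One small addition is worth making. If ``determined'' means only almost surely rather than pointwise on $Y$, the two conditioning events differ by a null set. That leaves the conditional quantities unchanged, so your conclusion stands.
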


We then follow with defining the notion of History \(\hist{\ell}\), which serves as the conditioning context for our randomized invariant concerning the sample size at each layer.

\begin{definition}
    Let \(\layer_{\ell}\) denote \((\trip_{\ell}, \elem_{\ell + 1}, \add_{\ell + 1})\) for \(\ell \in [0, T]\). 
    We define History \(\hist{\ell}\) as
    \begin{align*}
        \hist{\ell} &:= \cbr{\setd, \layer_{0}, \dots, \layer_{\ell - 1}}.  
    \end{align*}
    We use \(\bhist{\ell}\) to denote the random variables corresponding to \(\hist{\ell}\). 
    
    Note that conditioning on the values of \(\bhist{\ell}\) inherently implies \(\bT > \ell\).
\end{definition}

The following lemma establishes that, conditioned on its preceding history, the sample size at each layer is suitable with high probability. 
\begin{lemma}
\label{lm:invar:highprob}
For all $\ell \in [T]$, the following holds: 
\begin{align*}
\Pr{\bs_\ell \in  \suit_\ell \mid \bhist{\ell}} \ge \highprob.
\end{align*}
\end{lemma}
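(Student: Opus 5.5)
We will reduce the statement to Lemma~\ref{lem:estimatesize_in_suit} by a conditioning argument. Fix $\ell$ and a realization $H$ of $\bhist{\ell}$ with positive probability; recall that such conditioning forces $\bT > \ell - 1$, so layer $\ell$ exists. The plan has three steps. First, show that $H$ determines the inputs $(\buck_\ell, \trip_{\ell-1}, \tau_\ell)$ passed to $\estimatesize$. Second, show that, given $H$, the internal randomness of that call is still fresh and uniform. Third, apply Lemma~\ref{lem:estimatesize_in_suit} to the fixed inputs and average over $H$.

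For the first step, $H$ contains $\layer_{\ell-1} = (\trip_{\ell-1}, \elem_\ell, \add_\ell)$. The sets $\elem_\ell$ and $\trip_{\ell-1}$ determine $\buck_\ell$ and $\tau_\ell$ through $\chbucket$, since that procedure is deterministic. The set $\suit_\ell = \suit(\buck_\ell, \trip_{\ell-1}, \tau_\ell)$ is a deterministic function of these inputs: it is defined through the probabilities $Y[j]$, not through realized random bits. So conditioned on $H$, $\suit_\ell$ is a fixed set $\suit^H$, and it remains to show
\[
\Pr{\bs_\ell \in \suit^H \mid \bhist{\ell} = H} \ge \highprob.
\]

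The main obstacle is the second step. Here $\bs_\ell$ is the value computed during the \emph{most recent} call to $\construct(\ell)$, and we must argue that conditioning on $H$ does not bias the random permutations used inside that call. I would decompose the history by the most recent invocation of $\construct(\ell)$. The update sequence is fixed in advance (oblivious adversary), and the variables in $H$ are determined by the randomness consumed before that invocation together with $\setd$. One caveat is that $\setd$ and the later $\add_\ell$ values also depend on updates made after that invocation. Whether layer $\ell$ gets rebuilt after a given update depends only on the sizes of $\buck_\ell$, $\add_\ell$ and $\setd$. It does not depend on the permutations drawn inside $\estimatesize$, nor on $\samp_\ell$.

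I would formalize this by conditioning additionally on the time of the last rebuild of layer $\ell$ and on all randomness used at layers $<\ell$. Given these, the rebuild condition for layer $\ell$ at later times is a function of already-conditioned quantities. By Fact~\ref{fact}, adding such conditioning events leaves the distribution of the $\estimatesize$ permutations unchanged, namely independent uniform permutations of $\buck_\ell$. The delicate point is that later-layer events (for example $\construct(\ell+1)$ triggered by deletions) depend on $\samp_\ell$. However, $H$ only includes layers $0,\dots,\ell-1$, which are unaffected by them. I would verify this explicitly by checking the $\ins$ and $\del$ conditions at line~\ref{line:insert_checkforconstruct} and line~\ref{line:delete_cond} for indices up to $\ell$.

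With the fresh-randomness claim in hand, the estimator's output is exactly distributed as $\estimatesize(\buck_\ell, \trip_{\ell-1}, \tau_\ell)$ on fixed inputs. Lemma~\ref{lem:estimatesize_in_suit} then gives probability at least $\highprob$ for every admissible $H$, which is the claimed bound.
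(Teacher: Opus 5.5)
Your argument is correct, but it takes a different route from the paper.

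The paper inducts over updates. Let $C_\ell$ indicate whether $\construct(\ell)$ ran in the current update. When $C_\ell=1$, the paper applies Lemma~\ref{lem:estimatesize_in_suit} directly, since the permutations are drawn after $\bhist{\ell}^+$ is fixed. When $C_\ell=0$, it notes three things: $\bs_\ell$ and $\suit_\ell$ are unchanged, $\bhist{\ell}^+$ is a function of $\bhist{\ell}^-$, and $C_\ell$ is itself a function of $\bhist{\ell}^-$. Fact~\ref{fact} then reduces the conditional probability to the inductive hypothesis.

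You argue globally instead. The entire trajectory of $(\setd,\layer_0,\dots,\layer_{\ell-1})$ is determined by the update sequence and the coins of layers below $\ell$. Hence so are the time of the last rebuild of layer $\ell$ and the inputs $(\buck_\ell,\trip_{\ell-1},\tau_\ell)$. Layers $\ge\ell$ never feed back into this trajectory, so conditioning on all lower-layer coins leaves the coins of the last estimator call fresh.

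The two routes have different costs:
\begin{itemize}
\item The paper's version needs only one-step measurability facts and does not care how coins are drawn.
\item Yours needs the stronger claim that there is no feedback over the whole history. It also needs ``all randomness used at layers $<\ell$'' to be well defined regardless of how many coins higher layers consumed. For that, state explicitly a distributionally equivalent model with independent random tapes indexed by update and layer.
\item You should also record why the current $\elem_\ell$ is the set \chbucket{} saw. It is modified only in $\construct(\ell-1)$, which immediately calls $\construct(\ell)$, and in the first line of $\construct(\ell)$, which runs before \chbucket{}.
\end{itemize}

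In return, your approach explains the freshness of the estimator's coins directly. It gives the bound even conditional on all lower-layer coins, and it transfers essentially verbatim to Lemma~\ref{lm:invar:uniform}.
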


\newcommand{\rebuild}{C}
\begin{proof}
    We prove this by showing that the statement holds before any update is applied and that, assuming that it holds prior to an update, we can conclude that it also holds afterward.

    At the beginning, before any update, the claim holds trivially, as $T = 0$. Since the set $[T]$ is empty, the lemma statement holds vacuously for the initial state.
    
    For any variable \( X \), we use the notations \( X^{-} \) and \( X^{+} \) to denote \( X \) before and after the execution of an update, respectively. Hence, we aim to proof \(\Pr{\bs_\ell^{+} \in  \suit_\ell^{+} \mid \bhist{\ell}^{+}} \ge \highprob\) given the assumption that \(\Pr{\bs_\ell^{-} \in  \suit_\ell^{-} \mid \bhist{\ell}^{-}} \ge \highprob.\)
    
  Let $\rebuild_\ell$ be a random variable that takes the value $1$ if
  $\construct{}(\ell)$ was called during this update (whether directly or through $\construct{}(\ell - 1)$).
  
  By the law of iterated expectation, we have 
  \begin{align*}
    \Pr{\bs_\ell^{+} \in  \suit_\ell^{+} \mid \bhist{\ell}^{+}} = 
    \Ex{\Pr{\bs_\ell^{+} \in  \suit_\ell^{+} \mid \bhist{\ell}^{+}, \rebuild_\ell}}.
  \end{align*}
  Therefore, proving the following inequalities completes the proof. 
  \begin{align}
  \Pr{\bs_\ell^{+} \in  \suit_\ell^{+} \mid \bhist{\ell}^{+}, \rebuild_\ell=0} \ge \highprob,
    \label{eq:apr1_0945}
  \end{align}
  and
  \begin{align}
    \Pr{\bs_\ell^{+} \in  \suit_\ell^{+} \mid \bhist{\ell}^{+}, \rebuild_\ell=1} \ge \highprob.
    \label{eq:jul11_1815}
  \end{align}

  Equation~\eqref{eq:jul11_1815} holds by Lemma \ref{lem:estimatesize_in_suit}. 
  We also know that if $\bT^- < \ell$ but $\bT^+ \ge \ell$, then $\construct{}(\ell)$ has been called and therefore $\Pr{\rebuild_\ell=1} = 1$, which means Equation~\eqref{eq:jul11_1815} is sufficient to prove the claim for those levels. Thus we only need to focus on proving Equation \eqref{eq:apr1_0945} when $\bT^- \ge \ell$.

We know that conditioning on $\bhist{\ell}^{+}, \rebuild_\ell=0$ implies that  
  $\bT^- \ge \ell$. Thus, we can also condition on $\bhist{\ell}^{-}$.
    By the law of iterated expectation, we have 
    \begin{align*}
    \Pr{\bs_\ell^{+} \in  \suit_\ell^{+} \mid \bhist{\ell}^{+}, \rebuild_\ell=0}
    =
    \Ex{\Pr{\bs_\ell^{+} \in  \suit_\ell^{+} \mid \bhist{\ell}^{+}, \rebuild_\ell=0, \bhist{\ell}^{-}}}
  \end{align*}

  It therefore suffices to show that
  \begin{align*}
  \Pr{\bs_\ell^{+} \in  \suit_\ell^{+} \mid \bhist{\ell}^{+}, \rebuild_\ell=0, \bhist{\ell}^-} \ge \highprob. 
  \end{align*}

    Now we note that $\bs_\ell$ and $\suit_\ell$ would not change during an update unless the update triggers the \(\construct(\ell)\). Hence, given the condition $\rebuild_\ell=0$, we can replace \(\bs_\ell^{+} \in  \suit_\ell^{+} \) with 
    \(\bs_\ell^{-} \in  \suit_\ell^{-}\). This makes it sufficient to show that  
    \begin{align}
    \label{ineqdfhirgho}
  \Pr{\bs_\ell^{-} \in  \suit_\ell^{-} \mid \bhist{\ell}^{+}, \rebuild_\ell=0, \bhist{\ell}^-} \ge \highprob. 
  \end{align}

  Then we highlight that \(\bhist{\ell}^{+}\) is a function of \(\bhist{\ell}^{-}\), given $\rebuild_\ell=0$, and by Fact \ref{fact} we have
  \begin{align}
  \label{eqref:sfioweors}
  \Pr{\bs_\ell^{-} \in  \suit_\ell^{-} \mid \bhist{\ell}^{+}, \rebuild_\ell=0, \bhist{\ell}^-} = \Pr{\bs_\ell^{-} \in  \suit_\ell^{-} \mid \rebuild_\ell=0, \bhist{\ell}^-}. 
  \end{align}

    Lastly, we argue that $\rebuild_\ell$ is a function of $\bhist{\ell}^-$. In other words, $\bhist{\ell}^- = \hist{\ell}^-$ either implies $\rebuild_\ell=0$ or is in contradiction with it, and \(\Pr{\bs_\ell^{-} \in  \suit_\ell^{-} \mid \rebuild_\ell=0, \bhist{\ell}^-}\) is only defined for the former. Therefore, we also have 
    \begin{align}
    \label{eqref:suidfsofos}
  \Pr{\bs_\ell^{-} \in  \suit_\ell^{-} \mid \rebuild_\ell=0, \bhist{\ell}^-} = 
  \Pr{\bs_\ell^{-} \in  \suit_\ell^{-} \mid \bhist{\ell}^-} . 
  \end{align}

  Equations \eqref{eqref:suidfsofos} and \eqref{eqref:sfioweors} and our assumption prove the inequality \eqref{ineqdfhirgho}, and thus complete the proof.
\end{proof}

We now extend the history notion to define \(\hists{\ell}\),  incorporating the estimated sample size at layer \(\ell\) along with the history of preceding layers, used to condition the final randomized invariant of our data structure.
\begin{definition}
    Recall that \(\layer_{\ell}\) denotes \((\trip_{\ell}, \elem_{\ell + 1}, \add_{\ell + 1})\) for \(\ell \in [0, T]\). 
    Post-estimate history \(\hists{\ell}\) as
    \begin{align*}
    \hists{\ell} &:= \cbr{\hist{\ell}, \ssize} = \cbr{\setd, \layer_{0}, \dots, \layer_{\ell - 1}, \ssize_{\ell}}.  
    \end{align*}
    We let \(\bhists{\ell}\) denote the random variables corresponding to \(\hists{\ell}\). 
    
    Note that conditioning on the values of \(\bhists{\ell}\) inherently implies \(\bT > \ell\).
\end{definition}

With \(\bhists{\ell}\) defined, we now establish the uniformity of the sampled subset of each layer, conditioned on the post-estimate history of that layer. The proof of this lemma is similar to the proof of Lemma \ref{lm:invar:highprob}. 
\begin{lemma}
\label{lm:invar:uniform}
For all $\ell \in [T]$, let $\sampvals_\ell$ denote the
set of all ordered subsets of size $\ssize_\ell$ with elements of $\buck_{\ell}$. \\
For all \(\ell \in [T]\), the following holds:
\begin{align*}
\Pr{\bS_\ell = \samp \mid \bhists{\ell} = \hists{\ell}}
= \frac{\ind{\samp \in \sampvals_\ell}}{\abs{\sampvals_\ell}}
.
\end{align*}
\end{lemma}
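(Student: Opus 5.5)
We will mirror the induction-over-updates argument of Lemma~\ref{lm:invar:highprob}. Before any update $T=0$, so the claim is vacuous. For a single update, with $X^-$ and $X^+$ denoting values before and after, we assume that for every $\ell \in [T^-]$ we have $\Pr{\bS_\ell^- = \samp \mid \bhists{\ell}^- = \hists{\ell}^-} = \ind{\samp \in \sampvals_\ell^-}/\abs{\sampvals_\ell^-}$. We then prove the same for $+$. Let $\rebuild_\ell$ be the indicator that $\construct(\ell)$ was invoked during this update, directly or through a lower layer. By iterated expectation, it suffices to prove the uniformity conditioned additionally on $\rebuild_\ell = 1$ and on $\rebuild_\ell = 0$. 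The only subtlety is that the target is a point probability. Each conditional version must therefore equal the same function of $\hists{\ell}^+$, namely $\ind{\samp\in\sampvals_\ell}/\abs{\sampvals_\ell}$, since $\sampvals_\ell$ is determined by $\buck_\ell$ and $\ssize_\ell$.

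\emph{Case $\rebuild_\ell = 1$.} During the last execution of $\construct(\ell)$, the layer $\ell$ variables are produced in this order: $\buck_\ell$ by \chbucket{} (deterministic given $\layer_{\ell-1}$), then $\ssize_\ell$ by \estimatesize{} (using only its own fresh permutations), then $\samp_\ell$ at line~\ref{line:construct_sample}. $\samp_\ell$ is a uniformly random ordered subset of size $\ssize_\ell$ drawn with fresh randomness, independent of everything computed so far. The history $\hists{\ell}^+$ consists of $\setd$, $\layer_0,\dots,\layer_{\ell-1}$ and $\ssize_\ell$, and all of these are fixed before line~\ref{line:construct_sample} executes; nothing in layers $<\ell$ changes afterwards. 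We also need that the event $\rebuild_\ell=1$ carries no information about $\samp_\ell$. It is determined by the update and by the state of layers $<\ell$ and of $\buck_\ell$ \emph{before} the rebuild, none of which depend on the new sample. Conditioning on all of these (and on the old state) therefore leaves $\bS_\ell^+$ uniform over $\sampvals_\ell^+$. The case $\bT^- < \ell \le \bT^+$ falls here automatically.

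\emph{Case $\rebuild_\ell = 0$.} This forces $\bT^- \ge \ell$, so we may additionally condition on $\bhists{\ell}^-$. Since layer $\ell$ is not rebuilt, $\bS_\ell^+ = \bS_\ell^-$ and $\sampvals_\ell^+ = \sampvals_\ell^-$. We then show that $\bhists{\ell}^+$ is a deterministic function of $\bhists{\ell}^-$ given $\rebuild_\ell=0$. Indeed, layers $<\ell$ are not rebuilt either, and only $\setd$ and the $\add$ buffers change, deterministically given the update and the old state. Fact~\ref{fact} then lets us drop $\bhists{\ell}^+$ from the conditioning. Next we argue that $\rebuild_\ell$ is itself a function of $\bhists{\ell}^-$. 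The rebuild conditions at lines~\ref{line:insert_checkforconstruct} and~\ref{line:delete_cond} for layers $j\le \ell$ depend on $\add_j,\elem_j,\buck_j,\setd$ and on $\inspect$ against $\trip_{j-1}$. All of these are recoverable from $\setd$ and $\layer_0,\dots,\layer_{\ell-1}$, with $\buck_j$ computable from $\elem_j$ and $\uni_{j-1}$. Fact~\ref{fact} lets us drop $\rebuild_\ell=0$ as well, and the inductive hypothesis finishes this case.

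The main obstacle is the $\rebuild_\ell = 1$ case. It must be verified carefully that no variable in $\hists{\ell}^+$, and not the conditioning event itself, is influenced by the choice of $\samp_\ell$. In particular, $\layer_{\ell-1}$ includes $\elem_\ell$ and $\add_\ell$, which are set before sampling. We must also check that the recursive calls $\construct(\ell+1),\dots$ do not modify any component of $\hists{\ell}$, which follows as in the proof of Lemma~\ref{lm:invar:expand}. A secondary point is checking that $\rebuild_\ell$ depends only on pre-update quantities and not on the old $\bS_\ell^-$ beyond what $\layer_{\ell-1}$ and $\setd$ record. The deletion test uses $\buck_\ell$, which is determined by $\elem_\ell$ and $\uni_{\ell-1}$ and hence lies in the history.
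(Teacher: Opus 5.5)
Your proposal is correct and is essentially the paper's own argument. The paper proves this lemma only by pointing to the proof of Lemma~\ref{lm:invar:highprob}, and you carry out exactly that induction over updates: you split on whether $\construct(\ell)$ was invoked, use the fresh sampling randomness in the rebuilt case, and in the non-rebuilt case use Fact~\ref{fact} together with the key observation that the deletion test depends on $\buck_\ell$ (a function of $\elem_\ell$ and $\uni_{\ell-1}$) rather than on $\samp_\ell$.

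One small precision, shared with the paper: the rebuild indicator is not literally a function of the pre-update history, since a rebuild triggered at a lower layer may terminate before reaching layer $\ell$. However, on the event $\bT \ge \ell$ after the update, which conditioning on $\bhists{\ell}$ enforces, ``layer $\ell$ not rebuilt'' coincides with \emph{no rebuild condition firing at any layer $j \le \ell$}, and that event is determined by the pre-update history.
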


\subsection{Effect of deletions on Approximation}
\label{sec:effect:del}
Since the algorithm returns $\indep_T \setminus D$ (more precisely, the best $\indep_T \setminus D$ among its instances), it suffices to bound $f(\optInd)$ by $f(\indep_T \setminus D)$. Having already established a connection between $f(\optInd)$ and $w(\indep_T)$, this section analyzes the relationship between $w(\indep_T)$ and $f(\indep_T \setminus D)$ and thereby finishes the approximation analysis. The proofs of some of the following lemmas rely on the randomized components of our data structure to capture the impact of removing $\setd$ from the final output.

The following Lemma bounds the weight of elements added to \(\uni\) at layer \(\ell \). 
\begin{lemma}\label{lm:quality_sample_count}
  For any $\ell \in [T]$, the following holds:
  \begin{align*}
    \Ex{w(\uni_\ell \backslash \uni_{\ell-1}) \mid \bhists{\ell} = \hists{\ell} }\ge 
    \Bigl(1- \frac{3\epsSamp}{2}\Bigr)
    \tau_\ell\ssize_\ell\ind{\ssize_\ell \in \suit_\ell}.
  \end{align*}
\end{lemma}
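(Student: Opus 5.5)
The plan is to express $w(\uni_\ell \setminus \uni_{\ell-1})$ in terms of the elements of $\samp_\ell$ that survive inspection. Then I will lower-bound each such element's weight by $\tau_\ell$ and apply Lemma~\ref{lm:suit1impliessuit2}(\ref{lm:suitcon1}) together with the uniformity of the sample from Lemma~\ref{lm:invar:uniform}.

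First, the deterministic part. By Lemma~\ref{lm:invar:expand}, $\trip_\ell = \expand(\samp_\ell, \trip_{\ell-1}, \tau_\ell)$. Each call to $\adjust$ either leaves $\uni'$ unchanged or adds exactly the processed element. Hence $\uni_\ell \setminus \uni_{\ell-1} = \samp_\ell \cap \uni_\ell \setminus \uni_{\ell-1}$, and this set consists exactly of those $e_j \in \samp_\ell$ that passed $\inspect$ when processed. As in property (b) in the proof of Lemma~\ref{lm:fU=wU}, such an $e_j$ was not already in $\uni$.

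For each such $e_j$, Lemma~\ref{w=delt} (using $e_j \in \uni_\ell$) and Lemma~\ref{lm:w_withinlevel} give
\[
w(e_j) = \delt_\ell(e_j) = \relcont{e_j}{\uni_{\ell,j-1}} \ge \tau_\ell ,
\]
where the inequality is exactly the $\inspect$ test. Therefore $w(\uni_\ell \setminus \uni_{\ell-1}) \ge \tau_\ell \abs{\samp_\ell \cap \uni_\ell \setminus \uni_{\ell-1}}$.

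I also need $\samp_\ell \cap \uni_\ell$ and $\samp_\ell \cap (\uni_\ell\setminus\uni_{\ell-1})$ to coincide, or at least the former to be the quantity bounded by Lemma~\ref{lm:suit1impliessuit2}. Elements of $\buck_\ell$ have $\relcont{e}{\uni_{\ell-1}} \ge \tau_\ell > 0$, so $\buck_\ell \cap \uni_{\ell-1} = \emptyset$, and the two sets are equal. This yields the pointwise bound $w(\uni_\ell\setminus\uni_{\ell-1}) \ge \tau_\ell\abs{\samp_\ell\cap\uni_\ell}$.

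Next, the probabilistic part. I condition on $\bhists{\ell} = \hists{\ell}$, which fixes $\trip_{\ell-1}$, $\elem_\ell$, and $\add_\ell$ through $\layer_{\ell-1}$. Since $\chbucket$ is deterministic given $\elem_\ell$ and $\uni_{\ell-1}$, this also fixes $\buck_\ell$ and $\tau_\ell$, and hence $\suit_\ell$ as a deterministic set, along with $\ssize_\ell$. If $\ssize_\ell \notin \suit_\ell$, the right-hand side is $0$ and the claim is trivial because weights here are nonnegative. Otherwise, by Lemma~\ref{lm:invar:uniform}, $\bS_\ell$ is conditionally a uniformly random ordered $\ssize_\ell$-subset of $\buck_\ell$. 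Its distribution is then exactly that of the sample $\samp$ in Lemma~\ref{lm:suit1impliessuit2}, with $\trip_\ell$ playing the role of $\trip''$. Part~\ref{lm:suitcon1} of that lemma gives $\Ex{\abs{\samp_\ell\cap\uni_\ell}\mid \bhists{\ell}=\hists{\ell}} \ge (1-\frac{3\epsSamp}{2})\ssize_\ell$, and multiplying by $\tau_\ell$ finishes the proof.

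The main obstacle is justifying that the conditioning is legitimate. Specifically, I must show that under $\bhists{\ell}=\hists{\ell}$ the quantities $\buck_\ell$, $\tau_\ell$, $\trip_{\ell-1}$ are determined, and that $\trip_\ell$ is the deterministic $\expand$ of the conditionally uniform sample. Otherwise, conditioning on later-layer information might bias $\bS_\ell$. I would handle this by appealing to Lemma~\ref{lm:invar:uniform} and Fact~\ref{fact}, noting that $\hists{\ell}$ contains only $\setd$, the layers $\layer_0,\dots,\layer_{\ell-1}$, and $\ssize_\ell$, and no information about $\samp_\ell$ itself.
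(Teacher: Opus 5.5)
Your proposal is correct and takes essentially the same route as the paper. You bound each element of $\uni_\ell \setminus \uni_{\ell-1}$ pointwise by $\tau_\ell$ via Lemma~\ref{w=delt} and the \textsc{Inspect} threshold. You dispose of the case $\ssize_\ell \notin \suit_\ell$ with a case split on a quantity fixed by $\hists{\ell}$, where the paper uses the law of total expectation. You then finish with part~\ref{lm:suitcon1} of Lemma~\ref{lm:suit1impliessuit2}, as the paper does.

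Two of your explicit checks fill in steps the paper leaves implicit. First, you verify that $\buck_\ell \cap \uni_{\ell-1} = \emptyset$, which gives $\uni_\ell\setminus\uni_{\ell-1} = \samp_\ell \cap \uni_\ell$. Second, you note that Lemma~\ref{lm:invar:uniform} supplies the conditionally uniform sample that Lemma~\ref{lm:suit1impliessuit2} needs.
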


\begin{proof}
We consider a fixed $\ell \in [T]$ and a fixed condition $\bhists{\ell} = \hists{\ell}$.

Since the weight $w(\uni_\ell \backslash \uni_{\ell-1})$ is non-negative, by the law of total expectation we have
\begin{align*}
\Ex{w(\uni_\ell \backslash \uni_{\ell-1}) \mid \bhists{\ell} = \hists{\ell} } &\ge \Ex{w(\uni_\ell \backslash \uni_{\ell-1}) \mid \bhists{\ell} = \hists{\ell}, \ssize_\ell \in \suit_\ell}\ind{\ssize_\ell \in \suit_\ell}.
\end{align*}

Recall that by Lemma~\ref{lm:invar:expand}, we have
\[
\trip_{\ell} = \expand(\samp_{\ell}, \trip_{\ell-1}, \tau_{\ell}),
\]
where during $\expand$ each sampled element is processed by $\adjust$, and those passing the $\inspect$ test are added to $\uni_{\ell}$. Thus, the newly added elements $\uni_\ell \setminus \uni_{\ell-1}$ are those in $\samp_\ell$ that pass the $\inspect$ test, also equal to $\samp_{\ell} \cap \uni_\ell$. 

The $\inspect$ test guarantees that the relative contribution of any element $e$ added to $\uni$ satisfies $\delt_\ell(e) \ge \tau_\ell$. Moreover, by Lemma~\ref{w=delt}, we have $w(e) = \delt_\ell(e)$ for any such element. Therefore, for any newly added element $e$, $w(e) \ge \tau_\ell$.

Furthermore, by construction, $w(\uni_\ell \backslash \uni_{\ell-1})$ is the sum of the weights of the elements in $\uni_\ell \backslash \uni_{\ell-1}$.
We therefore have
\begin{align*}
\Ex{w(\uni_\ell \backslash \uni_{\ell-1}) \mid \bhists{\ell} = \hists{\ell}, \ssize_\ell \in \suit_\ell} &= \Ex{\sum_{e \in \uni_\ell \setminus \uni_{\ell-1}} w(e) \mid \bhists{\ell}= \hists{\ell}, \ssize_\ell \in \suit_\ell} \\
&\ge \Ex{\sum_{e \in \samp_{\ell} \cap \uni_\ell} \tau_\ell \mid \bhists{\ell}= \hists{\ell}, \ssize_\ell \in \suit_\ell} \\
&= \tau_\ell \Ex{\abs{\samp_{\ell} \cap \uni_\ell} \mid \bhists{\ell}= \hists{\ell}, \ssize_\ell \in \suit_\ell}.
\end{align*}

Since $\samp_\ell$ is an ordered sample of $\ssize_\ell$ elements from $\buck_\ell$, the condition $\ssize_{\ell} \in \suit_\ell$ and Lemma~\ref{lm:suit1impliessuit2} give 
\begin{align*}
 \Ex{\abs{\samp_{\ell} \cap \uni_\ell} \mid \bhists{\ell} = \hists{\ell}, \ssize_\ell \in \suit_\ell} \ge (1- \frac{3\epsSamp}{2})\ssize_{\ell}, 
\end{align*}
Combining the above with the initial inequality yields the desired bound
\begin{align*}
\Ex{w(\uni_\ell \backslash \uni_{\ell-1}) \mid \bhists{\ell} = \hists{\ell} }&\ge (1- \frac{3\epsSamp}{2}) \tau_\ell\ssize_\ell\ind{\ssize_\ell \in \suit_\ell}.
\end{align*}
\end{proof}

The following lemma bounds the total weight of elements added to $\uni$ at layer $\ell$ that belong to $\setd$, thereby quantifying the expected loss in the weight of $\uni$ at this layer, due to the deletions.
\begin{lemma}\label{lm:quality_deleted}
  For any $\ell \in [T]$, the following holds:
  \begin{align*}
    \Ex{w((\uni_\ell \backslash \uni_{\ell-1}) \cap D ) \mid \bhists{\ell} = \hists{\ell} }\le
    \epsDel (1+\epsBuck) \tau_\ell \ssize_\ell.
  \end{align*}
\end{lemma}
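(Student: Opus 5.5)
We bound the weight pointwise and then use the uniformity of the sample. The set $(\uni_\ell \setminus \uni_{\ell-1}) \cap D$ consists of sampled elements $e \in \samp_\ell \cap D$ that passed the $\inspect$ test during $\expand$. For each such element, Lemma~\ref{w=delt} gives $w(e) = \delt_\ell(e)$. By the argument in Lemma~\ref{lm:fU=wU} (using Lemma~\ref{lm:w_withinlevel}), $\delt_\ell(e) = \relcont{e}{\uni_{\ell,j-1}}$, where $\uni_{\ell,j-1} \supseteq \uni_{\ell-1}$ is the intermediate union set. By submodularity this is at most $\relcont{e}{\uni_{\ell-1}}$. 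Since $e \in \buck_\ell$, the definition of $\chbucket$ gives $\relcont{e}{\uni_{\ell-1}} < (1+\epsBuck)\tau_\ell$. Hence, deterministically,
\[
w((\uni_\ell \setminus \uni_{\ell-1}) \cap D) \le (1+\epsBuck)\tau_\ell \,\abs{\samp_\ell \cap D}.
\]
It therefore suffices to show that $\Ex{\abs{\samp_\ell \cap D} \mid \bhists{\ell} = \hists{\ell}} \le \epsDel \ssize_\ell$.

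For this step we use the conditioning. The history $\hists{\ell}$ contains $\setd$, $\layer_{\ell-1}$ (which determines $\elem_\ell$ and hence $\buck_\ell$, since the bucket is computed from $\elem_\ell$ and $\uni_{\ell-1}$), and $\ssize_\ell$. So once we condition on $\bhists{\ell}=\hists{\ell}$, the sets $D$ and $\buck_\ell$ are fixed. By Lemma~\ref{lm:invar:uniform}, $\samp_\ell$ is then a uniformly random ordered subset of size $\ssize_\ell$ of $\buck_\ell$. Each position of $\samp_\ell$ is therefore uniform on $\buck_\ell$, and linearity of expectation gives
\[
\Ex{\abs{\samp_\ell\cap D} \mid \bhists{\ell}=\hists{\ell}} = \ssize_\ell \frac{\abs{\buck_\ell\cap D}}{\abs{\buck_\ell}} \le \epsDel\ssize_\ell,
\]
where the last inequality is Lemma~\ref{lm:reconstruction_condition}. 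Multiplying by $(1+\epsBuck)\tau_\ell$ gives the claim.

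The main subtlety is the justification that $\buck_\ell$ and $D$ are genuinely determined by the conditioning. The bound $\abs{\buck_\ell\cap D}\le\epsDel\abs{\buck_\ell}$ must hold for the realized values in $\hists{\ell}$ rather than only on average. This should follow because $D$ is part of the history and $\buck_\ell$ is a deterministic function of $(\elem_\ell, \uni_{\ell-1})$, modulo the deterministic tie-breaking in $\argmax$. A second point to check is that the element-to-bucket bound uses $\uni_{\ell-1}$ and not the possibly smaller set at the time of sampling. This holds since $\uni$ only grows within $\expand$, so submodularity applies in the right direction.
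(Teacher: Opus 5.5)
Your proof is correct and follows the paper's argument: a deterministic per-element bound $w(e)=\delt_\ell(e)\le \relcont{e}{\uni_{\ell-1}}<(1+\epsBuck)\tau_\ell$ via Lemma~\ref{w=delt}, submodularity, and the bucket definition, combined with $\Ex{\abs{\samp_\ell\cap\setd}}\le\epsDel\ssize_\ell$ from the uniformity in Lemma~\ref{lm:invar:uniform} and the rebuild condition of Lemma~\ref{lm:reconstruction_condition}. The only difference is that you explicitly check that $\setd$, $\buck_\ell$, and $\tau_\ell$ are fixed by conditioning on $\bhists{\ell}$, which the paper leaves implicit.
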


\begin{proof}
We consider a fixed $\ell \in [T]$ and condition on $\bhists{\ell} = \hists{\ell}$.
By Lemma~\ref{lm:invar:expand}, we have
\[
\trip_{\ell} = \expand(\bS_\ell, \trip_{\ell-1}, \tau_{\ell}),
\]
%Recall that $\bS_\ell$ is a subset of $\buck_\ell$ of size $\ssize_\ell$, and 
which means $\uni_\ell \setminus \uni_{\ell-1}$ is the set of elements in $\bS_\ell$ that are added to $\uni_{\ell}$ during the $\expand$ procedure.  
Thus, $\uni_\ell \setminus \uni_{\ell-1} \subseteq \bS_\ell$, and therefore
\[
\abs{(\uni_\ell \setminus \uni_{\ell-1}) \cap \setd} 
\le \abs{\bS_\ell \cap \setd} 
\]

By Lemma \ref{lm:invar:uniform}, we know that \(\bS_\ell\) is a uniform sample of size $\ssize_\ell$ from \(\buck_\ell\), and by Lemma \ref{lm:reconstruction_condition}, we have
\(
\abs{\buck_{\ell} \cap \setd} \leq \epsDel \abs{\buck_{\ell}}.
\) 
Therefore, we have
\[
\Ex{\abs{(\uni_\ell \setminus \uni_{\ell-1}) \cap \setd}}
\le \Ex{\abs{\bS_\ell \cap \setd} } 
= \frac{\ssize_{\ell}}{\abs{\buck_{\ell}}} \cdot \abs{\buck_{\ell} \cap \setd}
\le \frac{\ssize_{\ell}}{\abs{\buck_{\ell}}} \cdot \epsDel \abs{\buck_{\ell}}
\le  \epsDel \ssize_{\ell}.
\]

Furthermore, we know by Lemma~\ref{w=delt} that $w(e) = \delt_\ell(e)$ for any element in $\uni_\ell$. On the other hand, for any element $e \in \buck_i$, we know that $\relcont{e}{\uni_{\ell - 1}} \le (1+\epsBuck)\tau_\ell $. Also, by submodularity we know that the marginal contribution of element $e$ at the time of its addition to $\uni_{\ell}$ is less than or equal to its contribution with respect to $\uni_{\ell - 1}$, i.e., $\delt_\ell(e) \le \relcont{e}{\uni_{\ell - 1}}$. Therefore, for any element $e \in (\uni_\ell \backslash \uni_{\ell-1})$, we have 
\[w(e) \le (1+\epsBuck)\tau_\ell.\]

By construction, 
$w((\uni_\ell \setminus \uni_{\ell-1}) \cap \setd)$ 
is the sum of the weights of all elements in $(\uni_\ell \setminus \uni_{\ell-1}) \cap \setd$.  
Combining the above, we obtain
\begin{align*}
w((\uni_\ell \setminus \uni_{\ell-1}) \cap \setd) 
&= \sum_{e \in (\uni_\ell \setminus \uni_{\ell-1}) \cap \setd} w(e) \\
&\le \sum_{e \in (\uni_\ell \setminus \uni_{\ell-1}) \cap \setd} (1+\epsBuck)\tau_\ell \\
&= (1+\epsBuck)\tau_\ell \cdot \abs{(\uni_\ell \setminus \uni_{\ell-1}) \cap \setd}
\end{align*}

Combining the above proves that conditioned on $\bhists{\ell} = \hists{\ell}$,
\[
\Ex{w((\uni_\ell \setminus \uni_{\ell-1}) \cap \setd)}
\le \epsDel (1+\epsBuck) \tau_\ell \ssize_\ell. 
\]

\end{proof}

Building on the previous results, the following lemma demonstrates that $\uni_T$ retains almost all of its weight after the removal of $\setd$.
\begin{lemma}\label{lm:approx:w_U_D__w_U}
The following holds:
  \[
  \Ex{{w}({\uni_T} \setminus \setd)} \ge (1-\eps_2) \Ex{{w}({\uni_T})}, 
  \]
where 
  $\eps_2 := \frac{\epsDel(1+\epsBuck)}{1 - \tfrac{3}{2}\epsSamp} + \frac{8\epsP}{k^{8}\eps}$.
\end{lemma}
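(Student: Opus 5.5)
The plan is to decompose $\uni_T$ into the layer increments $\uni_\ell \setminus \uni_{\ell-1}$ for $\ell \in [T]$. This uses $\uni_0 = \emptyset$ (Lemma~\ref{lm:initialII'}) and the monotone growth of $\uni$. Both $w(\uni_T)$ and $w(\uni_T \cap \setd)$ then split as sums over layers. It suffices to show
\[
\Ex{w(\uni_T \cap \setd)} \le \eps_2\, \Ex{w(\uni_T)}.
\]
We compare the two sums layer by layer, conditioning on the post-estimate history $\bhists{\ell}$ at each layer.

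Fix a layer $\ell$ and condition on $\bhists{\ell} = \hists{\ell}$. Lemma~\ref{lm:quality_deleted} gives
\[
\Ex{w((\uni_\ell\setminus\uni_{\ell-1})\cap\setd)\mid \hists{\ell}} \le \epsDel(1+\epsBuck)\tau_\ell \ssize_\ell.
\]
Lemma~\ref{lm:quality_sample_count} gives
\[
\Ex{w(\uni_\ell\setminus\uni_{\ell-1})\mid\hists{\ell}} \ge (1-\tfrac32\epsSamp)\tau_\ell\ssize_\ell\ind{\ssize_\ell\in\suit_\ell}.
\]
When $\ssize_\ell \in \suit_\ell$, dividing yields the per-layer bound
\[
\Ex{w(\text{deleted part})\mid \hists{\ell}} \le \frac{\epsDel(1+\epsBuck)}{1-\frac32\epsSamp}\,\Ex{w(\text{layer increment})\mid \hists{\ell}},
\]
which is exactly the first term of $\eps_2$. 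Removing the conditioning via the tower property and summing over $\ell$ handles all such layers. To make the sum over a random number $T$ of layers legitimate, I would write it as $\sum_{\ell\ge1}\ind{\bT\ge\ell}(\cdot)$. The event $\bT\ge\ell$ is determined by $\bhists{\ell}$, so the conditional expectations factor correctly.

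The main obstacle is the bad event $\ssize_\ell\notin\suit_\ell$. There the lower bound of Lemma~\ref{lm:quality_sample_count} is vacuous, but the deleted weight can still be as large as $\epsDel(1+\epsBuck)\tau_\ell\ssize_\ell$. By Lemma~\ref{lm:invar:highprob}, given $\bhist{\ell}$ this event has probability at most $\epsP/k^{10}$. I would bound the deleted weight of that layer crudely by $(1+\epsBuck)\tau_\ell \ssize_\ell \le (1+\epsBuck)\tau_\ell k$, since at most $k$ sampled elements matter in $\indep$ in spirit. A safer cap is that all of $\uni_\ell \setminus \uni_{\ell-1}$ consists of elements of value at most $2\tau_\ell$, together with the parallel-run normalisation $\tau \ge \frac{\eps}{4k}2^{i}$ and $f(\optInd)$ comparable to $2^i$. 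This charges the bad layers to $\tau k \cdot \frac{4}{\eps}$, which relates to $\Ex{w(\uni_T)}$. Summing the resulting $\frac{\epsP}{k^{10}}\cdot O(\frac{k^2}{\eps})$ contributions over layers should produce the additive $\frac{8\epsP}{k^8\eps}$ term, relative to a lower bound on $\Ex{w(\uni_T)}$ of order $\tau$. That lower bound holds because the first nonempty layer contributes at least one element of weight $\ge\tau$ with high probability. I expect the exact constant $8$ and power $k^8$ to come out of this bookkeeping, and I would tune the crude cap to match. Combining the good-layer and bad-layer contributions gives $\Ex{w(\uni_T\cap\setd)}\le\eps_2\Ex{w(\uni_T)}$, hence the claim.
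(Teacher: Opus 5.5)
Your skeleton matches the paper's. You decompose $w(\uni_T)$ into layer gains, condition on the post-estimate history, use Lemmas~\ref{lm:quality_sample_count} and~\ref{lm:quality_deleted} on the good event, and use Lemma~\ref{lm:invar:highprob} for the bad one. The good-event part is fine. The gap is the bad event $\ssize_\ell\notin\suit_\ell$, which you flag as the crux but do not close.

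First, your pointwise cap on a bad layer's deleted weight is not established. The inequality $\ssize_\ell\le k$ is false in general: the sample can be as large as $\abs{\buck_\ell}$, and on the bad event $\ssize_\ell$ is precisely what is uncontrolled. Your ``safer cap'' bounds the value of a single element rather than the total weight of the layer. It also leans on $f(\optInd)\approx 2^i$, which holds in only one instance. The paper instead uses $w((\uni_\ell\setminus\uni_{\ell-1})\cap\setd)\le w(\uni_T)\le 2w(\indep_T)$ (Lemma~\ref{lm:wU<2wI}), together with $\abs{\indep_T}\le k$ and $w(e)\le f(e)<\ratio\,\tau$. The last bound holds in every instance by the threshold window of Algorithm~\ref{alg:parallel}. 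Together these give the deterministic cap $\frac{8k^2}{\eps}\tau$. Your intuition that only $k$ elements matter can also be made rigorous, but it needs an argument you did not give. For $\eps<1$, an element accepted at layer $\ell$ has $\delt(e)\in[\tau_\ell,(1+\eps)\tau_\ell)$. Eviction needs a factor $2$, so no later element of the same bucket can evict it. Hence $\uni_\ell\setminus\uni_{\ell-1}\subseteq\indep_\ell$ has at most $k$ elements.

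Second, and more seriously, your denominator is wrong. Multiplying the cap by $\epsP/k^{10}$ and summing over layers gives about $\frac{8\epsP}{k^{8}\eps}\,\tau\,\Ex{\bT}$. Dividing by a single global lower bound of order $\tau$ on $\Ex{w(\uni_T)}$, coming from the first nonempty layer, leaves a factor $\Ex{\bT}$. That factor is polylogarithmic, not constant, so the stated $\eps_2$ does not follow, and re-tuning the cap cannot remove it.

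The missing fact is per layer and deterministic: every layer $\ell\le\bT$ adds to $\uni$ at least one element of weight at least $\tau_\ell\ge\tau$. Every element of $\buck_\ell$ passes \inspect{} against $\trip_{\ell-1}$ with threshold $\tau_\ell$. This is because it passed with threshold $\tau$, the exchange test does not depend on the threshold, and bucket membership gives marginal at least $\tau_\ell$. So the first column of the matrix in \estimatesize{} is all ones, $\ssize_\ell\ge 1$, and the first sampled element is accepted. Hence $\Ex{w(\uni_\ell\setminus\uni_{\ell-1})\mid\bhist{\ell}}\ge\tau$. The bad-event contribution is then at most $\frac{\epsP}{k^{10}}\cdot\frac{8k^2}{\eps}\tau\le\frac{8\epsP}{k^{8}\eps}\Ex{w(\uni_\ell\setminus\uni_{\ell-1})\mid\bhist{\ell}}$, layer by layer. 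Summing over layers gives exactly $\eps_2$, which is how the paper closes the argument.
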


\begin{proof}
\newcommand{\bw}{\ensuremath{\mathbf{w}}}
\newcommand{\bd}{\ensuremath{\mathbf{d}}}
For each $\ell$, define the respective gain and loss of level $\ell$ as 
\[
\bw_\ell := w(\uni_\ell \setminus \uni_{\ell-1}), \quad 
\bd_\ell := w((\uni_\ell \setminus \uni_{\ell-1}) \cap \setd),
\]
with the convention that $\bw_\ell = \bd_\ell = 0$ for $\ell > T$.  
It follows from the definition that 
\[
w(\uni_T) = \sum_{\ell=1}^{\infty} \bw_\ell, 
\quad
w(\uni_T \setminus \setd) = \sum_{\ell=1}^{\infty} (\bw_\ell - \bd_\ell). 
\]

We aim to show that for each $\ell$
\begin{align}
\Ex{\bd_\ell \mid \bhist{\ell}} \le \eps_2 \Ex{\bw_\ell \mid \bhist{\ell}}.
\label{eq:target}
\end{align}
By the law of iterated expectation, this implies 
\[
\Ex{\bd_\ell} \le \eps_2 \Ex{\bw_\ell},
\]
and therefore
\[
\Ex{\bw_\ell - \bd_\ell} \ge (1-\eps_2)\Ex{\bw_\ell}.
\]
Finally, summing over $\ell$ and applying linearity of expectation yields
\[
\Ex{w(\uni_T \setminus \setd)} \ge (1-\eps_2)\Ex{w(\uni_T)},
\]
which completes the proof.

To establish Equation~\eqref{eq:target}, 
%we compare the expectations of $\bw_\ell$ and $\bd_\ell$ conditioned on $\bhists{\ell} = \hists{\ell}$, for any $\hists{\ell}$ with $\Pr{\bhists{\ell} = \hists{\ell}} > 0$. 
it suffices to show that
\begin{align}
\label{eq:modiftarg}
\Ex{\bd_\ell \mid \bhists{\ell} = \hists{\ell}} 
\le \eps_2 \, \Ex{\bw_\ell \mid \bhists{\ell} = \hists{\ell}},
\end{align}
for any $\hists{\ell}$ with $\Pr{\bhists{\ell} = \hists{\ell}} > 0$.

By Lemma~\ref{lm:quality_sample_count}, we have
\[
\Ex{\bw_\ell \mid \bhists{\ell} = \hists{\ell}} 
\ge \Bigl(1- \tfrac{3\epsSamp}{2}\Bigr)\ssize_\ell \tau_\ell \ind{\ssize_\ell \in \suit_\ell}.
\]
On the other hand, by Lemma~\ref{lm:quality_deleted}, we have
\[
\Ex{\bd_\ell \mid \bhists{\ell} = \hists{\ell}} 
\le \epsDel (1+\epsBuck)\tau_\ell \ssize_\ell.
\]
Combining the above gives us 
\[
\ind{\ssize_\ell \in \suit_\ell} \Ex{\bd_\ell \mid \bhists{\ell} = \hists{\ell}}
\le 
\frac{\epsDel (1+\epsBuck)}{1 - \tfrac{3\epsSamp}{2}}\;
\Ex{\bw_\ell \mid \bhists{\ell} = \hists{\ell}}, 
\]
which then implies
\[
\Ex{\bd_\ell \mid \bhists{\ell} = \hists{\ell}, \ssize_\ell \in \suit_\ell}
\le 
\frac{\epsDel (1+\epsBuck)}{1 - \tfrac{3\epsSamp}{2}}\;
\Ex{\bw_\ell \mid \bhists{\ell} = \hists{\ell}}.
\]

% In cases where $s_\ell \in \suit_\ell$, by combining the above, we obtain
% \[
% \Ex{\bd_\ell \mid \bhists{\ell} = \hists{\ell}, s_\ell \in \suit_\ell} 
% \le 
% \frac{\epsDel(1+\epsBuck)}{1 - \tfrac{3}{2}\epsSamp} \,
% \Ex{\bw_\ell \mid \bhists{\ell} = \hists{\ell}, s_\ell \in \suit_\ell}.
% \]

By Lemma \ref{lm:invar:highprob}, we have 
\(
 \Pr{\bs_\ell \in  \suit_\ell \mid \bhist{\ell} = \hist{\ell}} \ge \highprob.
\)
Hence, 
\begin{align*}
 \Pr{\bs_\ell \notin  \suit_\ell \mid \bhist{\ell} = \hist{\ell}} \le \frac{\epsP}{k^{10}}.
 \end{align*}

By above inequalities and the law of total expectation we have
\begin{align*}
      \Ex{{\bd_{\ell}} \mid \bhists{\ell} = \hists{\ell}}
      &=
      \Ex{{\bd_{\ell}} \mid \bhists{\ell} = \hists{\ell}, \ssize_\ell \in \suit_\ell}
      \Pr{\bs_\ell \in \suit_\ell \mid \bhist{\ell} = \hist{\ell}} \\
      &+
      \Ex{{\bd_{\ell}} \mid \bhists{\ell} = \hists{\ell}, \ssize_\ell \notin \suit_\ell}
      \Pr{\bs_\ell \notin \suit_\ell \mid \bhist{\ell} = \hist{\ell}} \\
      &\le
      \Ex{{\bd_{\ell}} \mid \bhists{\ell} = \hists{\ell}, \ssize_\ell \in \suit_\ell} \\
      &+
      \Ex{{\bd_{\ell}} \mid \bhists{\ell} = \hists{\ell}, \ssize_\ell \notin \suit_\ell}
      \Pr{\bs_\ell \notin \suit_\ell \mid \bhist{\ell} = \hist{\ell}} \\
      &\le 
      \frac{\epsDel(1+\epsBuck)}{1 - \tfrac{3}{2}\epsSamp}
      \Ex{\bw_\ell \mid \bhists{\ell} = \hists{\ell}} + 
      \frac{\epsP}{k^{10}}
     \Ex{{\bd_{\ell}} \mid \bhists{\ell} = \hists{\ell}, \ssize_\ell \notin \suit_\ell}.
\end{align*}

By definition, we know that $\bd_{\ell} \le w(\uni_T)$. Then, by Lemma~\ref{lm:wU<2wI}, it follows that $\bd_{\ell} \le 2\, w(\indep_T)$. Moreover, we know that $\indep_T$ contains at most $k$ elements and for each element $e$, we have $w(e) \le f(e) \le \ratio \, \tau$. Therefore, we conclude that $\bd_{\ell}$ is deterministically bounded by $2k \cdot \ratio \tau$.

% \[
% \Ex{{d_{\ell}} \mid \bhists{\ell} = \hists{\ell}, \ssize_\ell \notin \suit_\ell} \le \ratio \tau \ssize_\ell \le .
% \]
Meanwhile, we know that 
\(\Ex{\bw_{\ell} \mid \bhists{\ell} = \hists{\ell}} \ge \tau
\) since conditioning on \(\bhists{\ell} = \hists{\ell}\) implies that $\ell \leq \bT$, which ensures there is at least one element in \(\uni_\ell \setminus \uni_{\ell-1}\). Thus, we obtain the following bound: 
\[\Ex{{\bd_{\ell}} \mid \bhists{\ell} = \hists{\ell}, \ssize_\ell \notin \suit_\ell} \le \frac{8k^2}{\eps} \Ex{\bw_{\ell} \mid \bhists{\ell} = \hists{\ell}}. \]

This gives us 
\begin{align*}
      \Ex{{\bd_{\ell}} \mid \bhists{\ell} = \hists{\ell}}
      &\le 
      \frac{\epsDel(1+\epsBuck)}{1 - \tfrac{3}{2}\epsSamp}
      \Ex{\bw_\ell \mid \bhists{\ell} = \hists{\ell}} + 
      \frac{8\epsP}{k^{8}\eps}
     \Ex{\bw_{\ell} \mid \bhists{\ell} = \hists{\ell}}, 
\end{align*}
which establishes Equation \eqref{eq:modiftarg} and completes the proof.
% Therefore,
%   \begin{align*}
%       \Ex{{d_{\ell}} \mid \hist{\ell}}
%       &\le
%       \eps_1 \Ex{{w_{\ell}} \mid \bhists{\ell} = \hists{\ell}, \ssize_\ell \in \suit_\ell} + \epsilon \Ex{{w_{\ell}} \mid \bhists{\ell} = \hists{\ell}}
%       \\&\le 
%       (2\eps_1 + \eps) \Ex{{\bw_{\ell}} \mid \bhists{\ell} = \hists{\ell}},
%   \end{align*}
%   finishing the proof.
\end{proof}

Using the previous lemma, we extend the earlier guarantee and show that the expected weight of $\indep_T$ remains nearly unchanged after the removal of $\setd$.
\begin{lemma}\label{lm:approx:w_I_D__w_I}
The following holds: 
  \[(1-2\eps_2) \Ex{{w}({\indep_T})} \le  \Ex{{w}({\indep_T} \backslash D)}.\]
\end{lemma}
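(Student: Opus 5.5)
The plan is to transfer the bound of Lemma~\ref{lm:approx:w_U_D__w_U} from $\uni_T$ to $\indep_T$, using that $\indep_T \subseteq \uni_T$ and that $w$ is additive and nonnegative.

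First, since $\indep_T \subseteq \uni_T$ and the weight function is additive over elements, the weight lost by deleting $\setd$ from $\indep_T$ is at most the weight lost from $\uni_T$:
\[
w(\indep_T) - w(\indep_T \setminus \setd) = w(\indep_T \cap \setd) \le w(\uni_T \cap \setd) = w(\uni_T) - w(\uni_T \setminus \setd).
\]
This uses nonnegativity of $w$ on $\uni_T$. By Lemma~\ref{w=delt}, every element of $\uni_T$ has weight equal to its recorded $\delt$ value, which was a marginal gain at insertion time and hence at least $\tau > 0$.

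Taking expectations and applying Lemma~\ref{lm:approx:w_U_D__w_U} gives
\[
\Ex{w(\indep_T \cap \setd)} \le \Ex{w(\uni_T)} - \Ex{w(\uni_T \setminus \setd)} \le \eps_2 \, \Ex{w(\uni_T)}.
\]
Lemma~\ref{lm:wU<2wI} holds deterministically, so $\Ex{w(\uni_T)} \le 2\Ex{w(\indep_T)}$. Combining the two bounds yields
\[
\Ex{w(\indep_T \cap \setd)} \le 2\eps_2 \, \Ex{w(\indep_T)}.
\]
Hence
\[
\Ex{w(\indep_T \setminus \setd)} = \Ex{w(\indep_T)} - \Ex{w(\indep_T \cap \setd)} \ge (1 - 2\eps_2)\,\Ex{w(\indep_T)},
\]
which is exactly the claim.

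There is no real obstacle here. The only points needing care are confirming that $\indep_T \subseteq \uni_T$ and that every weight on $\uni_T$ is nonnegative. The inclusion $\indep_T \subseteq \uni_T$ follows inductively from \adjust: an element is added to $\indep$ only when it is also added to $\uni$, and removals from $\indep$ never touch $\uni$.
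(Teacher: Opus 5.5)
Your proof is correct and is essentially the paper's argument. The paper partitions $\uni_T = \indep_T \cup Z_T$ with $Z_T = \uni_T \setminus \indep_T$ and uses $w(Z_T \setminus D) \le w(Z_T)$, i.e.\ $w(Z_T \cap D) \ge 0$, which is exactly your inequality $w(\indep_T \cap D) \le w(\uni_T \cap D)$; it then invokes Lemma~\ref{lm:wU<2wI} just as you do, while your explicit checks of $\indep_T \subseteq \uni_T$ and of the nonnegativity of $w$ spell out facts the paper uses only implicitly.
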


\begin{proof}
    Let $Z_T = \uni_T \setminus \indep_T$.
By Lemma \ref{lm:approx:w_U_D__w_U} we have:
$$\Ex{{w}({\uni_T} \backslash D)} \ge (1-\eps_2) \Ex{{w}({\uni_T})}.$$
Using the linearity of expectation for partition $\uni_T = \indep_T \cup Z_T$ gives us:
$$\Ex{{w}({\indep_T}\backslash D)} + \Ex{{w}({Z_T}\backslash D)} \ge (1-\eps_2) \left(\Ex{{w}({\indep_T})} + \Ex{{w}({Z_T})}\right).$$
By using the simple bound $\Ex{w(Z_T \backslash D)} \le \Ex{w(Z_T)}$, we obtain:
\begin{align}
    \Ex{{w}({\indep_T}\backslash D)} 
    &\ge (1-\eps_2) \Ex{{w}({\indep_T})} + (1-\eps_2) \Ex{{w}({Z_T})} - \Ex{{w}({Z_T})}
    \notag
    \\&\ge (1-\eps_2)\Ex{{w}({\indep_T})} - \eps_2 \Ex{{w}({Z_T})}.  
    \label{eq:oct19_2127}
\end{align}

By Lemma \ref{lm:wU<2wI} and then applying linearity of expectation to the partition $\uni_T = \indep_T \cup Z_T$ again, we have
  \begin{align*}
    2\Ex{{w}({\indep_T})} \ge \Ex{{w}({\uni_T})} = \Ex{{w}({\indep_T})} + \Ex{{w}({Z_T})}, 
  \end{align*}
which in turn implies that
$$\Ex{{w}({\indep_T})} \ge \Ex{{w}({Z_T})}.$$

Combining this and Equation \eqref{eq:oct19_2127} completes the proof:
\begin{align*}
    \Ex{{w}({\indep_T}\backslash D)} 
    & \ge (1-\eps_2)\Ex{{w}({\indep_T})} - \eps_2 \Ex{{w}({Z_T})}.  
    \\&\ge (1-\eps_2)\Ex{{w}({\indep_T})} - \eps_2 \Ex{{w}({\indep_T})}
    \\& \ge (1-2\eps_2)\Ex{{w}({\indep_T})}.
\end{align*}
\end{proof}

The next lemma bounds the weight of $\indep_T \setminus \setd$ by its function value.
\begin{lemma}\label{lm:approx:w_I_f_I}
The following holds:
  $w(\indep_T \setminus \setd) \le f(\indep_T \setminus \setd)$.
\end{lemma}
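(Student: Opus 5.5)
Order the elements of $\indep_T \setminus \setd$ by the time they entered $\uni$, and compare each element's weight with its marginal gain with respect to the set of its predecessors within $\indep_T\setminus\setd$. By Lemma~\ref{lm:unigrws}, every $e \in \indep_T$ lies in $\uni_T$, so Definition~\ref{def:weightfunction} gives $w(e) = \delt_T(e)$. By Lemma~\ref{w=delt} and the analysis in the proof of Lemma~\ref{lm:fU=wU}, this equals $\relcont{e}{U_e}$, where $U_e$ is the state of $\uni'$ just before $e$ was processed by $\adjust$ inside the $\expand$ call at the layer $i$ with $e \in \uni_i \setminus \uni_{i-1}$ (that layer is unique because $\uni$ only grows along layers). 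Concretely, with $e = e_j$ in $\samp_i$, we have $U_e = \uni_{i,j-1}$ and $\delt_i(e) = \relcont{e}{\uni_{i,j-1}}$.

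This processing order is a global linear order on $\uni_T$: first by layer index, then by position in $\samp_i$. Let $g_1, \dots, g_m$ enumerate $\indep_T \setminus \setd$ in this order. Every $g_r$ with $r<q$ was added to $\uni$ strictly before $g_q$, so $\{g_1,\dots,g_{q-1}\} \subseteq U_{g_q}$. Submodularity then gives
\[
w(g_q) = \relcont{g_q}{U_{g_q}} \le \relcont{g_q}{\{g_1,\dots,g_{q-1}\}}.
\]
Summing over $q$ telescopes:
\[
w(\indep_T\setminus\setd) \le f(\indep_T\setminus\setd) - f(\emptyset) \le f(\indep_T \setminus \setd).
\]
The last inequality uses $f \ge 0$; under the paper's convention $f(\emptyset)=0$ the middle expression already equals $f(\indep_T\setminus\setd)$.

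The only delicate step is showing that the earlier elements really lie in $U_{g_q}$. Removal from $\indep$ never removes anything from $\uni$, and $\uni_{i-1} \subseteq \uni_{i,j-1}$. Together with Lemma~\ref{lm:invar:expand} and Lemma~\ref{lm:w_withinlevel}, which say that $\delt(e)$ is frozen once $e$ enters $\uni'$, this justifies the inclusion. It also justifies identifying $w(e)$ with the marginal gain at insertion time. Nothing probabilistic is needed, and the deletions in $\setd$ play no role beyond passing to a subset.
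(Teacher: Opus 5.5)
Your proof is correct and follows the paper's argument essentially step for step. You order $\indep_T \setminus \setd$ by processing order, identify each $w(e)$ with $\relcont{e}{U_e}$ via Lemmas~\ref{lm:w_withinlevel} and~\ref{w=delt}, and use $\{g_1,\dots,g_{q-1}\}\subseteq U_{g_q}$ with submodularity before telescoping. Two small remarks: the inclusion $\indep_T \subseteq \uni_T$ holds because \adjust{} adds every accepted element to both $\indep$ and $\uni$ and never removes anything from $\uni$, which Lemma~\ref{lm:unigrws} alone does not state; and your explicit handling of $f(\emptyset)$ via $f \ge 0$ is slightly more careful than the paper, which tacitly assumes $f(\emptyset)=0$.
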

\begin{proof}
    Let \( e_1, e_2, \ldots, e_{\abs{\indep_T \setminus \setd}} \) denote the elements of \(\indep_T \setminus \setd\) in their processing order. We have 
    \[f(\indep_T \setminus \setd) = \sum_{i}{\relcont{e_i}{\{e_1, e_2, \ldots, e_{i - 1}\}}}. \] 

    Let's use \(\uni_{e_i}\) to denote the state of \(\uni\) right before processing element \(e_i\).
    By definition we have 
    \[w(\indep_T \setminus \setd) = \sum_{i}{w(e_i)}
    = \sum_{i}\relcont{e_i}{\uni_{e_i}}
    ,\]
    where the second equality holds because of Lemmas \ref{lm:w_withinlevel} and \ref{w=delt}. 

     We know that \(\{e_1, e_2, \ldots, e_{i - 1}\} \subseteq \uni_{e_i}\). Therefore, by submodularity of $f$ we have 
    \[\relcont{e_i}{\uni_{e_i}} \le \relcont{e_i}{\{e_1, e_2, \ldots, e_{i - 1}\}}, \] 
    which completes the proof.
\end{proof}

Finally, by combining the established results, we prove the approximation guarantee of our algorithm in the following theorem.

\begin{theorem}\label{thm:approx:main}
There exists an instance $\mathcal{I}$ produced by Algorithm~\ref{alg:parallel} such that
\[
\mathbb{E}[f(\indep_T \setminus D)] \ge \frac{1}{4}\left(1 - \frac{\eps}{2}\right)(1-2\eps_2) f(\optInd),
\]
where $\optInd$ is the optimal independent set and the expectation is taken over the randomness of the algorithm. Equivalently, the algorithm achieves an expected $(4 + O(\epsilon))$-approximation.
\end{theorem}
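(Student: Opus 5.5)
We will prove the statement by chaining the lemmas of Sections~\ref{sec:approx:first} and~\ref{sec:effect:del}, applied to the instance $\mathcal{I}_{i^*}$ of Algorithm~\ref{alg:parallel} with $2^{i^*-1}\le f(\optInd)<2^{i^*}$, i.e.\ the instance whose threshold is $\tau=\invratio\cdot 2^{i^*}$. We fix this instance for the rest of the argument.

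First, by Lemma~\ref{lm:f_I*<4I}, applied to this instance and holding deterministically for every realization of the algorithm's randomness, we have $(1-\frac{\eps}{2})f(\optInd) < 4w(\indep_T)$. Since $\optInd$ is fixed (the adversary is oblivious, so the current active set and hence $\optInd$ do not depend on the random bits), we take expectations to get $(1-\frac{\eps}{2})f(\optInd)\le 4\,\Ex{w(\indep_T)}$.

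Second, by Lemma~\ref{lm:approx:w_I_D__w_I}, $\Ex{w(\indep_T\setminus D)}\ge(1-2\eps_2)\Ex{w(\indep_T)}$. Third, by Lemma~\ref{lm:approx:w_I_f_I}, which holds pointwise, $f(\indep_T\setminus D)\ge w(\indep_T\setminus D)$, so taking expectations gives $\Ex{f(\indep_T\setminus D)}\ge \Ex{w(\indep_T\setminus D)}$. Concatenating the three inequalities yields
\[
\Ex{f(\indep_T\setminus D)}\ \ge\ (1-2\eps_2)\,\Ex{w(\indep_T)}\ \ge\ \frac{1}{4}\Bigl(1-\frac{\eps}{2}\Bigr)(1-2\eps_2)\, f(\optInd).
\]
Here we use that $1-2\eps_2\ge 0$, which holds for small parameters, so that multiplying the first inequality by it preserves direction. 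For the "equivalently" clause, we choose $\epsDel,\epsSamp,\epsP$ as suitable constant multiples of $\eps$. Then $\eps_2 = \frac{\epsDel(1+\eps)}{1-\frac32\epsSamp}+\frac{8\epsP}{k^8\eps}=O(\eps)$, where $\epsP\le\eps^2$ suffices for the second term. This makes the product $\frac14(1-O(\eps))$, i.e.\ a $(4+O(\eps))$-approximation. Since the algorithm outputs the best $\indep_T\setminus D$ over all instances, its value dominates that of $\mathcal{I}_{i^*}$.

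The main subtlety is the bookkeeping around which instance the lemmas refer to. Lemmas~\ref{lm:approx:w_U_D__w_U} through~\ref{lm:approx:w_I_f_I} hold for every instance. Lemma~\ref{lm:f_I*<4I} singles out $\mathcal{I}_{i^*}$, and its proof uses that every element of $\optInd_{\ge\tau}$ is present in that instance. We must check that the same $\mathcal{I}_{i^*}$ is used throughout, and that $i^*$ is determined by the update sequence alone and not by the random bits, so that expectations may be taken instance-wise. Once this is noted, the rest is direct substitution.
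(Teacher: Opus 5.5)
Your proposal is correct and matches the paper's proof. It chains Lemma~\ref{lm:f_I*<4I}, Lemma~\ref{lm:approx:w_I_D__w_I} and Lemma~\ref{lm:approx:w_I_f_I} for the fixed instance $\mathcal{I}_{i^*}$, and your observation that $i^*$ is determined by the oblivious update sequence (so expectations can be taken for that one instance) states explicitly something the paper leaves implicit.

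One small slip is in the parameter discussion. If $\epsP$ is a constant multiple of $\eps$, the term $\frac{8\epsP}{k^{8}\eps}$ is a constant rather than $O(\eps)$, so you really do need $\epsP = O(\eps^2)$, as you go on to say.
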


\begin{proof}
    Lemma \ref{lm:f_I*<4I} guarantees the existence of an instance $\mathcal{I}$ in Algorithm \ref{alg:parallel} such that
    \[
    \frac{1}{4}\left(1 - \frac{\eps}{2}\right) f(\optInd) < w(\indep_T).
    \]
    By Lemma \ref{lm:approx:w_I_D__w_I}, we further have
    \[
    (1-2\eps_2) \Ex{w(\indep_T)} \le \Ex{w(\indep_T \setminus D)},
    \]
    and by Lemma \ref{lm:approx:w_I_f_I},
    \[
    w(\indep_T \setminus D) \le f(\indep_T \setminus D).
    \]
    Combining these inequalities yields an instance for which the following final bound holds:
    \[
    \frac{1}{4}\left(1 - \frac{\eps}{2}\right) (1-2\eps_2) f(\optInd) \le \Ex{f(\indep_T \setminus D)}.
    \]
    The proof is complete.
\end{proof}

\subsection{Query Time}
\label{sec:query:time}
\newcommand{\relem}{\overline{\elem}}
In this section, we analyze the query complexity of our algorithm. 

We begin by introducing a potential-based framework to track the progress of elements across layers, which enables us to bound the number of layers. 
%The following definition formalizes the notion of potential, which serves as a measure of the remaining potential for contribution of elements at each layer.
\begin{definition}[Potential]
    For any $i \ge 1$, let \(\relem_{i}\) denote \((\elem_{i} \cup \add_{i}) \). 
   For any element \( e \in \relem_{i}\), we define the potential of element
   \( e \) with respect to the layer $i$ as:
    \[P(e, i) := \floor{\log_{1+\eps}(\frac{\relcont{e}{\uni_{i - 1}}}{\tau})} + 1. \]
  For an element \(e\notin \relem_{i}\), we define
  $P(e, i)$ to be zero.

    We also, define the potential of layer $i$ as:
  \[P_i := \sum_{e\in \relem_{i}} P(e, i).\]
  %We also define $P_i$ for $i> T+1$ to be 0.
\end{definition}

We then establish some fundamental properties of the potential function that will be used in our analysis.
\begin{lemma}\label{lm:potential_properties}
The followings hold:
  \begin{align*}
      &\forall e \in \relem_{i}: P(e, i)\in \left[1, \ceil{\log_{1+\epsBuck}(\ratio)}\right] 
      \\&
      |\relem_i| \le P_i \le |\relem_i| \cdot \ceil{\log_{1+\epsBuck}(\ratio)}
      \\&
      P_{T + 1} = 0 % \quad\text{ and }\quad \forall i \le T: P_{i} > 0
  \end{align*}
\end{lemma}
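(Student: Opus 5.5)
The plan is to prove the per-element bound first and then read off the other two claims from it. Fix $i \ge 1$ and $e \in \relem_i = \elem_i \cup \add_i$. The key observation is that every element placed in $\relem_i$ passed an $\inspect$ test against the appropriate triplet. Elements of $\add_i$ were added in \ins{} only when $\inspect(e, \trip_{i-1}, \tau) \ne \err$. For $i \ge 2$, elements of $\elem_i$ were set in line~\ref{ln:filter} of $\construct(i-1)$ under the same test. Elements of $\elem_1$ came from $\elem_1 \cup \add_1$, so they also passed the test against $\trip_0$.

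Since $\trip_{i-1}$ has not changed since these elements entered (any change to $\trip_{i-1}$ comes from $\construct(i-1)$, which resets $\elem_i$ and $\add_i$), each such element satisfies $\relcont{e}{\uni_{i-1}} \ge \tau$. Hence $\log_{1+\eps}(\relcont{e}{\uni_{i-1}}/\tau) \ge 0$, which gives $P(e,i) \ge 1$. One could also route this through Lemma~\ref{lm:invarFilter}, but deleted elements are excluded there, so I will argue directly from the code as above. A slightly delicate point is elements of $\add_i$ inserted after the last rebuild. It is enough to note that $\trip_{i-1}$ is fixed between calls to $\construct(i-1)$, and that such a call empties $\add_i$.

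For the upper bound I use the parallel-run structure. An element is only inserted into instance $\mathcal{I}_i$ with $\tau = \invratio\, 2^i$ when $\invratio\, 2^i \le f(e) < 2^i$, that is, when $f(e) < \ratio\,\tau$. By monotonicity and submodularity, $\relcont{e}{\uni_{i-1}} \le f(e) < \ratio\,\tau$. Therefore $\log_{1+\eps}(\relcont{e}{\uni_{i-1}}/\tau) < \log_{1+\eps}(\ratio)$. Taking the floor and adding one gives $P(e,i) \le \ceil{\log_{1+\epsBuck}(\ratio)}$, because $\floor{x}+1 \le \ceil{x}$ whenever $x$ is not an integer. If $\log_{1+\eps}(\ratio)$ happens to be an integer, the strict inequality still gives $\floor{x} \le \log_{1+\eps}(\ratio) - 1$, so the bound holds in that case too.

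The second claim then follows by summing the first over $e \in \relem_i$. For the third claim, Lemma~\ref{lm:inv:finish} gives $\elem_{T+1} = \emptyset$, and Lemma~\ref{lm:reconstruction_condition} gives $\abs{\add_{T+1}} \le \epsIns \abs{\elem_{T+1}} = 0$. So $\relem_{T+1} = \emptyset$ and $P_{T+1} = 0$. The main obstacle is the invariant bookkeeping in the first step, namely justifying that membership in $\relem_i$ still certifies $\relcont{e}{\uni_{i-1}} \ge \tau$ at the current time. I would handle it with the same "most recent call to $\construct$" argument used in Lemma~\ref{lm:reconstruction_condition}.
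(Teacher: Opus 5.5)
Your proof is correct and takes the same route as the paper. The lower bound comes from the \textsc{Inspect} test (or from $f(e)\ge\tau$ at layer~1), the upper bound from the instance range $f(e)<\frac{4k}{\epsilon}\tau$ together with submodularity, the second claim by summation, and $P_{T+1}=0$ from Lemmas~\ref{lm:inv:finish} and~\ref{lm:reconstruction_condition}. Your version is slightly more careful than the paper's, which writes $\relcont{e}{\uni_{i-1}} > \tau$ where only $\ge \tau$ is guaranteed and leaves the staleness argument and the ceiling arithmetic implicit; that arithmetic is cleanest stated as: $\lfloor x\rfloor$ is an integer strictly less than $\lceil \log_{1+\epsilon}(4k/\epsilon)\rceil$.
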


\begin{proof}

    For any element $e \in \relem_{1}$, we know that $\tau \le f(e) < \tau \cdot \ratio$. 
    Additionally, for any $i > 1$, for any element $e$ in either  \( \elem_{i}\) or \(\add_i\) , we know that \(\inspect(e, \trip_{i - 1}, \tau) \ne \err\).  Hence, 
    \(\relcont{e}{\uni_{i - 1}} > \tau. \) 
    Therefore, the first property always holds.
    
    The second property follows from the first one since
    $P(e, i) = 0$ for $e\notin \celem_{1}$.

    We know by Lemma \ref{lm:inv:finish} that \(\elem_{T + 1} = \emptyset\). Moreover, Lemma \ref{lm:reconstruction_condition} implies that \(\add_{T + 1} = \emptyset\). Hence, \(\relem_{T + 1} = \emptyset\). Therefore, the third property also holds. 
\end{proof}

Next, we show that the potential of an element is decreasing through the layers.

\begin{lemma}\label{lm:potential_decrease_always}
    For any $i \in [1, T]$, and any arbitrary element $e$,
    $P(e, i) \ge P(e, i+1)$ and therefore,
    $P_{i} \ge P_{i+1}$.
\end{lemma}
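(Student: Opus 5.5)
We would prove the per-element inequality $P(e,i) \ge P(e,i+1)$ first, and then sum it over elements. Fix $i \in [1,T]$ and an element $e$. If $e \notin \relem_{i+1}$, then $P(e,i+1)=0$, and $P(e,i)\ge 0$ holds in every case. If $e\in\relem_i$, Lemma~\ref{lm:potential_properties} gives $P(e,i)\ge 1$. If $e\notin\relem_i$, $P(e,i)=0$. So this case is done, and we may assume $e\in\relem_{i+1}=\elem_{i+1}\cup\add_{i+1}$.

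The main step is to show that membership in $\relem_{i+1}$ forces membership in $\relem_i$. We use Lemma~\ref{lm:invarFilter}: with $\celem_\ell=(\elem_\ell\cup\add_\ell)\setminus\setd$, we have $\celem_{i+1}\subseteq\celem_i$. The issue is elements of $\setd$, since $\relem$ does not remove $\setd$. We handle this by a direct structural invariant proved the same way as Lemma~\ref{lm:invarFilter}: $\relem_{i+1}\subseteq\relem_i$. The argument has three parts.
\begin{itemize}
\item When $\construct(i)$ runs, $\elem_{i+1}$ is set to a subset of $\elem_i$ and $\add_{i+1}$ is set to $\emptyset$.
\item An $\ins(v)$ that puts $v$ into $\add_{i+1}$ either already has $v\in\elem_{i+1}\subseteq\elem_i$, or passes $\inspect(v,\trip_i,\tau)$.
\item In the latter case, we use $\uni_{i-1}\subseteq\uni_i$ (Lemma~\ref{lm:unigrws}) together with the structure of $\inspect$ to argue that $v$ also passes at layer $i$. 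Hence $v\in\elem_i$ or $v$ was added to $\add_i$ in the same loop.
\end{itemize}
Deletions never change $\elem$ or $\add$ outside $\construct$. This is the step we expect to be the main obstacle: passing $\inspect$ at $\trip_i$ but not at $\trip_{i-1}$ would require care with the independence part. We would try first to use the same reasoning the authors implicitly invoke in Lemma~\ref{lm:invarFilter} (the "todo by lemma" step), namely monotonicity of $\inspect$ success along $\expand$, proved via Lemma~\ref{lm:circ->circ}. That lemma shows failure with marginal $\ge\tau$ is preserved under $\adjust$, and failure by marginal $<\tau$ is preserved by submodularity.

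Given $e\in\relem_i\cap\relem_{i+1}$, both potentials use the floor formula. By Lemma~\ref{lm:unigrws}, $\uni_{i-1}\subseteq\uni_i$, so submodularity gives $\relcont{e}{\uni_i}\le\relcont{e}{\uni_{i-1}}$. Since $x\mapsto\floor{\log_{1+\eps}(x/\tau)}+1$ is nondecreasing, $P(e,i+1)\le P(e,i)$.

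Finally, $P(e,\cdot)$ vanishes off $\relem$, so $P_i=\sum_{e\in\mathcal N}P(e,i)$ and $P_{i+1}=\sum_{e\in\mathcal N}P(e,i+1)$. Summing the pointwise inequality over all $e\in\mathcal N$ yields $P_i\ge P_{i+1}$.
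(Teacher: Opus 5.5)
Your proof is correct and follows the paper's argument. The case $e\notin\overline{\elem}_{i+1}$ is trivial, and otherwise $\uni_{i-1}\subseteq\uni_i$ (Lemma~\ref{lm:unigrws}), together with submodularity and the monotone floor formula, gives $P(e,i+1)\le P(e,i)$.

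The paper's two-line proof silently uses that $e\in\overline{\elem}_{i+1}$ forces $e\in\overline{\elem}_i$; without this, $P(e,i)=0<P(e,i+1)$. Your invariant $\overline{\elem}_{i+1}\subseteq\overline{\elem}_i$ correctly supplies this step: iterating Lemma~\ref{lm:circ->circ} over the $\adjust$ calls in $\expand$ shows that failing $\inspect$ at $\trip_{i-1}$ persists at $\trip_i$. One correction: $\elem_{i+1}\subseteq\elem_i$ is not itself invariant, since a direct $\construct(i+1)$ can move elements of $\add_i\setminus\elem_i$ into $\elem_{i+1}$. So run the induction on $\overline{\elem}$ only, and handle such a direct call by noting that it only shrinks $\overline{\elem}_{i+1}$.
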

\begin{proof}
  If $e \notin \relem_{i + 1}$, the claim holds because $P(e, i + 1) = 0$. 
  It also holds for $e \in \relem_{i + 1}$ due to the submodularity of $f$ and the fact that $\uni_i \subseteq \uni_{i + 1}$. 
\end{proof}

The following is an auxiliary fact that will be used to bound the expected number of layers in our data structure.
\begin{fact}\label{lm:auxil}
    Let $\bX_0, \bX_1, \dots, \bX_n$ be a sequence of integer positive variables such that $\bX_{i} \le \bX_{i-1}$ and 
    \begin{align*}
        \Ex{\bX_i \mid \bX_1 = X_1, \dots, \bX_{i-1} = X_{i-1}}
        \le 
        (1-\eps') {X_{i-1}}. 
    \end{align*}
    Let $T$ denote the first index $i$ such that $X_T = 0$ and assume that $\bX_0=N$ for some fixed integer $N$. Then
    $\Ex{\bT} \le \frac{\log(N) + 1}{\poly(\eps')}$.
\end{fact}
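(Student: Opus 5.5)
The plan is to pass to logarithms and turn the multiplicative contraction into an additive drift. Since the $\bX_i$ are positive integers until they hit zero, every nonzero value is at least $1$ and at most $N$. Define the stopping time $\bT$ as the first index with $\bX_{\bT}=0$. The quantity $\log \bX_i$ then lives in $[0,\log N]$ while $\bX_i>0$. The goal is to show that each step drops it, in a suitable averaged sense, by $\Omega(\eps')$, or that there is a constant-probability chance of a multiplicative drop.

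The cleanest route is a direct expectation bound rather than a drift argument. Iterating the conditional hypothesis with the tower rule gives $\Ex{\bX_i} \le (1-\eps')^i N$. Then Markov's inequality, together with integrality, yields
\[
\Pr{\bT > i} = \Pr{\bX_i \ge 1} \le \Ex{\bX_i} \le (1-\eps')^i N .
\]
This is exactly where integrality is used: a positive integer is at least $1$. Note that the event $\bX_i \ge 1$ is the same as $\bT>i$, because once the sequence hits $0$ the monotonicity $\bX_i\le \bX_{i-1}$ keeps it at $0$. This also makes the conditional hypothesis harmless after absorption.

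Next, write $\Ex{\bT} = \sum_{i\ge 0} \Pr{\bT>i}$ and split the sum at $i_0 = \lceil \ln N/\eps' \rceil$. For $i<i_0$, bound each term by $1$, which contributes at most $i_0 \le \ln N/\eps' + 1$. For $i\ge i_0$, use
\[
(1-\eps')^i N \le e^{-\eps' i}N \le e^{-\eps'(i-i_0)} ,
\]
so the tail is a geometric series summing to at most $1/(1-e^{-\eps'}) \le 1 + 1/\eps'$. Altogether,
\[
\Ex{\bT} \le \frac{\ln N + 2}{\eps'} + 1 = O\!\left(\frac{\log N + 1}{\eps'}\right),
\]
which is of the claimed form $\frac{\log N+1}{\poly(\eps')}$ for $\eps'\le 1$.

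The only step needing care is the tower-rule iteration. The hypothesis conditions on $\bX_1,\dots,\bX_{i-1}$ (plus the fixed $\bX_0=N$). It is stated for realized values $X_{i-1}$, including $X_{i-1}=0$, where it forces $\bX_i=0$. Taking expectations gives $\Ex{\bX_i}\le (1-\eps')\Ex{\bX_{i-1}}$, and induction finishes that step. Beyond this, the argument is routine.
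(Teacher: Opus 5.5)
The paper states this as a Fact and gives no proof, so there is no argument of its own to compare against. Your proof is correct and complete.

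The core step is exactly right. Taking expectations in the hypothesis gives $\Ex{\bX_i}\le(1-\eps')\Ex{\bX_{i-1}}$, hence $\Ex{\bX_i}\le(1-\eps')^iN$. Absorption at $0$ gives $\{\bT>i\}=\{\bX_i\ge 1\}$, and Markov's inequality with integrality then gives $\Pr{\bT>i}\le(1-\eps')^iN$. Splitting $\sum_i\Pr{\bT>i}$ at $i_0=\lceil \ln N/\eps'\rceil$ actually yields $\Ex{\bT}\le(\ln N+1)/\eps'+2$. Your stated $(\ln N+2)/\eps'+1$ is slightly looser but still valid, since $\eps'\le 1$ is forced by the hypothesis once $N\ge 1$ (otherwise $\Ex{\bX_1}\le(1-\eps')N<0$).

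Your argument buys two things beyond the bare statement. First, it pins down the unspecified $\poly(\eps')$ as simply $\eps'$. Second, it only ever uses the unconditional contraction, so it remains valid when the conditioning is on any richer nested history to which the $\bX_i$ are adapted. This matters for the paper: in Lemma~\ref{cor:bound_T_invariant}, the contraction of the potentials is established conditioned on $\bhist{i}$, not on $P_1,\dots,P_{i-1}$ alone. The Fact as literally stated does not cover that case, but your proof does.

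Two small clean-ups. You can drop the opening paragraph about a logarithmic drift argument, since you never use it. You should also state explicitly that the variables are nonnegative integers: the statement's ``positive'' cannot be meant literally, because the sequence must reach $0$.
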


The following two lemmas quantify the expected decrease in potential across consecutive layers, showing that, in expectation, a substantial fraction of the potential diminishes at each layer. 
\begin{lemma}
  For any value of $i \ge 1$, the following holds:
  \begin{align*}
  \Ex{P_{i} - \bP_{i+1} \big| \bhists{i}=\hists{i}} \ge 
  \frac{\epsSamp}{2} \cdot \abs{\buck_{i}} \cdot
    \ind{\ssize_i \in \suit_i}
    .
  \end{align*}
\end{lemma}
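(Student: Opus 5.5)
We fix $i \ge 1$ and condition on $\bhists{i}=\hists{i}$. Under this conditioning $\elem_i,\add_i,\trip_{i-1},\buck_i,\tau_i$ and $\ssize_i$ are all determined, so $P_i$ is a constant. If $\ssize_i \notin \suit_i$, the right-hand side is zero, and we only need $P_i - \bP_{i+1} \ge 0$ pointwise; this is Lemma~\ref{lm:potential_decrease_always}. So assume $\ssize_i \in \suit_i$. The plan is to lower bound $P_i - \bP_{i+1}$ pointwise by the number of elements of $\buck_i$ that are filtered out at layer $i$, that is, elements $e\in\buck_i$ with $\inspect(e,\trip_i,\tau)=\err$. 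We then take expectations using Lemma~\ref{lm:suit1impliessuit2}(2) together with the uniformity of $\bS_i$ (Lemma~\ref{lm:invar:uniform}).

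For the pointwise bound, write $P_i - P_{i+1} = \sum_{e} (P(e,i) - P(e,i+1))$. Every term is nonnegative by Lemma~\ref{lm:potential_decrease_always}, so it suffices to show each $e \in \buck_i$ with $\inspect(e,\trip_i,\tau)=\err$ contributes at least $1$. Since $\buck_i\subseteq\elem_i$ and, by Lemma~\ref{lm:invarFilter} (or directly, by line~\ref{ln:filter} when layer $i$ was built, together with the way insertions populate $\add_{i+1}$), elements of $\relem_{i+1}$ pass $\inspect$ against $\trip_i$, such an $e$ does not lie in $\relem_{i+1}$. A subtle point is that $\relem$ ignores $\setd$ while Lemma~\ref{lm:invarFilter} is stated for $\celem$. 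We handle this by noting that $\elem_{i+1}$ was produced by the filter with respect to $\trip_i$ and is not changed afterward, and that $\add_{i+1}$ only receives elements passing $\inspect(\cdot,\trip_i,\tau)$. Hence $P(e,i+1)=0$ while $P(e,i)\ge 1$ by Lemma~\ref{lm:potential_properties}, so the contribution is at least $1$. This gives
\[
P_i - \bP_{i+1} \ge \abs{\cbr{e\in\buck_i : \inspect(e,\trip_i,\tau)=\err}}.
\]

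For the expectation, conditioned on $\bhists{i}=\hists{i}$, $\bS_i$ is a uniformly random ordered subset of size $\ssize_i$ of $\buck_i$ (Lemma~\ref{lm:invar:uniform}), and $\trip_i=\expand(\bS_i,\trip_{i-1},\tau_i)$ (Lemma~\ref{lm:invar:expand}). This is exactly the setting of Lemma~\ref{lm:suit1impliessuit2}, with $\buck=\buck_i$, $\trip=\trip_{i-1}$ and threshold $\tau_i$, and it yields an expected count of at least $\frac{\epsSamp}{2}\abs{\buck_i}$.

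The main obstacle is a threshold mismatch. Lemma~\ref{lm:suit1impliessuit2} counts failures of $\inspect(e,\trip'',\tau_i)$, while the filter uses the base threshold $\tau\le\tau_i$, and failure at $\tau_i$ need not imply failure at $\tau$. I would resolve this using the structure of $\buck_i$. For $e\in\buck_i$, failing at $\tau_i$ happens in one of two ways. Either the marginal gain dropped below $\tau_i=(1+\eps)^{j_i}\tau$, in which case $P(e,i+1)\le j_i < j_i+1 = P(e,i)$ (if $e$ survives) and the potential still drops by at least $1$. Or the exchange condition failed, and this condition does not depend on the threshold, so $\inspect(e,\trip_i,\tau)=\err$ as well. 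In both cases the contribution is at least $1$, so I would replace the pointwise bound above by a count of $\tau_i$-failures. This completes the argument.
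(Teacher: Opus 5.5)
Your proof is correct and has the same skeleton as the paper's. You bound $P_i - \bP_{i+1}$ pointwise by the number of elements of $\buck_i$ that fail $\inspect$ against $\trip_i$: each such element has $P(e,i)\ge 1$, and potentials never increase by Lemma~\ref{lm:potential_decrease_always}. You then invoke Lemma~\ref{lm:suit1impliessuit2}(2), using the uniformity of $\bS_i$ given $\bhists{i}=\hists{i}$.

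Where you differ is your last paragraph, and it is a genuine improvement. At layer $i$, Lemma~\ref{lm:suit1impliessuit2} is applied with $\suit_i=\suit(\buck_i,\trip_{i-1},\tau_i)$ and $\trip_i=\expand(\samp_i,\trip_{i-1},\tau_i)$. It therefore only controls failures of $\inspect(e,\trip_i,\tau_i)$. The paper's proof quotes it for $\inspect(e,\trip_i,\tau)$, and the two can differ. An element with $\tau\le \relcont{e}{\uni_i}<\tau_i$ and $\indep_i\cup\cbr{e}\in\mI$ fails the first test but passes the second.

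Your case split closes this gap. If the gain drops below $\tau_i=(1+\eps)^{j_i}\tau$, then $P(e,i+1)\le j_i<j_i+1=P(e,i)$, whether or not $e$ survives to the next layer. This uses that the current $\uni_{i-1}$ is the set used by \chbucket{} when $\buck_i$ was formed, since $\trip_{i-1}$ has not changed since the last $\construct(i)$. If instead the exchange condition fails, that condition does not depend on the threshold, so $e$ also fails at $\tau$.

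One small imprecision: $\elem_{i+1}$ is not frozen after $\construct(i)$, because direct calls to $\construct(i+1)$ can modify it. Those calls only merge in $\add_{i+1}$ and remove deleted elements, and $\add_{i+1}$ contains only elements that passed $\inspect(\cdot,\trip_i,\tau)$ against the same $\trip_i$. So your invariant still holds: every element of $\elem_{i+1}\cup\add_{i+1}$ passes $\inspect(\cdot,\trip_i,\tau)$.
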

\begin{proof}
    We know that $P(e, i) = 0$ for any $e \notin \relem_i$. Hence, 
  \begin{align*}
  P_{i} - P_{i+1} 
  &= \sum_{e \in \relem_{i}} P(e, i) - P(e, i + 1) \\
    & \ge \sum_{e \in \relem_{i}} \ind{P(e, i) - P(e, i+1) \ge 1}\\
    &= \sum_{e \in \buck_{i}} \ind{P(e, i) - P(e, i+1) \ge 1}
    + 
    \sum_{e \in \relem_{i} \setminus \buck_{i}}
    \ind{P(e, i) - P(e, i+1) \ge 1} \\
    &\ge \sum_{e \in \buck_{i}} \ind{P(e, i) - P(e, i+1) \ge 1}.  %\\
    %& = \sum_{e \in \buck_i \setminus \setd} \ind{P(e, i) - P(e, i+1) \ge 1}.
  \end{align*}
  The inequality comes from the fact that $P(e, \ell)$ is decreasing by Lemma \ref{lm:potential_decrease_always} and an integer by definition, so \(P(e, i) - P(e, i + 1)\) is either 0 or it can be bounded by 1.
  %The last equality comes from the fact that 
  %we know that \(\buck_i \subseteq \elem_i\), so \((\add_i \cup \elem_i) \cap B_i = \buck_i\). 

    We know that if \(e \in \buck_i\), then \(P(e, i) \ge 1\).
    We also know that if \(\inspect(e, \trip_i, \tau) = \err\), then \(e \notin \relem_{i + 1}\), 
    which means that \(P(e, i + 1) = 0\). Hence, for any \(e \in \buck_i\) such that \(\inspect(e, \trip_i, \tau) = \err\), \(\ind{P(e, i) - P(e, i+1) \ge 1} = 1\). 

    By Lemma \ref{lm:suit1impliessuit2}, we have assuming that \(\ssize_i \in \suit_i\), then 
    \[
    \Ex{
        \abs{\cbr{e \in \buck_i: \inspect{}(e, \trip_{i}, \tau) = \err}}} \ge (\frac{\epsSamp}{2})\abs{\buck_i}.\]
    The proof is complete.

\end{proof}

\begin{lemma}\label{lm:exp:P:margin}
 For any value of $i \ge 1$, the following holds:
  \begin{align*}
  \Ex{P_{i} - \bP_{i+1} \big| \bhist{i}=\hist{i}} \ge 
  (\highprob) \cdot \frac{\epsSamp}{2} \cdot \abs{\buck_{i}} \ge 
  \frac{\epsSamp}{4} \cdot \abs{\buck_{i}}.
  \end{align*}
\end{lemma}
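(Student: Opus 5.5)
The plan is to pass from the post-estimate history $\bhists{i}$ to the coarser history $\bhist{i}$ by the law of iterated expectation, and then combine the preceding lemma with Lemma~\ref{lm:invar:highprob}. Since $\hists{i} = \cbr{\hist{i}, \ssize_i}$, we have
\[
\Ex{P_i - \bP_{i+1} \mid \bhist{i} = \hist{i}} = \Ex{\Ex{P_i - \bP_{i+1} \mid \bhists{i}} \mid \bhist{i} = \hist{i}},
\]
where the outer expectation is over the random sample size $\bs_i$ given $\hist{i}$.

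First I will check that $P_i$, $\buck_i$ and $\suit_i$ are determined by $\hist{i}$, so they can be treated as constants under this conditioning. Indeed, $\hist{i}$ contains $\layer_{i-1} = (\trip_{i-1}, \elem_i, \add_i)$ together with $\setd$. This fixes $\relem_i$ and $\uni_{i-1}$, and hence $P_i$. It also fixes the output of \chbucket{}, which depends only on $\elem_i$ and $\uni_{i-1}$; here I use Fact~\ref{fact5}, which says that at the most recent rebuild of layer $i$ the set $\elem_i$ is exactly the stored one. Finally, $\suit_i = \suit(\buck_i, \trip_{i-1}, \tau_i)$ is a deterministic function of these quantities.

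Next I apply the preceding lemma inside the outer expectation. This gives the lower bound
\[
\frac{\epsSamp}{2}\abs{\buck_i}\,\Pr{\bs_i \in \suit_i \mid \bhist{i} = \hist{i}}.
\]
Lemma~\ref{lm:invar:highprob} bounds this probability below by $\highprob$, which yields the first inequality. The second inequality is pure arithmetic: $\highprob \ge \tfrac12$ once $\epsP \le \tfrac12$ and $k \ge 1$. This parameter assumption should be stated explicitly.

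The main obstacle is rigor about the conditioning, not any estimate. Lemma~\ref{lm:invar:highprob} was proved through an update-by-update induction, and the preceding lemma conditions on $\bhists{i}$, which implies $\bT > i$. I must make sure that the event "layer $i$ exists" is implied by $\bhist{i} = \hist{i}$ in the same sense in both statements, so that the tower property applies without leaking probability mass. If $i > T$, both sides are interpreted as $0$, since then $\buck_i$ is empty or undefined. I would handle this with Fact~\ref{fact}, exactly as in the proof of Lemma~\ref{lm:invar:highprob}.
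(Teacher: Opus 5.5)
Your proposal is correct and is essentially the paper's proof: you pass from $\bhist{i}$ to $\bhists{i}$ by the tower property, apply the preceding lemma inside, and lower-bound $\Pr{\bs_i \in \suit_i \mid \bhist{i}=\hist{i}}$ by $\highprob$ using Lemma~\ref{lm:invar:highprob}. Your extra checks are points the paper leaves implicit: that $P_i$, $\buck_i$ and $\suit_i$ are determined by $\hist{i}$, and that the final inequality needs $\epsP \le \tfrac{1}{2}$.
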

\begin{proof}
  By the law of iterated expectation 
  and the previous Lemma, we have
  \begin{align*}
  \Ex{P_{i} - \bP_{i+1} \big| \bhist{i}=\hist{i}} 
  &=
    \Ex{\Ex{P_{i} - \bP_{i+1} \big| \bhists{i}=\hists{i}}} \\
    %\Ex{\Ex{P_{i} - \bP_{i+1} \big| \bhist{i}=\hist{i}, \bs_\ell }}
    &\ge \Ex{\frac{\epsSamp}{2} \cdot \abs{\buck_{i}} \cdot
    \ind{\ssize_i \in \suit_i}}\\
    &\ge
     (\highprob) \cdot \frac{\epsSamp}{2} \cdot \abs{\buck_{i}}
  \end{align*}
  The proof is complete. 
\end{proof}

Using the above results, this lemma provides a polylogarithmic bound on the number of remaining layers. 
\begin{lemma}\label{cor:bound_T_invariant}
For any $\ell \ge 1$,
  \begin{align*}
    \Ex{\bT - i \mid \bhist{i}=\hist{i}} \le \poly(\log(\abs{\relem_i}), \log(k) , \frac{1}{\eps}).
  \end{align*}
\end{lemma}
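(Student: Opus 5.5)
We plan to apply Fact~\ref{lm:auxil} to the sequence of layer potentials, starting from layer $i$. Set $\bX_0 := P_i$ (fixed once we condition on $\bhist{i}=\hist{i}$, since $\relem_i$ and $\uni_{i-1}$ are determined by the history), and $\bX_j := \bP_{i+j}$ for $j \ge 1$. By Lemma~\ref{lm:potential_decrease_always} the sequence is non-increasing, it consists of non-negative integers, and by Lemma~\ref{lm:potential_properties} it reaches $0$ exactly at layer $T+1$: for $j \le T$ we have $\relem_j \supseteq \elem_j \neq \emptyset$ by Lemma~\ref{lm:inv:finish}, so $P_j \ge 1$, while $P_{T+1}=0$. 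Hence the first index at which $\bX$ vanishes is $\bT+1-i$, and $\bT - i$ is bounded by that hitting time.

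The key step is the multiplicative decrease $\Ex{\bP_{j+1} \mid \bhist{j}=\hist{j}} \le (1-\eps')P_j$ for every $j \ge i$. By Lemma~\ref{lm:exp:P:margin}, $\Ex{P_j - \bP_{j+1} \mid \bhist{j}=\hist{j}} \ge \frac{\epsSamp}{4}\abs{\buck_j}$. Since \chbucket{} selects the largest of the at most $\ceil{\log_{1+\eps}(\ratio)}$ buckets partitioning $\elem_j$, we have $\abs{\buck_j} \ge \abs{\elem_j}/\ceil{\log_{1+\eps}(\ratio)}$. Lemma~\ref{lm:reconstruction_condition} gives $\abs{\add_j} \le \epsIns\abs{\elem_j}$, so $\abs{\relem_j} \le (1+\epsIns)\abs{\elem_j}$. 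Combining these with $P_j \le \abs{\relem_j}\ceil{\log_{1+\eps}(\ratio)}$ from Lemma~\ref{lm:potential_properties} yields
\[
\Ex{P_j - \bP_{j+1} \mid \bhist{j}=\hist{j}} \ge \frac{\epsSamp}{4(1+\epsIns)\ceil{\log_{1+\eps}(\ratio)}^2}\,P_j =: \eps' P_j ,
\]
with $1/\eps' = \poly(\log k, 1/\eps)$.

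Two points need care. First, conditioning on $\bhist{j}$ is finer than conditioning only on $\bX_0,\dots,\bX_{j-i}$, as Fact~\ref{lm:auxil} requires. Since $\bhist{j}$ determines the earlier potentials, the tower property converts the bound to the coarser conditioning, and the fact's proof (a supermartingale or drift argument) in any case only needs the bound relative to a filtration. Second, the layers after $i$ are really the outcome of the most recent rebuild, while $\bhist{j}$ mixes states created at different times. I expect this to be the main obstacle, and I would handle it exactly as in Lemma~\ref{lm:invar:highprob}, which already establishes that the required high-probability guarantee is valid under this conditioning.

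Finally, with $N = P_i \le \abs{\relem_i}\ceil{\log_{1+\eps}(\ratio)}$, Fact~\ref{lm:auxil} gives
\[
\Ex{\bT - i \mid \bhist{i}=\hist{i}} \le \frac{\log N + 1}{\poly(\eps')} = \poly\Bigl(\log\abs{\relem_i}, \log k, \frac{1}{\eps}\Bigr),
\]
treating $\epsSamp$, $\epsIns$ and $\epsDel$ as fixed polynomial functions of $\eps$.
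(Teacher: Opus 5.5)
Your proposal is correct and follows essentially the same route as the paper. You lower-bound $\abs{\buck_j}$ via the largest-bucket choice and Lemma~\ref{lm:reconstruction_condition}, combine Lemma~\ref{lm:exp:P:margin} with the upper bound on $P_j$ from Lemma~\ref{lm:potential_properties} to get a multiplicative drop in potential, and conclude with Fact~\ref{lm:auxil}. Your version is slightly more careful than the paper's in two places: you keep the $(1+\epsIns)$ factor that the paper's displayed bound on $\abs{\buck_i}$ drops, and you spell out the tower-property step that turns conditioning on $\bhist{j}$ into the filtration Fact~\ref{lm:auxil} requires.
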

\begin{proof}
    The elements of \(\elem_i\) are divided into \(
    \floor{\log_{(1+\epsBuck)}(\ratio)}\) buckets, and \(\buck_i\) is the largest one among them. We also know from Lemma \ref{lm:reconstruction_condition} that \(|\relem_i| \leq (1 + \epsIns) |\elem_i|\). 
    Therefore, 
    \[\abs{\buck_i} \ge \frac{|\elem_i|}{\floor{\log_{(1+\epsBuck)}(\ratio)}} \cdot \frac{|\relem_i|}{|\elem_i|} = \frac{|\relem_i|}{\floor{\log_{(1+\epsBuck)}(\ratio)}} \] 

    By Lemma \ref{lm:exp:P:margin}, we have 
    \[\Ex{P_{i} - \bP_{i+1} \big| \bhist{i}=\hist{i}} \ge (\highprob) \cdot \frac{\epsSamp}{2} \cdot \frac{|\relem_i|}{\floor{\log_{(1+\epsBuck)}(\ratio)}}\]
    This, the second property of Lemma \ref{lm:potential_properties}, and Fact \ref{lm:auxil}, completes the proof. 
    
\end{proof}

We are now ready to bound the expected total number of layers.

\begin{corollary}
\label{lm:num_level_total}
  At any point in the stream,
  the expected number of layers $\Ex{\bT}$ is at most $\poly(\log(n), \log(k), \frac{1}{\eps})$.
\end{corollary}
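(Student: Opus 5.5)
This follows by specializing Lemma~\ref{cor:bound_T_invariant} to the first layer, $i = 1$, and then removing the conditioning. First, observe that the history $\hist{1} = \cbr{\setd, \layer_0}$ carries no information from the algorithm's randomness. By Lemma~\ref{lm:initialII'}, $\indep_0 = \uni_0 = \emptyset$ and $\delt_0$ is the zero function. Moreover, $\setd$, $\elem_1$ and $\add_1$ are determined by the (oblivious) update sequence, so in effect they are fixed. Conditioning on $\bhist{1} = \hist{1}$ therefore conditions only on the realized value of this deterministic quantity. Applying Lemma~\ref{cor:bound_T_invariant} with $i = 1$ gives
\[
\Ex{\bT - 1 \mid \bhist{1} = \hist{1}} \le \poly\bigl(\log \abs{\relem_1}, \log k, \tfrac{1}{\eps}\bigr),
\]
whenever $T \ge 1$. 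If $\elem_1 = \emptyset$, then $T = 0$ and the bound is trivial.

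Next, I bound the arguments of the polynomial. Since $\relem_1 = \elem_1 \cup \add_1 \subseteq \mathcal{N}$, we have $\abs{\relem_1} \le n$, so $\log\abs{\relem_1} \le \log n$. Taking expectations over $\bhist{1}$ by the law of iterated expectation (with a degenerate distribution here) and adding the $1$ yields $\Ex{\bT} \le \poly(\log n, \log k, \frac{1}{\eps})$. Since $k \le n$, this is in fact $\poly(\log n, \frac{1}{\eps})$. The bound holds for each parallel instance separately, and therefore at any point in the stream.

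The main obstacle I expect is the justification behind Lemma~\ref{cor:bound_T_invariant}: applying Fact~\ref{lm:auxil} requires the potential to shrink by a constant fraction conditionally on the entire past. Lemma~\ref{lm:exp:P:margin} gives
\[
\Ex{P_i - \bP_{i+1} \mid \bhist{i}} \ge \frac{\epsSamp}{4}\abs{\buck_i} \ge \frac{\epsSamp}{4\ceil{\log_{1+\eps}(\ratio)}}\,\abs{\relem_i} \ge \frac{\epsSamp}{4\ceil{\log_{1+\eps}(\ratio)}^2}\, P_i,
\]
where the last step uses Lemma~\ref{lm:potential_properties}. So the per-layer contraction factor is $\eps' = \Theta\bigl(\epsSamp\,\eps^2/\log^2(k/\eps)\bigr)$, and the starting potential is $N = P_1 \le n\ceil{\log_{1+\eps}(\ratio)}$. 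Fact~\ref{lm:auxil} then gives $O(\log N/\poly(\eps'))$, which is polylogarithmic in $n$ and $k$ and polynomial in $1/\eps$ (taking $\epsSamp = \Theta(\eps)$).

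One subtlety I would address is that Fact~\ref{lm:auxil} conditions on the previous potentials, whereas the lemmas condition on the richer history $\bhist{i}$. This is handled by the tower property, since $P_1, \dots, P_i$ are functions of $\bhist{i}$: a bound conditional on $\bhist{i}$ implies the same bound conditional on the coarser information $P_1, \dots, P_i$. Given that, the corollary is immediate.
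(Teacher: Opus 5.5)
Your proof is correct and is essentially the paper's argument: specialize Lemma~\ref{cor:bound_T_invariant} to $i=1$, note that $\hist{1}$ is fixed by the oblivious update sequence so the conditioning is vacuous, and use $\abs{\elem_1 \cup \add_1} \le n$. One small point in your side discussion: the step $\abs{\buck_i} \ge \abs{\elem_i \cup \add_i}/\ceil{\log_{1+\eps}(\ratio)}$ is missing a factor $1/(1+\epsIns)$, because $\buck_i$ is a largest bucket of $\elem_i$ rather than of $\elem_i \cup \add_i$; this only changes constants.
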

\begin{proof}
By Lemma \ref{cor:bound_T_invariant}, we have
      \[
    \Ex{\bT - 1} = \Ex{T - 1 \mid \bhist{1}=\hist{1}} \le \poly(\log(n), \frac{1}{\eps}).
  \]
Howevere note that, the expectation in the result above is over all of the randomness of the algorithm. Yet, since for each update \(u\) the value of \(T)\) after the execution update \(u\) is independent of the randomness of the algorithm during subsequent updates, the same holds for expectation over the randomness of the algorithm until the execution of \(u\) finishes. 
\end{proof}

Given the upper bound on the number of layers, we now compute the query complexity of procedure $\construct$.
\begin{lemma}\label{lm:bound_num_query_recounstruct}
Suppose the procedure \(\construct(i)\) is invoked. The expected number of queries made by calling \(\construct{}(i)\) is
  $|\relem_i| \cdot \poly(\log(n), \log(k), \frac{1}{\eps}) $, where $|\relem_i|$ refers to the size of $\relem_i$ after the update, which is equal to $\celem_i$ right before executing \(\construct(i)\), and the expectation is over the randomness of algorithm only during the execution of \(\construct(i)\). 
  % \pjcomment{we can ommit $\log(|\Rc_i|)$}
\end{lemma}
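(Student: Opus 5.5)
The plan is to charge the cost of $\construct(i)$ level by level. The call $\construct(i)$ runs the body of the procedure at layers $j = i, i+1, \dots$ through the recursive calls $\construct(j+1)$, and stops at the first empty layer. I will bound the queries spent at a single layer $j$ by $|\relem_j|\cdot\poly(\log n,\log k,\frac1\eps)$. By Lemma~\ref{lm:potential_decrease_always} and the second property of Lemma~\ref{lm:potential_properties}, every $\elem_{j}$ created during this call satisfies $|\elem_j| \le |\elem_i| = |\relem_i|$, since filtering only removes elements and the $\add_j$ are reset to $\emptyset$. So each layer costs at most $|\relem_i|\cdot\poly(\cdot)$. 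I then multiply by the expected number of layers created, $\Ex{\bT - i + 1 \mid \bhist{i}}$, which Lemma~\ref{cor:bound_T_invariant} bounds by $\poly(\log n,\log k,\frac1\eps)$, using $|\relem_i|\le n$.

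The per-layer accounting goes step by step through one execution of the body. First, $\chbucket$ makes one value query per element of $\elem_j$, so it costs $O(|\elem_j|)$. Second, $\estimatesize$ runs $t=\lceil\constt\rceil = O(\epsSamp^{-2}\log(nk/\epsP))$ calls of $\expand$ on permutations of $\buck_j$. Third, one more $\expand$ is run on $\samp_j$, with $|\samp_j|\le|\buck_j|$. Fourth, the filter in line~\ref{ln:filter} makes one $\inspect$ call per element of $\elem_j$. Every one of these steps reduces to $O(|\elem_j|\cdot t)$ calls to $\inspect$/$\adjust$, so the remaining task is to bound the query cost of a single $\inspect$.

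A single $\inspect$ uses one value query for $\relcont{e}{\uni}$ and one independence query for $\indep\cup\{e\}$. It then searches for $\hat e$. To do this in polylog queries, I order $\indep$ by $\delt$ and binary search for the largest prefix $P$ of small-$\delt$ elements such that $(\indep\setminus P)\cup\{e\}\notin\mI$. Monotonicity holds because removing elements only helps independence. The first prefix element whose removal restores independence is exactly the minimum-$\delt$ element of the unique circuit (Lemma~\ref{lem:=1circuit}), so the search costs $O(\log k)$ independence queries. Hence each $\inspect$ costs $O(\log k)$ queries, the whole call costs $O(|\relem_i|\,t\log k)$ per layer, and this is $|\relem_i|\cdot\poly(\log n,\log k,\frac1\eps)$, taking $\epsSamp,\epsP$ as fixed polynomial functions of $\eps$.

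The main obstacle is combining the per-layer cost with the random number of layers while respecting the conditioning. The number of layers created is random and depends on the samples drawn inside this very call. I will handle this by writing the total cost as $\sum_{j\ge i}\ind{j\le \bT}\cdot c_j$, where $c_j\le |\relem_i|\cdot\poly$ deterministically. Linearity of expectation then gives a bound of $|\relem_i|\cdot\poly\cdot\Ex{\bT-i+1\mid\bhist{i}}$. Lemma~\ref{cor:bound_T_invariant} applies here because the history up to layer $i$ is fixed when $\construct(i)$ begins, and the randomness it averages over is exactly the randomness used inside this call.
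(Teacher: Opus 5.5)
Your proposal is correct and follows the paper's argument: you bound the queries spent at each layer $j \ge i$ built by this call by $|\overline{E}_i|\cdot\poly(\log n,\log k,\frac{1}{\eps})$, using $\elem_j \subseteq \elem_i$ and $O(|\elem_j|\, t)$ calls to \inspect{}, and then multiply by the expected number of layers given by Lemma~\ref{cor:bound_T_invariant}. The extra pieces you supply just make explicit what the paper's three-line proof leaves implicit: the $O(\log k)$-query prefix-removal binary search for $\hat e$ (an equivalent variant of the paper's Lemma~\ref{lm:find_min}, justified directly via Lemma~\ref{lem:=1circuit}), and the $\sum_{j\ge i}\ind{j\le\bT}\,c_j$ linearity step that handles the random number of layers.
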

\begin{proof}
    We know that conditioned on history up to level \(i\), the expected number of leyers after \(i\) is polylogarithmic. The number of queries made during \(\estimatesize(j \ge i)\) is bounded by 
    \(\abs{\relem_i} \cdot \poly((\log(n), \log(k), \frac{1}{\eps}))\). Hence, the number of total queries is bounded by $|\relem_i| \cdot \poly(\log(n), \log(k), \frac{1}{\eps}) $. 
\end{proof}

%Finally, we compute the expected number of queries for each update.

Finally, by combining the above results and introducing a charging scheme for the cost, we distribute the cost of each invocation of $\construct$ across updates, thereby bounding the total query complexity of our algorithm per each update.
\begin{theorem}
\label{thm:update:time:amorti}
The expected amortized update time of our entire algorithm is 
\(\poly(\log(n), \log(k), \eps^{-1})\).
\end{theorem}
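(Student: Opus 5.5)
The plan is a charging argument that spreads the cost of each call to $\construct(i)$ over the updates that triggered it. There are three kinds of cost. Each update pays a direct cost for the $\inspect$ calls in the loop of $\ins$ (one per layer, each needing $O(\log k)$ independence queries through the binary search) and for the deletion checks in $\del$. It may also pay for one triggered call $\construct(i)$, which by Lemma~\ref{lm:bound_num_query_recounstruct} costs $|\relem_i|\cdot\poly(\log n,\log k,\eps^{-1})$ in expectation. Finally, every update is forwarded to $O(\log(k/\eps))$ parallel instances (Algorithm~\ref{alg:parallel}), which multiplies everything by a polylogarithmic factor.

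The direct cost is bounded by $O(\bT\log k)$ per update. Corollary~\ref{lm:num_level_total} bounds its expectation by $\poly(\log n,\log k,\eps^{-1})$. We can use this bound at the time of the update because, by the oblivious-adversary argument in that corollary, $\bT$ just before an update is distributed as the bound requires.

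For a triggered $\construct(i)$ we charge its cost to the updates at layer $i$ since the last rebuild of layer $i$. Right after that rebuild, $\add_i=\emptyset$ and $\buck_i\cap\setd=\emptyset$, so the current trigger was preceded by one of two things.
\begin{itemize}
\item At least $\epsIns|\elem_i|$ insertions were placed into $\add_i$, each being a distinct update.
\item At least $\epsDel|\buck_i|$ deletions hit elements of $\buck_i$.
\end{itemize}
The insertion case directly bounds the cost $|\relem_i|\le(1+\epsIns)|\elem_i|$ by $O(1/\epsIns)$ per charged update. For the deletion case we use $|\buck_i|\ge|\elem_i|/\lceil\log_{1+\eps}(4k/\eps)\rceil$, the largest-bucket bound from Lemma~\ref{cor:bound_T_invariant}. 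This charges each update $\poly(\log k,\eps^{-1})$ times the per-element cost.

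Each update can be charged by at most one rebuild per layer, and only at layers it touched: a deleted element lies in $\buck_i$ for a single fixed version of layer $i$, and an inserted element enters $\add_i$ once per version. So the total charge per update is at most $\bT\cdot\poly(\log n,\log k,\eps^{-1})$. Rebuilding layer $i$ also rebuilds the layers $j>i$, but those costs are already inside the $|\relem_i|\cdot\poly$ bound of Lemma~\ref{lm:bound_num_query_recounstruct}, since $\relem_j\subseteq\relem_i$.

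The main obstacle is the randomness. The quantities $|\elem_i|$ and $|\buck_i|$ are fixed when layer $i$ is built, but the cost of the rebuild is a random variable, and the number of layers an update touches depends on earlier random choices. We handle this by conditioning on the history $\bhist{i}$ at the moment of the rebuild and applying Lemma~\ref{lm:bound_num_query_recounstruct}, whose expectation is only over the randomness inside that call. Summing over all updates, linearity of expectation then bounds the expected total number of queries by (number of updates)$\cdot\poly(\log n,\log k,\eps^{-1})$. Because the adversary is oblivious, the update sequence does not depend on these random choices. Finally, $k\le n$, so this equals $\poly(\log n,\eps^{-1})$, which gives Theorem~\ref{thm:main}.
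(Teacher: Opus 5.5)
Your proposal is correct and follows essentially the same route as the paper. It reduces to a single instance (each update touches $O(\log(k/\eps))$ instances) and bounds the direct cost by the number of layers via Corollary~\ref{lm:num_level_total}. It then charges each directly triggered $\construct(i)$ to the updates behind $\add_i$ or $\buck_i\cap\setd$, charging each update at most once per layer among the $T^u+1$ layers present at its time. Finally, it handles the expectation because the rebuild's internal randomness is independent of whether the charge occurs. Your explicit pigeonhole bound $|\buck_i|\ge|\elem_i|/\lceil\log_{1+\eps}(4k/\eps)\rceil$ in the deletion case is a cleaner justification of the paper's loosely stated claim that the cost is spread over sufficiently many updates.
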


\begin{proof}

Each update affects only \(\log(\ratio)\) instances of the main Algorithm. Hence, it suffices to bound the number of queries per update for a fixed instance of the Main Algorithm \ref{alg:main}. 

The number of query calls made directly via \(\ins(e)\) or \(\del(e)\) is bounded by \(T\), which by Corollary \ref{lm:num_level_total} is known to be polylogarithmic. 

To upper bound the number of indirect query calls made per update through the procedure \(\construct\), we employ the following charging scheme:

Let's refer to the number of calls made during the execution of the procedure $\construct(\ell)$ as its cost. Note that this cost also reflects the indirect queries that $\construct(\ell)$ makes through $\construct(\ell + 1)$. 
We charge the cost of executing $\construct{}(\ell)$ to all the updates that were responsible for triggering its invocation. 
In particular, if the procedure \(\ins(e)\) directly invokes $\construct{}(\ell)$ (not through \(\construct{}(\ell - 1)\)), we spread the cost of $\construct{}(\ell)$ among the updates corresponding to the elements in \(\add_\ell\) immediately before its execution. Likewise, if \(\del(e)\) directly invokes $\construct{}(\ell)$, we charge its cost to all updates corresponding to the elements in \(\buck_{\ell} \cap \setd\) immediately before its execution. 

For updates $u$ and $v$ and layer $i$, 
let $Q_{u, i, v}$ denote the cost charged to $u$ due to the invocation of $\construct(i)$ during the execution of update $v$. Note that if $\construct(i)$ was not invoked or if it was invoked indirectly through \(\construct(i - 1)\), then $Q_{u, i, v} = 0$.

Consider a fixed \(u\) and let $T^u$ denote the number of levels right before update $u$. We have:

\[\Ex{\bQ_{u, i, v}}
      = \Ex{\bQ_{u, i, v} \mid \bQ_{u, i, v} > 0}\Pr{\bQ_{u, i, v} > 0}\]

Given \(\bQ_{u, i, v} > 0\), we know that \(\construct(i)\) has been invoked during update $v$. By Lemma \ref{lm:bound_num_query_recounstruct} we know that, the value of \(\bQ_{u, i, v}\) is independent of the randomness of the algorithm before the execution of  \(\construct(i)\) during update $v$ and it is bounded by \(\poly(\log(n), \log(k), \frac{1}{\epsilon}) \)  as the cost of \( \construct(i) \) has been spread bacross \( O(\abs{\relem_i}.\eps^{-1}) \) different updates. We also note that whether \(\bQ_{u, i, v} = 0\) or not can be deterministically determined by the randomness of the algorithm 
up to starting the execution of update $v$. Therefore, to prove that 
\(\sum_{i, v}\Ex{\bQ_{u, i, v} }\) is polylogarithmic, it only remains to show that 
\begin{align}
\label{eqhre}
      \sum_{i, v} \Pr{\bQ_{u, i, v} > 0 } \le \poly(\log(n), \log(k), \frac{1}{\epsilon})
\end{align}

We know that the element corresponding to update $u$ may only be involved in the invocation of \(\construct(i)\) during update $v$ if update $v$ is the first update since receiving update $u$ that has caused a direct or indirect invocation of  \(\construct(i)\), since  \(\add_i\) and \(\buck_{i} \cap \setd\) would become empty each time \(\construct(i)\) is executed. The same reasoning also implies that \(\bQ_{u, i, v}\) cannot be nonzero for any $i > \bT^u$ as the existence of such level can only be due to the execution of \(\construct(i)\) during or after update $u$. Therefore, we have: 
\[
\Ex{\sum_{i, v} \ind{\bQ_{u, i, v} > 0} } \le \Ex{\bT^{u} + 1}.
\]

Since \(\Ex{\sum_{i, v} \ind{\bQ_{u, i, v} > 0} } = \sum_{i, v} \Pr{\bQ_{u, i, v} > 0 }\), Corollary \ref{lm:num_level_total} proves inequality \eqref{eqhre} and completes the proof.

\end{proof}

\bibliographystyle{plain}
\bibliography{references}

\newpage
\appendix
\section{Omitted proofs}

\begin{lemma} %todo change
\label{lm:unigrws}
Given $0 \leq i < j \leq T$, we have $\uni_i \subseteq \uni_j$.
\end{lemma}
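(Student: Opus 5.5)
The plan is to reduce the claim to a one-step statement, namely $\uni_{h-1} \subseteq \uni_h$ for every $h \in [T]$, and then chain these inclusions. Indeed, if $\uni_{h-1} \subseteq \uni_h$ for all $h$ with $i < h \le j$, then induction on $j - i$ (or simply composing the inclusions $\uni_i \subseteq \uni_{i+1} \subseteq \cdots \subseteq \uni_j$) gives $\uni_i \subseteq \uni_j$ whenever $0 \le i < j \le T$.

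For the one-step inclusion I would invoke Lemma~\ref{lm:invar:expand}, which states that at the current time
\[
\trip_h = \expand(\samp_h, \trip_{h-1}, \tau_h).
\]
Write $\samp_h = (e_1, \ldots, e_m)$ in processing order, and let $\uni_{h,\eta}$ denote the state of $\uni'$ inside $\expand$ after processing $e_\eta$. Here $\uni_{h,0} = \uni_{h-1}$ because $\expand$ starts from a copy of its input, and $\uni_{h,m} = \uni_h$. It then suffices to show $\uni_{h,\eta-1} \subseteq \uni_{h,\eta}$ for every $\eta \in [m]$.

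This step is immediate from the code of $\adjust$. The call $\adjust(e_\eta, \trip_{h,\eta-1}, \tau_h)$ either leaves the triple unchanged, when $\inspect$ returns $\err$, or sets $\uni \gets \uni \cup \{e_\eta\}$. Only $\indep$ loses elements (the set $X$), and $\uni$ never does. Hence each step is monotone, and telescoping over $\eta$ yields $\uni_{h-1} \subseteq \uni_h$.

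The only subtle point is temporal. The sets $\uni_h$ at the current time might have been produced in different updates, so one cannot simply argue from a single run of $\construct$. Lemma~\ref{lm:invar:expand} already resolves this, since it guarantees the relation between $\trip_{h-1}$ and $\trip_h$ holds at the current time. I therefore do not expect any real obstacle. I also need to make sure not to rely on Lemma~\ref{w=delt} or other results whose proofs themselves use the present lemma, to avoid circularity; the argument above uses only Lemma~\ref{lm:invar:expand}, whose proof is independent of this lemma.
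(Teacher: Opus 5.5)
Your proof is correct and follows the paper's own argument. Both use Lemma~\ref{lm:invar:expand} to express $\uni_h$ as the result of a sequence of \textsc{Adjust} calls starting from $\uni_{h-1}$, observe that \textsc{Adjust} only ever adds to the union set, and chain the one-step inclusions $\uni_{h-1} \subseteq \uni_h$. Your points about the temporal subtlety and about avoiding circular reliance on Lemma~\ref{w=delt} are sound.
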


\begin{proof}
We know from Lemma~\ref{lm:invar:expand} that for all $\ell \in [T]$,
\[\trip_{\ell} = \expand(\samp_{\ell}, \trip_{\ell-1}, \tau_{\ell}).\]
In the $\expand$ procedure (Algorithm~\ref{alg:expand}), the set $\uni_\ell$ is initially set to $\uni_{\ell - 1}$ and may be augmented during the sequence of $\adjust$ operations on elements in $\samp_{\ell}$. However, no elements are removed from it during this process.

Therefore, we have $\uni_{\ell - 1} \subseteq \uni_\ell$ for all $\ell \in [T]$, and by applying this inductively it follows that $\uni_i \subseteq \uni_j$ for all $0 \leq i < j \leq T$, as desired.
\end{proof}

\begin{lemma}%[\cite{DBLP:books/daglib/0070636}]
\label{lem:elimination:axiom}
The family $\mathcal{C}$ of circuits of a matroid $\matroid$ has the following properties:
\begin{itemize}
    \item ($C1$) $\emptyset \notin \mathcal{C}$.
    \item ($C2$) if $C_1$, $C_2 \in \mathcal{C}$ and $C_1 \subseteq C_2$, then $C_1=C_2$.
    \item ($C3$)  if $C_1$, $C_2 \in \mathcal{C}$, $C_1 \neq C_2$ and $e\in C_1 \cap C_2$, then there exists $C_3 \in \mathcal{C}$ such that $C_3 \subseteq C_1 \cup C_2 \setminus \{e\}$.
\end{itemize}

\end{lemma}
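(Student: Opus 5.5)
The three properties follow directly from the independence axioms (non-emptiness, heredity, exchange) and the definition of a circuit as a minimal dependent set. I will verify them one by one, with (C3), the circuit elimination axiom, being the only step needing a real argument.

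For (C1), $\emptyset \in \mathcal{I}$ by non-emptiness, so $\emptyset$ is not dependent and hence not a circuit. For (C2), suppose $C_1 \subseteq C_2$ with $C_1 \neq C_2$. Then $C_1$ is a proper subset of the circuit $C_2$, so it is independent. This contradicts $C_1$ being a circuit, which is dependent; hence $C_1 = C_2$.

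For (C3), I argue by contradiction. Assume no circuit lies inside $(C_1 \cup C_2) \setminus \{e\}$. Any dependent set contains a minimal dependent subset, i.e.\ a circuit, because the sets are finite. So $X := (C_1 \cup C_2) \setminus \{e\}$ is independent, with $|X| = |C_1 \cup C_2| - 1$.

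Next I pick an element $f \in C_1 \setminus C_2$. It exists because otherwise $C_1 \subseteq C_2$, and (C2) would force $C_1 = C_2$. The set $C_1 \setminus \{f\}$ is a proper subset of a circuit, hence independent. I extend it to a maximal independent subset $J$ of $C_1 \cup C_2$; maximal here means no element of $C_1 \cup C_2$ can be added to $J$ while keeping it independent. Two elements are missing from $J$:
\begin{itemize}
\item $f \notin J$, since otherwise $C_1 \subseteq J$, contradicting heredity;
\item some $g \in C_2$ is not in $J$, since otherwise $C_2 \subseteq J$; moreover $g \neq f$ because $f \notin C_2$.
\end{itemize}
Hence $|J| \le |C_1 \cup C_2| - 2 < |X|$.

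Applying the exchange property to $J$ and $X$ gives some $x \in X \setminus J \subseteq C_1 \cup C_2$ with $J \cup \{x\} \in \mathcal{I}$. This contradicts the maximality of $J$ within $C_1 \cup C_2$, which proves (C3).

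The only delicate points are the existence of $f$ (handled by (C2)) and ensuring $g \neq f$ so that $J$ misses two elements rather than one. Both are covered above.
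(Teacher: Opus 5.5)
Your proof is correct: (C1) and (C2) follow immediately from non-emptiness and the minimality built into the definition of a circuit. Your argument for (C3) is also sound, including the checks that $f$ exists and that $g\neq f$; it assumes $(C_1\cup C_2)\setminus\{e\}$ is independent, extends $C_1\setminus\{f\}$ to a maximal independent subset $J$ of $C_1\cup C_2$ missing both $f$ and some $g\in C_2$, and applies the exchange property to $J$ and $(C_1\cup C_2)\setminus\{e\}$. The paper gives no proof of this lemma and simply states it in the appendix as a standard fact from matroid theory; your argument is the standard textbook derivation of the circuit axioms from the independence axioms, so it fills that omission in the expected way.
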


\begin{lemma}
\label{lem:=1circuit}
Let \( e \) and \( I \) be an arbitrary element and independent set in \( \mathcal{M} \). If \( I \cup \{e\} \notin \mathcal{I} \), then \( I \cup \{e\} \) contains a unique circuit \(C\).
\end{lemma}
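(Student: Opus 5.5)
We argue by contradiction, relying only on the matroid axioms and the circuit elimination property (C3) of Lemma~\ref{lem:elimination:axiom}. Since $I \cup \{e\} \notin \mathcal{I}$, the set $I\cup\{e\}$ is finite and dependent. It therefore contains a minimal dependent subset, which is a circuit, so existence is immediate. Any such circuit $C$ must contain $e$: otherwise $C \subseteq I$, and heredity would make $C$ independent, contradicting the fact that circuits are dependent.

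For uniqueness, suppose $C_1 \neq C_2$ are two circuits contained in $I \cup \{e\}$. By the previous paragraph, $e \in C_1 \cap C_2$. Property (C3) then gives a circuit $C_3 \subseteq (C_1 \cup C_2)\setminus\{e\} \subseteq I$. This is impossible, because every subset of the independent set $I$ is independent by heredity, whereas $C_3$ is dependent. Hence the circuit is unique.

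The only point requiring care is the existence of a minimal dependent subset. Among the dependent subsets of $I\cup\{e\}$, take one of minimum cardinality. Every proper subset of it is then independent, so it is a circuit in the sense defined in the preliminaries. Since we assume (C3) as stated earlier in the paper, no step presents a real obstacle.
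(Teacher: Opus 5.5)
Your proof is correct and matches the paper's argument: you get existence from a minimal dependent subset (you take one of minimum cardinality, where the paper iteratively deletes elements, which makes no real difference). For uniqueness you note that every circuit in $I \cup \{e\}$ must contain $e$, then apply circuit elimination (C3) to obtain a circuit inside $(C_1 \cup C_2)\setminus\{e\} \subseteq I$, contradicting the independence of $I$.
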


\begin{proof}
Starting with the dependent set \( I \cup \{e\} \), one can iteratively remove an element that its removal preserves dependence. 
Since \( \emptyset \in \mathcal{I} \), this process must terminate with a minimal dependent set, which is, by definition, a circuit. Therefore, there exists at least one circuit \( C \subseteq I \cup \{e\} \).

Suppose, for contradiction, that there are two distinct circuits \( C_1, C_2 \subseteq I \cup \{e\} \) with \( C_1 \neq C_2 \). 
Since \( I \) is independent, neither \( C_1 \) nor \( C_2 \) can be fully contained in \( I \). Thus, \( e \in C_1 \cap C_2 \).
By Lemma~\ref{lem:elimination:axiom} there exists a circuit \( C_3 \subseteq (C_1 \cup C_2) \setminus \{e\} \). 
However, \( (C_1 \cup C_2) \setminus \{e\} \subseteq I \), implying \( C_3 \subseteq I \), which is in contradiction with the independence of \( I \).

Therefore, \( I \cup \{e\} \) contains exactly one circuit.
\end{proof}

\newcommand{\rank}{rank}

\begin{lemma}\label{lm:find_min} Let $I \in \mI$ be an independent set and $e$ be an element such that $I \cup \{e\} \notin \mI$.
  Define $C := \{e': I + e - e' \in \mI\}$.
  Let $w : I \cup \{e\} \to \mathbb{R}^{\ge 0}$ be an arbitrary weight function and
  define $\hat{e} := \argmin_{e'\in C} w(e')$. The element $\hat{e}$ can be found
  using at most $O(\log(|I|))$ oracle queries.
\end{lemma}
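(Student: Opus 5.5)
The approach is a binary search over a prefix order of $I$ sorted by weight. Since $I \cup \{e\}$ is dependent, Lemma~\ref{lem:=1circuit} says it contains a unique circuit $C^\circ$. As in the proof of Lemma~\ref{lm:circ->circ}, we have $C = C^\circ \setminus \{e\}$ (up to $e$ itself): removing $e' \in I$ restores independence exactly when $e' \in C^\circ$. So finding $\hat{e}$ amounts to finding the minimum-weight element of $C^\circ \cap I$.

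Sort $I$ in decreasing order of $w$, breaking ties arbitrarily, as $f_1, f_2, \dots, f_m$ with $m = |I|$. Sorting uses no oracle queries, since $w$ is known to the algorithm and only oracle calls are counted. For $j \in [0, m]$, define the prefix $P_j = \{f_1, \dots, f_j\}$ and the predicate $Q(j) := [P_j \cup \{e\} \notin \mI]$. Each evaluation of $Q$ costs one independence query.

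The key claim is that $Q$ is monotone in $j$ and that the smallest $j^*$ with $Q(j^*)$ true satisfies $f_{j^*} = \hat{e}$.
\begin{itemize}
\item \emph{Monotonicity.} Dependence is preserved under supersets by heredity. Also $Q(m)$ is true, and $Q(0)$ is false whenever $\{e\}$ is independent. If $\{e\}$ is a loop, then $C = \emptyset$ and the statement is vacuous.
\item \emph{Location of the circuit.} $P_j \cup \{e\}$ is dependent iff it contains a circuit. Any circuit inside $I \cup \{e\}$ must be $C^\circ$, so $P_j \cup \{e\}$ is dependent iff $C^\circ \setminus \{e\} \subseteq P_j$.
\item \emph{Identifying $\hat{e}$.} It follows that $j^*$ is the largest index of an element of $C^\circ \cap I$ in the decreasing order. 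That element is exactly the minimum-weight element of $C$, so $f_{j^*} = \hat{e}$.
\end{itemize}

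Binary search for $j^*$ over $[1, m]$ then uses $O(\log |I|)$ independence queries.

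The main obstacle is the degenerate case $e \in C^\circ$ with $C^\circ = \{e\}$, i.e.\ $e$ is a loop. I would dispatch it with one extra query checking $\{e\} \in \mI$.

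A second subtlety is that the definition of $C$ in the statement ranges over all $e'$, which includes $e$ itself; I would restrict attention to $e' \in I$, as the algorithm does. Ties in $w$ cause no problem: any minimizer is acceptable, and the argmin under the fixed tie-breaking order is well defined.
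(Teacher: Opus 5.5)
Your proof is correct, and the search procedure is the paper's: sort by decreasing weight and binary-search for the first dependent prefix. The differences are in where $e$ is placed and in how correctness is argued.

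The paper sorts all of $I \cup \{e\}$, placing $e$ according to $w(e)$. You put $e$ into every prefix, which amounts to placing $e$ first. Your version therefore returns the minimum over $C \cap I$, which is exactly what \inspect{} uses. Under the statement's literal definition, however, $e \in C$ because $I + e - e = I \in \mI$, so $C = C^\circ$. The paper's ordering returns the minimum over all of $C$ automatically, including $\hat e = e$ when $e$ is a loop, so that case is not vacuous. You recover the literal statement with no extra oracle queries by comparing $w(e)$ with $w(f_{j^*})$ at the end.

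For correctness, the paper shows $e_i \in C$ via a span/rank argument. It gets minimality from heredity, using $\{e_1, \dots, e_{j-1}\} \subseteq I + e - e_j$. You instead use only the uniqueness of the circuit $C^\circ$ from Lemma~\ref{lem:=1circuit}: a subset of $I \cup \{e\}$ is dependent exactly when it contains $C^\circ$. Hence the first dependent prefix ends at the lightest element of $C^\circ \cap I$.

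Your argument is shorter and rests only on a lemma the paper proves. The paper's has the advantage of handling the literal $C$ and the loop case without special-casing.
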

\begin{proof}
  Let $e_1, \dots, e_{|I| + 1}$ denote an ordering of $I \cup \{e\}$
  such that $w(e_1) \ge w(e_2) \dots \ge w(e_{|I| + 1})$. 
  Let $i$ denote the smallest index such that
  $\{e_1, \dots, e_{i}\} \notin \mI$. Such an index exists
  because $\{e_1, \dots, e_{|I| + 1}\} = I \cup \{e\} \notin \mI$. 
  We claim that $\hat{e} = e_i$.
  We note that the element $e_i$ can be found
  using a binary search over $[|I| + 1]$ because
  for any $j$,
  if $\{e_1, \dots, e_j\} \notin \mI$, then
  $\{e_1, \dots, e_{j+1}\} \notin \mI$ as well.

  To prove this, we first claim that $e_i \in C$. To
  see why this holds, we first observe that
  since $\{e_1, \dots, e_{i-1}\}$ is independent but
  $\{e_1, \dots, e_i\}$ is not, we have
  $e_i \in \spn(\{e_1, \dots, e_{i-1}\}) \subseteq \spn(I + e - e_i)$.
  Therefore, since $e_j \in \spn(I + e - e_i)$ for all $j\ne i$, we have
  $I + e \subseteq \spn(I + e - e_i)$, which implies
  \begin{align*}
    \rank(I + e - e_i) \ge \rank(I + e) \ge \rank(I) = |I|
    = |I+ e - e_i|,
  \end{align*}
  which implies $I + e - e_i \in \mI$ as claimed.

  We need to show that for any $e' \in C$, we have $w(e_i) \le w(e')$. 
  Assume for contradiction that $w(e') < w(e_i)$. It follows that
  $e' = e_j$ for some $j > i$. By definition of $C$,
  we must have
  $I + e - e_j \in \mI$, which implies
  $\{e_1, \dots, e_{j-1}\} \in\mI$. Since $i < j$, this further implies
  $\{e_1, \dots, e_i\} \in  \mI$, which is not possible by definition of $i$.
\end{proof}

\end{document}